\newcommand{\eda}{\xi} %empirical distribution over actions previously denoted by z
\newcommand{\reg}{\mcal{R}} %CPT regret
\newcommand{\hyp}{\mcal{H}} % hyperplane
\newcommand{\conv}{{co}} % closed convex hull
\newcommand{\strat}{\mcal{S}} % strategy for repeated game
\newcommand{\av}{\mcal{A}} % average CPT value
\newcommand{\cob}{\color{black}}
\newcommand{\black}{\color{black}}
\title{Learning in Games with Cumulative Prospect Theoretic Preferences}
\author{Soham R.\ Phade%
\thanks{Corresponding author}
\ and Venkat Anantharam% <-this % stops a space
\thanks{The authors are with the Department of Electrical Engineering and Computer Sciences, University of California, Berkeley, Berkeley, CA 94720.
        {\tt\small soham\_phade@berkeley.edu, ananth@eecs.berkeley.edu}}%
\thanks{Research supported by the NSF Science and Technology Center grant CCF-
0939370: "Science of Information", the NSF grants ECCS-1343398, CNS-1527846, CIF-1618145 and
CCF-1901004, and the William and Flora Hewlett Foundation supported Center for Long Term Cybersecurity at Berkeley.}% <-this % stops a space
}
\date{}
\begin{document}

\maketitle

%\tableofcontents
%\listoffigures
%\listoftables
%!TEX root = ../main.tex

\begin{abstract}
	We consider repeated games 
	where 
	%\red 
	the 
	%\black
	players 
	behave according to cumulative prospect theory (CPT). 
	We show that, 
	%\red 
	when 
the
players have calibrated strategies and behave according to CPT,
	%a natural analog 
	the natural analog 
	%\black
	of the notion of correlated equilibrium in the CPT case,
	as defined by Keskin, 
	is not enough to 
%guarantee the convergence of 
%\bb
capture all subsequential limits of
%\cob
the empirical distribution of action play.
We define the notion of 
a mediated CPT correlated equilibrium 
via an extension 
%of the game 
%\color{red} 
of the stage game 
%\color{black}
to a so-called mediated game. We then show, along the lines of 
%\red 
the result of Foster and Vohra about convergence to the set of correlated equilibria when the players behave according to expected utility theory that, in the CPT case, 
%\black 
under calibrated learning the empirical distribution 
%\bb
of action play 
%\cob
converges to the set of all mediated CPT correlated equilibria.
%\footnote{\red Note: introduced a new paragraph \black}

We also show that, in general, the set of CPT correlated equilibria is not approachable in the Blackwell approachability sense. 
%\com
We observe that 
%mediated games are 
%\color{red} 
a mediated game is 
%\color{black}
a specific type of a 
%game with communication, 
%\red 
{\em game with communication,} 
%\black
as introduced by Myerson, 
and as a consequence we get that the revelation principle does not hold under CPT.
%\cob
%\keywords{cumulative prospect theory; game theory; repeated games; calibrated learning; correlated equilibrium; no-regret learning}
\end{abstract}

\keywords{cumulative prospect theory; game theory; repeated games; calibrated learning; correlated equilibrium; no-regret learning}

%!TEX root = ../main.tex

\section{Introduction}
\label{sec: intro}

In non-cooperative game theory, a finite $n$-person game models a social system comprised of several \emph{decision makers} (or \emph{players}) with possibly different objectives, interacting 
%in a certain type of environment. 
%\bb
in some environment.
%\magenta Comment: In the preceding sentence what is 
%a ``finite $n$-person game" as opposed to just an
%``$n$-person game"? This terminology also shows up
%later, so if it needs to be fixed, fix all occurrences of it.\black
%\cob
%The notion of equilibrium is 
%\red 
Notions of equilibrium are 
%\black
central to game theory.
%\cor 
The neoclassical economics viewpoint of game theory 
%\cob 
attempts to explain 
%\red 
an 
%\black 
equilibrium as a self-evident outcome of the optimal behavior of the participating players, assuming them to be rational.
Two of the most well known notions of equilibrium for a finite $n$-person game are Nash equilibrium \citep{nash1951non} and correlated equilibrium \citep{aumann1974subjectivity}.
(See \citet{kreps1990game} for an excellent account of the strengths and weaknesses of these notions.)
An alternate approach, called 
%learning in games 
%\bb
{\em learning in games},
%\cob
is concerned with 
%dynamic considerations such as repeated games 
%\bb
justifying equilibrium behavior via a dynamic process
%\cob
where the players learn from the past play 
%and observations from the environment and adapt
and observations from the 
%\red 
environment, and 
%\black 
adapt
accordingly~\citep{,aumann1995repeated,fudenberg1998learning,young2004strategic}. 
In this paper, we will be concerned with this alternate approach.

Since decision makers are an integral part of any social system, their behavioral properties form an important aspect in modeling games. 
The study of game theory so far has been mainly based on the assumption that the behavior of the players towards their \emph{lottery} preferences 
(see Section~\ref{sec: prelim} for the definition of a lottery) 
can be modeled by \citet{von1945theory} \emph{expected utility theory} (EUT). 
EUT has a nice normative appeal to it, in particular when it comes to the \emph{independence axiom}, 
which basically says that if lottery $L_1$ is preferred over lottery $L_2$, and  $L$ is some other lottery, then, for $0 \leq \alpha \leq 1$, 
the combined lottery $\alpha L_1 + (1-\alpha)L$ is preferred over the combined lottery $\alpha L_2 + (1 - \alpha)L$. 
Even though this seems very intuitive, a systematic deviation from such behavior has been observed in multiple empirical studies 
%\bb
(for example, the \citet{allais1953extension} paradox). 
%\cob
This gave rise to the study of 
%non-expected utility theory that does away with 
%\red 
alternatives to EUT that do away with 
%\black
the independence axiom. 
\emph{Cumulative prospect theory} (CPT), as formulated by \citet{tversky1992advances}, is one such theory, 
which accommodates many of the empirically observed behavioral features without losing much tractability \citep{wakker2010prospect}. 
It is also a generalization of 
%the expected utility theory. 
%\red 
EUT. 
%\black
%in the sense that EUT is a special case of CPT.

It becomes even more important to consider non-EUT behavior in the theory of learning in games. For example, in a \emph{repeated game}, \citet{hart2005adaptive} argues that players tend to use simple procedures like \emph{regret} minimization. 
A player $i$ is said to have 
%a small regret
%\bb
no regret\footnote{also known as the internal regret or the conditional regret.}
if, for each pair of her actions $a_i,\tilde a_i$, 
she does not regret not having played action $\tilde a_i$ whenever she played action $a_i$. 
Such regrets can simply be computed as the difference in the average payoffs 
received by the player from playing action $\tilde a_i$ instead of action $a_i$, assuming the opponents stick to their actions.
While evaluating such regrets in the real world, however, 
%the players are prone to systematic deviations. 
%\bb
players who are modeled as evaluating lotteries according to CPT preferences
are likely to exhibit different kinds of learning behavior than that 
exhibited by EUT players.
%\cob
%The proposed model based on cumulative prospect theory is an attempt to handle these systematic deviations, arising due to several behavioral features exhibited by the participants, by allowing flexibility over the expected utility model through the \emph{probability weighting functions}. 
%\bb
The proposed model in this paper is an attempt to handle these systematic 
deviations in learning, anticipated from the
empirically observed behavioral features exhibited by human agents, as 
captured by 
%cumulative prospect theory.
%\red 
CPT. 
%\black
%\cob
We pose the following question: \emph{How do the predictions of the theory of learning in games change if the players behave according to CPT?}

The strategies of the players are said to be in a Nash equilibrium if no player is tempted to deviate from her strategy provided the strategies of the others remain unchanged. 
Suppose now that, before the game is played, there is a mediator who sends each player a private signal to play a certain action. Each player may then choose her action depending on this signal. 
A correlated equilibrium of the original game is obtained by taking the joint distribution over action profiles of all the players corresponding to a Nash equilibrium of the game with a mediator \citep{aumann1974subjectivity}. 
\citet{crawford1990equilibrium} studies games where players do not adhere to the independence axiom, and defines an analog for the Nash equilibrium. 
\citet{keskin2016equilibrium} defines analogs for both the notions of equilibrium, Nash and correlated, 
%based on CPT. 
%\red 
when the players have CPT preferences. 
%\black
We call them \emph{CPT Nash equilibrium} and \emph{CPT correlated equilibrium} respectively. 
In Section~\ref{sec: prelim}, we give a brief review of 
%cumulative prospect theory 
%\red 
CPT 
%\black
and Keskin's definitions for these equilibrium notions.
In the absence of the independence axiom, many of the linearities present in the model under EUT are lost. For example, the set of all correlated equilibria 
%\red 
for EUT players 
%\black
is a convex polytope \citep{aumann1987correlated}; however, the set of all CPT correlated equilibria need not be convex \citep{keskin2016equilibrium}. In fact, it can even be disconnected \citep{phade2019geometry}.

For a repeated game, \citet{foster1998asymptotic} describe a procedure based on \emph{calibrated learning} that guarantees the convergence of the \emph{empirical distribution} of action play to the set of correlated equilibria, when players behave according to EUT.
In Section~\ref{sec: calibrated_learning}, we formulate an analog for their procedure when players behave according to CPT. 
In Example~\ref{ex: nonconvergence_calib}, we describe a game for which the set of all CPT correlated equilibria is non-convex and we show that the empirical distribution 
%of play 
%\bb
of action play
%\cob
does not converge to this set.

%A classical relaxation in the game theory literature is to look at the convex hull of the set under consideration \citep{mannor2009online,mannor2014approachability}.
%{\color{red} The preceding sentence is too vague. \color{black}}
%We define 
%\bb
We then define
%\cob
an extension of the set of CPT correlated equilibria
%in section \ref{sec: calib_mediated}
and establish the convergence of the empirical distribution of action play to this extended set.
% in section \ref{sec: calib_mediated}.
It turns out that this extension has a nice 
game-theoretic
%\magenta Comment: Please look for other occurrences of ``game theoretic" and convert them to ``game-theoretic". \black 
interpretation, obtained by allowing the mediator 
to send any private 
%payoff-irrelevant 
signal (instead of restricting her to send a signal corresponding to some action). 
We formally define this setup in Section~\ref{sec: mediated_eq_CPT}, and call it a \emph{mediated game}.
\citet{myerson1986multistage} 
%considers 
%\bb
has considered
%\cob
a further generalization in which each player first reports her \emph{type} from a finite set $T_i$. 
The mediator collects the reports from all the players and then sends each one of them a private signal from a finite set $B_i$. 
The mediator is characterized by a rule $\psi : \prod_i T_i \to \Delta(\prod_i B_i)$ that maps each type profile to a probability distribution on the set of signal profiles from which it samples the private signals to be sent.
Based on her received signal, each player chooses her action. 
These are called \emph{games with communication}. 
The type sets $(T_i)_{i=1}^n$, the signal sets $(B_i)_{i=1}^n$, and the mediator rule $\psi$ together are said to comprise a \emph{communication system}. 
Under EUT, the set of all correlated equilibria of a game is characterized as the union, over all possible communication systems, 
of the sets of joint distributions on the action profiles of all players arising from all the Nash equilibria for the corresponding game with communication 
(for a detailed exposition see \citep{myerson2013game}).
%{\color{red} Please check the accuracy of the preceding sentence. need to rephrase to indicate that the set of joint distributions coming from nash eq of communication game.\color{black}} 
This is sometimes referred to as the \emph{Bayes-Nash revelation principle}, or simply the \emph{revelation principle}. Since a mediated game is a specific type of game with communication, characterized by players not reporting their type, 
%\red 
or equivalently by the mediator ignoring the types reported
by the players, 
%\black
our analysis shows that the revelation principle does not hold under CPT.
%\magenta Comment: The preceding discussion makes no sense. You seem to have defined a game with communication as one in which each player first reports his or her type. You then say that a mediated game is a special case of this because the players do not report their type. This makes no sense. \black

Calibrated learning is one way of studying learning in games. Some other approaches originate from Blackwell's \emph{approachability} theory and the regret-based framework of online learning (\cite{hart2000simple,fudenberg1995consistency}). 
In fact, \citet{foster1998asymptotic} establish the existence of calibrated learning schemes using such a regret-based framework and Blackwell's approachability theory. 
See \citet{perchet2009calibration} for a comparison between these approaches, and see also \citet{cesa2006prediction}. 
\citet{hannan1957approximation} introduced the concept of no-regret strategies in the context of repeated matrix games. 
No-regret learning in games is equivalent to the convergence of the empirical distribution 
%of play 
%\bb
of action play
%\cob
to the set of correlated equilibria \citep{hart2000simple,fudenberg1995consistency}.
We establish an analog of this result when players behave according to CPT.
We then ask if no-regret learning is possible under CPT. 

Blackwell's approachability theorem prescribes a strategy to steer the average payoff vector of a player 
%\color{red} 
in a game with vector payoffs 
%\color{black} 
towards a given target set, irrespective of the strategies of the other players. 
The theorem also gives a necessary and sufficient condition for the existence of such a strategy provided the target set is convex and the game environment remains fixed. 
Here, by game environment, we mean the rule by which the payoff vectors depend on the players' actions. 
Under EUT, \citet{hart2000simple} take these payoff vectors to be the regrets associated to a player and establish no-regret learning by showing that the 
%negative orthant 
%\red 
nonpositive orthant 
%\black
in the space of payoff vectors is approachable. 
Under CPT, although the target set is convex, the environment is not fixed. 
It depends on the empirical distribution of play at each step. A similar problem with dynamically evolving environment is considered in \citet{kalathil2017approachability}, 
where they get around this problem by considering a Stackelberg setting; one player (leader) plays an action first, then, 
after observing this action, the other player (follower) plays her action. In the absence of a Stackelberg setting, 
%as is in our case, 
%\red 
as in our case, 
%\black
we do not know of any result that characterizes approachability under dynamic environments. 
However, as far as games with CPT preferences are concerned, 
we answer this question by giving an example of a game for which a no-regret learning strategy does not exist (Example~\ref{ex: nonapproach_CPT}).

%!TEX root = ../main.tex

\section{Preliminaries}
\label{sec: prelim}

%\cor
%Please change $N = \{1,\dots,n\}$ to $[n] = \{1,\dots,n\}$,
%change $i \in N$ to $i \in [n]$ and change $N$ to $[n]$ in all
%places where $N$ refers to the set of players.
%\cob

We denote a finite $n$-person normal form game by 
$\Gamma := ([n],(A_i)_{i \in [n]},(x_i)_{i \in [n]})$, 
where $[n] := \{1,\dots,n\}$ is the set of \emph{players}, 
$A_i$ is the finite \emph{action} set of player $i$, 
and $x_i: A_1 \times \dots \times A_n \to \bbR$ is the \emph{payoff} function for player $i$. 
Let $A := \prod_{i \in [n]} A_i$ denote the set 
of all \emph{action profiles} $a := (a_i)_{i \in [n]}$, 
where $a_i \in A_i$. 
Let $A_{-i} := \prod_{j \in [n]\back i} A_j$ denote 
the set of all action profiles $a_{-i} \in A_{-i}$ of all players except player $i$. 
Let $x_i(a)$ denote the payoff to player $i$ when action profile $a$ is played, 
and let $x_i(\tilde a_i,a_{-i})$ denote 
the payoff to player $i$ when she chooses action $\tilde a_i \in A_i$ while the others stick to $a_{-i}$. 
For any finite set $S$, let $\Delta(S)$ denote the standard simplex of all probability distributions on the set $S$, i.e.,
\[
	\Delta(S) := \l\{\l(p(s),s \in S\r) \Big | p(s) \geq 0 \; \forall s \in S, \sum_{s \in S} p(s) = 1\r\},
\]
with the usual topology. 
Let $e_s$ denote the vector in $\Delta(S)$  with its $s$-th component equal to $1$ and the rest 
%\red 
equal to 
%\black 
$0$.
Let $\Delta^*(A)$ denote the set of all joint probability distributions that are of product form, i.e.,
%\[
%	 \Delta^*(A) := \{\mu \in \Delta(A): \mu(a) = \mu_1(a_1) \times \dots \times \mu_n(a_n), \forall \; a \in A\},
%\]
%\red
\[
	 \Delta^*(A) := \{\mu \in \Delta(A): \mu(a) = 
	 \mu_1(a_1) \mu_2(a_2) \dots \mu_n(a_n), \forall \; a \in A\},
\]
%\black
where $\mu_i(a_i)$ denotes the marginal probability distribution on $a_i$ induced by $\mu$, 
%Thus, 
%\red 
namely, 
%\black
for a joint distribution $\mu \in \Delta(A)$, we have 
\[
	\mu_i(a_i) = \sum_{a_{-i} \in A_{-i}} \mu(a_i,a_{-i}).
\]
For $a_i$ such that $\mu_i(a_i) > 0$, let
\[
	\mu_{-i}(a_{-i}|a_i) := \frac{\mu(a_i,a_{-i})}{\mu_i(a_i)}.
\]
%be the conditional distribution on $A_{-i}$.
%\color{red} VA: I deleted a phrase that was there after the preceding equation. \color{black}

%%%%%%%%%%%%%%%%%%%%%%%%%%%%%%%%%%%%%%%%%%%%%%%
%%%%%%%%%%%%%%%%%%%%%%%%%%%%%%%%%%%%%%%%%%%%%%%

We now describe the setup for cumulative prospect theory (CPT) (for more details see \citep{wakker2010prospect}). 
Each person is associated with a \emph{reference point} $r \in \bbR$, a corresponding \emph{value function} $v^r : \bbR \to \bbR$, and two \emph{probability weighting functions} $w^\pm:[0,1] \to [0,1]$, $w^+$ for gains and $w^-$ for losses. The function $v^r(x)$ satisfies:
\begin{inparaenum}[(i)]
	\item it is continuous in $x$;
	\item $v^r(r) = 0$;
    \item it is strictly increasing in $x$.
\end{inparaenum}
The value function is generally assumed to be convex in the losses frame ($x < r$) and concave in the gains frame ($x \geq r$), and to be steeper in the losses frame than in the gains frame in the sense that $v^r(r) - v^r(r -z) \ge v^r(r+z) - v^r(r)$ for all $z \ge 0$. 
However, these assumptions are not needed for the results in this paper to hold.
%{\color{red} Please check if the preceding sentence is accurate. \color{black}}
The probability weighting functions $w^\pm: [0,1] \to [0,1]$ satisfy:
\begin{inparaenum}[(i)]
	\item they are continuous;
	\item they are strictly increasing;
	\item $w^\pm(0) = 0$ and $w^\pm(1) = 1$.
\end{inparaenum}

Suppose a person faces a \emph{lottery} (or \emph{prospect}) $L := \{(p_j,z_j)\}_{1 \leq j \leq t}$,
where $z_j \in \bbR, 1 \leq j \leq t$, denotes an \emph{outcome} and $p_j, 1 \leq j \leq t$, is the probability with which outcome $z_j$ occurs. We assume 
%the lottery to be exhaustive, 
%\red 
that the lottery is {\em exhaustive}, 
%\black
i.e. $\sum_{j=1}^t p_j = 1$. (Note that we are allowed to have $p_j =0$ for some values of $j$ and 
%\color{red} 
we can have 
%\color{black} 
$z_k = z_l$ even when $k \neq l$.) 
Let $z := (z_j)_{1 \leq j \leq t}$ and $p := (p_j)_{1 \leq j \leq t}$. We denote $L$ as $(p,z)$ and refer to the vector $z$ as an \emph{outcome profile}.

Let $\alpha := (\alpha_1,\dots,\alpha_t)$ be a permutation of $(1,\dots,t)$ such that
\begin{equation}\label{eq: order}
	z_{\alpha_1} \geq z_{\alpha_2} \geq \dots \geq z_{\alpha_t}.
\end{equation}
Let $0 \leq j_r \leq t$ be such that $z_{\alpha_j} \geq r$ for $1 \leq j \leq j_r$ and $z_{\alpha_j} < r$ for $j_r < j \leq t$. (Here $j_r = 0$ when $z_{\alpha_j} < r$ for all $1 \leq j \leq t$.) The \emph{CPT value} $V(L)$ of the prospect $L$ is evaluated using the value function $v^r(\cdot)$ and the probability weighting functions $w^{\pm}(\cdot)$ as follows:
\begin{equation}\label{eq: CPT_value_discrete}
	V(L) := \sum_{j=1}^{j_{r}} \pi_j^+(p,\alpha) v^r(z_{\alpha_j}) + \sum_{j=j_r+1}^t \pi_j^-(p,\alpha) v^r(z_{\alpha_j}),
\end{equation}
where $\pi^+_j(p,\alpha),1 \leq j \leq j_{r}, \pi^-_j(p,\alpha), j_r < j \leq t$, are \emph{decision weights} defined via:
\begin{align*}
	\pi^+_{1}(p,\alpha) &:= w^+(p_{\alpha_1}),\\ %&\text{ if } &j_r > 0,\\
	\pi_j^+(p,\alpha) &:= w^+(p_{\alpha_1} + \dots + p_{\alpha_{j}}) - w^+(p_{\alpha_1} + \dots + p_{\alpha_{j-1}}) &\text{ for } &1 < j \leq t, \\
	 \pi_j^-(p,\alpha) &:= w^-(p_{\alpha_t} + \dots + p_{\alpha_j}) - w^-(p_{\alpha_t} + \dots + p_{\alpha_{j+1}}) &\text{ for } &1 \leq j < t,\\
	 \pi^-_{t}(p,\alpha) &:= w^-(p_{\alpha_t}). %&\text{ if } &j_r < t.
\end{align*}
Although the expression on the right in equation~(\ref{eq: CPT_value_discrete}) depends on the permutation $\alpha$, one can check that the formula evaluates to the same value $V(L)$ as long as the permutation $\alpha$ satisfies (\ref{eq: order}). 
%\blue
The CPT value in equation~(\ref{eq: CPT_value_discrete}) can equivalently be written as:
\begin{align}\label{eq: CPT_value_cumulative}
	V(L) &= \sum_{j = 1}^{j_r - 1} w^+\l(\sum_{i = 1}^j p_{\alpha_i}\r)\l[v^r(z_{\alpha_j}) - v^r(z_{\alpha_{j+1}})\r] \nonumber\\
 	&+ w^+\l(\sum_{i = 1}^{j_r} p_{\alpha_i}\r)v^r\l(z_{\alpha_{j_r}}\r) + w^-\l(\sum_{i = j_r + 1}^{t} p_{\alpha_i}\r)v^r(z_{\alpha_{j_r+1}}) \nonumber \\
	&+ \sum_{j = j_r + 1}^{t-1} w^-\l(\sum_{i = j+1}^t p_{\alpha_i}\r)\l[v^r(z_{\alpha_{j+1}}) - v^r(z_{\alpha_{j}})\r].
 \end{align}
% \black

A person is said to have CPT preferences if, given a choice between prospect $L_1$ and prospect $L_2$, she chooses the one with higher CPT value. 

%CPT satisfies \emph{strict stochastic dominance} \citep{chateauneuf1999axiomatization}: shifting positive probability mass from an outcome to a strictly preferred outcome leads to a strictly preferred prospect. For example, the prospect $L_1 = \{(0.6,40);(0.4,20)\}$ can be obtained from the prospect $L_2 = \{(0.5,40);(0.5,20)\}$ by shifting a probability mass of $0.1$ from outcome $20$ to a strictly better outcome $40$. The strict stochastic dominance condition says that $V^r(L_1) > V^r(L_2)$ (see equation~(\ref{eq: CPT_value_cumulative})).

%Also, CPT satisfies \emph{strict monotonicity} \citep{chateauneuf1999axiomatization}: any prospect becomes strictly better as soon as one of its outcomes is strictly improved. For example, if $L_1 = \{(0.6,40);(0.4,-10)\}$ and $L_2 = \{(0.6,40);(0.4,-20)\}$, then $V^r(L_1) > V^r(L_2)$ (see equation~(\ref{eq: CPT_value_discrete})).

%%%%%%%%%%%%%%%%%%%%%%%%%%%%%%%%%%%%%%%%%%%%%%%
%%%%%%%%%%%%%%%%%%%%%%%%%%%%%%%%%%%%%%%%%%%%%%%

We now describe the notion of correlated equilibrium incorporating CPT preferences, as defined by \citet{keskin2016equilibrium}\footnote{Keskin defines CPT equilibrium assuming
%$w^+(\cdot) = w^-(\cdot)$. 
%\color{red} 
that $w_i^+(\cdot) = w_i^-(\cdot)$ for each player $i$.
%\color{black} 
However, the definition can be easily extended to our general setting and the proof of existence goes through without difficulty.}. 
%\color{red} vacomment: The footnote has been moved from where it was.
%\color{black} 
For each player $i$, let $r_i,v_i^{r_i}(\cdot)$ and $w_i^\pm(\cdot)$ be the reference point, the value function, and the probability weighting functions, respectively, that player $i$
 uses to evaluate the CPT value $V_i(L)$ of a lottery $L$.
 %\red 
 We call these the {\em CPT features} of the player $i$. 
 %\black
 %Let $\{v_i^r(\cdot),r \in \bbR\}$ be a family of value functions, one for each reference point, 
%{\color{red} Why do you need value functions for each reference point? Each paper needs only a value function for her fixed reference point. \color{black}} and $w_i^\pm(\cdot)$ be the probability weighting functions for each player $i \in N$. 
%We assume that $v_i^r(x)$ is continuous in $x$ and $r$ for each $i$.
%For every player $i \in N$, let the reference point be determined by a continuous function $r_i:\Delta^{|S|-1} \to \bbR$. 
%Let denote  evaluated by player $i$, using the reference point $r_i$, the value function $v_i^r(\cdot)$ and the probability weighting functions $w_i^\pm(\cdot)$.

Suppose there is a \emph{mediator} characterized by a joint distribution $\mu \in \Delta(A)$ who draws 
an action profile $a = (a_i)_{i \in [n]}$ according to the distribution $\mu$ and sends signal $a_i$ to each player $i$. 
Player $i$ is signaled only her action $a_i$ and not the entire action profile $a = (a_i)_{i \in [n]}$. 
We assume that all the players know the distribution $\mu$. 
If player $i$ observes a signal to play $a_i$, and if she decides to deviate to a strategy $\tilde a_i \in A_i$, then she will face the lottery
%\[
%	L_i(\mu,a_i,\tilde a_i) := \l\{ \l(\mu_{-i}(a_{-i}|a_i), x_i(\tilde a_i,a_{-i}) \r)\r\}_{a_{-i} \in A_{-i}}.
%\]
%\bb
\[
	L_i(\mu_{-i}(a_{-i}|a_i),\tilde a_i) := \l\{ \l(\mu_{-i}(a_{-i}|a_i), x_i(\tilde a_i,a_{-i}) \r)\r\}_{a_{-i} \in A_{-i}}.
\]
%\cob
%\cor (I have changed the notational convention on the LHS of the equation, 
%for consistency with the later use of this quantity.)
%\cob

%%%%%%%%%%%%%%%%%%%%%%%%%%%%%%%%%%%%%%%%%%%%%%%

\begin{definition}[\cite{keskin2016equilibrium}]
\label{def: CPT_Nash_eq}

	A joint probability distribution 
	%$\mu \in \Delta(S)$ 
	%\color{red} 
	$\mu \in \Delta(A)$ 
	%\color{black}
	is said to be a \emph{CPT correlated equilibrium} of $\Gamma$ if it satisfies the following inequalities for all $i$ and for all $a_i,\tilde a_i \in A_i$ such that $\mu_i(a_i) > 0$:
	%\begin{equation}\label{eq: CPT_corr_ineq}
	%	V_i(L_i(\mu,a_i,a_i)) \geq V_i(L_i(\mu,a_i,\tilde a_i)). 
	%\end{equation}
    %\bb
    \begin{equation}\label{eq: CPT_corr_ineq}
		V_i(L_i(\mu_{-i}(a_{-i}|a_i),a_i)) \geq V_i(L_i(\mu_{-i}(a_{-i}|a_i),\tilde a_i)). 
	\end{equation}
    %\cob
    %\cor (Changed the notation for consistency, see above.)
    %\cob
\end{definition}

%%%%%%%%%%%%%%%%%%%%%%%%%%%%%%%%%%%%%%%%%%%%%%%

We denote the set of all the CPT correlated equilibria of a game $\Gamma$ by $C(\Gamma)$. Note that $C(\Gamma)$ also depends on the %reference points, the value functions and the probability weighting functions of all the players. 
%\red 
CPT features of each of the players. 
%\black
However, we suppress this dependence from the notation.

We now describe the notion of CPT Nash equilibrium as defined by \citet{keskin2016equilibrium}. For a mixed strategy $\mu \in \Delta^*(A)$, 
%\cor 
if each player $j$ decides to play $a_j$, drawn from the distribution $\mu_j$, then player $i$ will face the lottery 
%\cob
\[
	L_i(\mu_{-i},a_i) := \l\{\l(\mu_{-i}(a_{-i}), x_i(a_i,a_{-i}) \r)\r\}_{a_{-i} \in A_{-i}},
\]
where $\mu_{-i}(a_{-i}) := \prod_{j \neq i} \mu_j(a_j)$ plays the role of $\mu_{-i}(a_{-i}|a_i)$, which does not depend on $a_i$.
Suppose player $i$ decides to deviate and play a mixed strategy $\tilde \mu_i$ while the rest of the players continue to play $\mu_{-i}$. Then define
the average CPT value for player $i$ by
\[
	\av_i(\tilde \mu_i,\mu_{-i}) := \sum_{a_i \in A_i} \tilde \mu_i(a_i)V_i(L_i(\mu_{-i},a_i)).
\]
The best response set of player $i$ to a mixed strategy $\mu \in \Delta^*(A)$ is defined as
%\red
\begin{eqnarray}    \label{eq:bestresponseset}
	BR_i(\mu) &:=& \l\{\mu^*_i \in \Delta(A_i) | \forall \tilde \mu_i \in \Delta(A_i), \av_i(\mu^*_i,\mu_{-i}) \geq \av_i(\tilde \mu_i,\mu_{-i}) \r\} \nonumber \\
	&=& \l\{\mu^*_i \in \Delta(A_i) | \mbox{supp}(\mu_i^*) \subset \mbox{arg} \max_{a_i \in A_i} V_i(L_i(\mu_{-i},a_i)) \r\}.
\end{eqnarray}
Here $\supp(\cdot)$ denotes the support of the distribution within the 
%brackets.
%\red 
parentheses. 
%\black
%\black

%%%%%%%%%%%%%%%%%%%%%%%%%%%%%%%%%%%%%%%%%%%%%%%

\begin{definition}[\cite{keskin2016equilibrium}]
\label{def: CPT_Nash_eq_real}

	A mixed strategy $\mu^* \in \Delta^*(A)$ is a 
	%\red 
	{\em CPT Nash equilibrium} 
	%\red 
	of $\Gamma$ 
	%\black
	%\black 
	iff
	\[
		\mu^*_i \in BR_i(\mu^*) \text{ for all } i.
	\]
    %\cor
    \end{definition}
    \citet{keskin2016equilibrium} shows that for every game $\Gamma$ there exists a CPT Nash equilibrium. Further, he also shows that the set of all CPT Nash equilibria of a game $\Gamma$ is equal to $C(\Gamma) \cap \Delta^*(A)$. 
    %Thus, as a consequence, 
    %\red 
    As a consequence 
    %\black
    we have that the set $C(\Gamma)$ is nonempty. 
	A strategy $\mu \in \Delta^*(A)$ is called a \emph{pure} strategy if the support of $\mu_i$ is singleton for each $i$. We call $\mu^*$ a pure CPT Nash equilibrium if $\mu^*$ is a pure strategy. Note that every pure CPT Nash equilibrium is a pure Nash equilibrium for the EUT game where each player $i$ computes its value in the action profile $(a_i)_{i \in [n]}$ as $v_i^{r_i} (x_i(a_i,a_{-i}))$.
    %\cob
    %{\color{red} Is the set of correlated CPT equilibria intersected with the set of joint distributions with independent marginals precisely the set of CPT Nash equilibria? \color{black}}

%%%%%%%%%%%%%%%%%%%%%%%%%%%%%%%%%%%%%%%%%%%%%%%

%!TEX root = ../main.tex

\section{Calibrated learning in games}
\label{sec: calibrated_learning}

%We now look at the notion of equilibrium under CPT preferences from evolutionary viewpoint through repeated games model. 
Let 
%$\Gamma$ 
%\red 
$\Gamma = ([n],(A_i)_{i \in [n]},(x_i)_{i \in [n]})$ 
%\black
be a finite $n$-person game which is played repeatedly at each \emph{step} 
%(or \emph{round}), 
$t \geq 1$. 
The game $\Gamma$ is called the 
%\emph{one shot game} (or the \emph{stage game}) 
%\red 
\emph{stage game} 
%\black
of the repeated game. 
At every step $t$, each player $i$ draws an action $a_i^t \in A_i$ %from a distribution
%\red 
with the probability distribution 
%\black
$\sigma_i^t \in \Delta(A_i)$. 
We assume that the randomizations of the players are independent of each other and of the past randomizations. 
For example, if each player $i$ uses a uniform random variable $U_i^t$ to draw a sample from $\sigma_i^t$, then the random variables $\{U_i^t\}_{i \in [n], t \geq 1}$ are independent.
%\red 
Each player is assumed to know the action space of 
all the players in the 
stage game 
%\black
$\Gamma$, but does not know the payoff functions 
and the CPT parameters of the other players. 
%\black
We assume that, after playing her action $a_i^t$, each player observes the actions taken by all the other players 
and thus at any step $t$ all the players have access to the %\emph{history} 
%\red 
\emph{past history} 
%\black
of the play 
%\red at time $t$, \black
%\blue 
at step $t$, 
%\black
$H^{t-1} := (a^1,\dots,a^{t-1})$, 
where $a^t := (a_i^t)_{i \in [n]}$
 is the action profile played at step $t$. 
%We also assume that each player knows the 
%one step game 
%\red 
%stage game 
%\black
%$\Gamma$.
Let the strategy for player $i$ for the repeated game above be given by $\strat_i  := (\sigma_i^t, t \geq 1)$, 
where $\sigma_i^t : H^{t-1} \to \Delta(A_i)$, for each $t$.
% Let $\strat = (\strat_i)_{i \in N}$ denote of a profile of strategies for each player and let $P_\strat$ denote the probability distribution on the space of all action profile sequences $(a^1,a^2,\dots)$ generated from the strategies $\strat_i, i \in N$.

 We first describe the result of \citet{foster1997calibrated}. Suppose the players  follow the following natural strategy: At every step $t$, on the basis of 
 the past history of play, $H^{t-1}$, each player $i$ predicts a joint distribution $\mu_{-i}^t \in \Delta(A_{-i})$ on the action profile of all the other players. 
 This is player $i$'s \emph{assessment} of how her opponents might play 
 %in the next step. 
 %\red 
 at step $t$. 
 %\black
 %\red 
 The sequence of functions of past history giving rise
 to the assessment is called the {\em assessment scheme} 
 of the player. 
 %\black
 %\cor
 Depending on her assessment 
 %\red 
 at step $t$, 
 %\black
 player $i$ chooses a specific action among those that are most preferred by her
 %\red 
 in response to her assessment, 
 %\black
 called her \emph{best reaction}.\footnote{\citet{foster1997calibrated} refer to it as the best response. In order to avoid confusion with the best response set defined in section \ref{sec: prelim}, we prefer to use the term best reaction.}
 %\red 
 This is done using a fixed (time-invariant) function from 
 $\Delta(A_{-i})$ to $A_i$, which maps $\mu_{-i} \in \Delta(A_{-i})$
 to an action in $A_i$ that is in the best response set for
 $\mu_{-i}$; this function is called the 
 {\em best reaction map} of player $i$. 
 %\black
 \citet{foster1997calibrated} prove that
 \begin{inparaenum}[(i)]
 	\item if each player's assessments are \emph{calibrated} with respect to the sequence of action profiles of the other players and
 	\item if each player plays 
    %a best reaction
    %\bb
    the best reaction
    %\cob
 	 to her assessments,
 \end{inparaenum}  
 then the limit points of the empirical distribution of action play are correlated equilibria. 
 %\cob
 By \emph{action play} we mean the sequence of action profiles played by the players. We will give a formal definition of what is meant by calibration shortly. For the moment, 
 %roughly, 
% \color{red}
 roughly speaking,
% \color{black}
 calibration says that the empirical distributions conditioned on assessments converge to the assessments.
 %{\color{red} The preceding sentence doesn't sound right when compared with the
 %earlier way that ``calibration" was described. Need to rephrase this in some way. \color{black}}
 The best reaction of player $i$ to her assessment $\mu_{-i}$ of the actions of the other players, as considered by \citet{foster1997calibrated}, 
 is a specific action $a^*_i \in A_i$ that maximizes the expected payoff to player $i$ with respect to her 
 %assessment. i.e.,
%\red 
assessment, i.e., 
%\black
 \[
 	a^*_i \in \arg\max_{a_i \in A_i} \sum_{a_{-i} \in A_{-i}} \mu_{-i}(a_{-i}) x_i(a_i,a_{-i}).
 \]
 Thus the best reaction is an action in the best response set.
 %\cor
 Note that it is assumed that each player uses a fixed tie breaking rule if there is more than one action in the best response set.
 
 Suppose now that the players behave 
 %according to CPT.
 %\red 
 with CPT preferences. 
 %\black
 %\cob
 Given player $i$'s assessment $\mu_{-i}$ of the play of her opponents, she is faced with the following set of lotteries, one for each of her actions $a_i \in A_i$:
 \[
 	L_i(\mu_{-i}, a_i) := \l\{\mu_{-i}(a_{-i}),x_i\l(a_i,a_{-i}\r)\r\}_{ a_{-i} \in A_{-i} }.
 \]
 Out of these lotteries, the ones she prefers most are those with the maximum CPT value $V_i\l( L_i \l( \mu_{-i}, a_i \r)\r)$, evaluated using her 
 %reference point $r_i$, value function $v_i^{r_i}$, 
 %corresponding to her reference point
 %and her probability weighting functions $w_i^{\pm}$. 
 %\red 
 CPT features. 
 %\black
 The choice of the action she takes corresponding to her most preferred lottery (with any arbitrary but fixed tie breaking rule) will be called her best reaction,
 %\red 
 and the map from $\Delta(A_{-i})$ to $A_i$ giving the 
 best reaction as a function of the assessment will be called 
 the best reaction map of player $i$. 
 %\black
 Thus,
 %\red 
 once again, 
 %\black
 the best reaction is a specific action in the best response set. 
 %The main result of this section is that under these and a few more technical assumptions, the empirical distribution of action play converges to the set $B(\Gamma)$.

We now ask the following question: \emph{Suppose each player's assessments are calibrated with respect to the sequence of action profiles of the other players and she evaluates her best reaction in accordance with CPT preferences as explained above, then are the limit points of the empirical distribution of play contained in the set of CPT correlated equilibria?} Unfortunately, the answer is no %(Example~\ref{ex: nonconvergence_calib}). 
%\red 
(see Example~\ref{ex: nonconvergence_calib}). 
%\black
Before seeing why, let us give the promised formal definition of the notion of calibration.

%%%%%%%%%%%%%%%%%%%%%%%%%%%%%%%%%%%%%%%%%%%%%%%
%%%%%%%%%%%%%%%%%%%%%%%%%%%%%%%%%%%%%%%%%%%%%%%

Consider a sequence of outcomes $y^1,y^2,\dots$ generated by Nature, belonging to some finite set $S$. At each step $t$, the forecaster predicts a distribution $q^t \in \Delta(S)$.
%based on the history $(y^1,y^2,\dots,y^{t-1})$. 
 Let $N(q,t)$ denote the number of times the distribution $q$ is forecast up to step $t$, i.e. $N(q,t) := \sum_{\tau = 1}^t \1 \{q^\tau = q\}$,
 %\com
 where $\1\{\cdot\}$ is the indicator function that takes value $1$ if the expression inside $\{\cdot\}$ holds and $0$ otherwise. 
 %\cob
 %Let $\chi(y,t) = 1$ if Nature plays $y \in S$ on step $t$ and equal to zero otherwise.
 Let $\rho(q,y,t)$ be the fraction of the steps on which the forecaster predicts $q$ for which Nature plays $y \in S$, i.e.,
 %\red
\[
	\rho(q,y,t) := \begin{cases}
		0, &\text{ if } N(q,t) = 0,\\
		\frac{\sum\limits_{\tau = 1}^t \1 \{q^\tau = q\} \1 \{ y^\tau = y \} }{N(q,t)}, &\text{ otherwise}.
	\end{cases}
\]
%\black
The forecast is said to be calibrated with respect to the sequence of plays made by Nature if
%\red
\begin{equation}        \label{eq:calibrated}
	\lim_{t \to \infty} \sum_{q \in Q^t} | \rho(q,y,t) - q(y) | \frac{N(q,t)}{t} = 0, \text{ for all } y \in S,
\end{equation}
%\black
where the sum is over the set $Q^t$ of all distributions predicted by the forecaster up to step $t$.

%{\color{blue} example that shows calibrated learning need not converge to $C_{CPT}$ by extending Keskin's non convex example. Set up Bayes equilibrium and then show convergence result.}

%%%%%%%%%%%%%%%%%%%%%%%%%%%%%%%%%%%%%%%%%%%%%%%
%%%%%%%%%%%%%%%%%%%%%%%%%%%%%%%%%%%%%%%%%%%%%%%

%!TEX root = ../main.tex

\begin{example}
 \label{ex: nonconvergence_calib}
 We consider a modification of the $2$-player game proposed by \cite{keskin2016equilibrium}, 
 %where it is used 
 %\red 
 who uses it 
 %\black
 to demonstrate that the set of CPT correlated equilibria can be nonconvex.
Let the $2$-player game 
%\com
$\Gamma^*$ 
%\cob
be represented by the matrix in table \ref{tab: 2x4 game}, where $\beta = 1/w_1^+(0.5)$. 
For the probability weighting functions $w_i^{\pm}(\cdot)$, we employ the functions of the form suggested by \citet{prelec1998probability}, which, for $i = 1,2$, are given by
\[
	w_i^\pm(p) = \exp \{-(-\ln p)^{\gamma_i}\},
\]
where $\gamma_1 = 0.5$ and $\gamma_2 = 1$.
We thus have $w_1^+(0.5) = 0.435$ and $\beta = 2.299$.
Let the reference points be $r_1 = r_2 = 0$. 
Let $v_i^{r_i}(\cdot)$ be 
%identity 
%\bb
the identity function
%\cob
for $i = 1,2$.
%\cor
Notice that player $2$ is indifferent amongst her actions. 
%\cob

%%%%%%%%%%%%%%%%%%%%%%%%%%%%%%%%%%%%%%%%%%%%%%%

\begin{table}
\centering
\begin{tabular}{c | c | c | c | c |}
	 \multicolumn{1}{c}{}	& \multicolumn{1}{c}{I}   &  \multicolumn{1}{c}{II}  &  \multicolumn{1}{c}{III}  &  \multicolumn{1}{c}{IV} \\
	 \cline{2-5}
	 0 & $2 \beta,1$  &  $\beta + 1,1$  &  $0,1$  & $1,1$\\
	 \cline{2-5}
	 1	& $1.99,0$ & $1.99,0$ & $1.99,0$ & $1.99,0$\\
	 \cline{2-5}
	 \end{tabular} 
	 \caption{Payoff matrix for the game 
	 %\com
	 $\Gamma^*$
	 %\cob
	 in example~\ref{ex: nonconvergence_calib}. The rows and columns correspond to player $1$ and $2$'s actions respectively. 
     %\cor 
     The first entry in each cell corresponds to player $1$'s payoff and second to player $2$'s payoff.
     %\cob
     }\label{tab: 2x4 game}
\end{table}

%%%%%%%%%%%%%%%%%%%%%%%%%%%%%%%%%%%%%%%%%%%%%%%

%\com
Let $\mu_{odd} := (0.5,0,0.5,0)$ and $\mu_{even} := (0,0.5,0,0.5)$ 
be 
%\red 
probability 
%\black
distributions on player $2$'s actions. 
We can evaluate the CPT values of player $1$ for the following lotteries:
\begin{align*}
V_1(L_1(\mu_{odd},\text{0})) = 2\beta w_1^+(0.5) = 2,&& V_1(L_1(\mu_{odd},\text{1})) = 1.99,\\
V_1(L_1(\mu_{even},\text{0})) = 1 + \beta w_1^+(0.5) = 2,&& V_1(L_1(\mu_{even},\text{1})) = 1.99.
\end{align*}
Thus, player $1$'s best reaction to both these distributions $\mu_{odd}$ and $\mu_{even}$ is action 0. 
Since, player $2$ is indifferent amongst her actions, we get that the distributions $\mu^o$ and $\mu^e$, represented in tables \ref{tab: 2x4 game dist odd} and \ref{tab: 2x4 game dist even} respectively,
belong to the set $C(\Gamma^*)$. 
The mean of these two distributions is given by $\mu^*$ as represented in Table~\ref{tab: 2x4 game dist}. 
Let $\mu_{unif} := (0.25,0.25,0.25,0.25)$ be the uniform distribution on player $2$'s actions. 
%We have, the CPT values of player $1$ for the following lotteries:
%\red 
The CPT values of player $1$ for the lotteries corresponding to player $2$ playing $\mu_{unif}$ are: 
%\black
\begin{align*}
&V_1(L_1(\mu_{unif},\text{0})) = w_1^+(0.75) + \beta w_1^+(0.5) + (\beta -1) w_1^+(0.25) = 1.985,\\
&V_1(L_1(\mu_{unif},1)) = 1.99,
\end{align*}
%\color{red}
%\red
since $w_1^+(0.25) = 0.308$ and $w_1^+(0.75) = 0.585$.
%\color{black}
%Thus, 
We see that
player $1$'s best reaction to the distribution
$\mu_{unif}$ of player $2$ 
%\black
is action 1. 
%\color{red}
%vacomment: Please provide the values of $w_1^+(0.25)$ and $w_1^+(0.75)$ 
%above.
%\color{black}
This shows that $\mu^* \notin C(\Gamma^*)$, and hence $C(\Gamma^*)$ is not convex. 

Using this fact, we will 
%\red 
attempt to 
%\black
construct 
%a sequence of 
%calibrated assessments 
%\red 
an assessment scheme 
%\black
and 
%their corresponding best reactions 
%\red 
a best reaction function 
%\black
for each player such that 
%\red 
if each player makes assessments at each step according to her 
assessment scheme and acts according to the best reaction to her
assessment at each step, then the assessments of each
player are calibrated with respect to the sequence of action 
profiles of the other 
%players 
%\red
player
%\black
and
%\black
the limit of the generated empirical distribution of action play does not belong to $C(\Gamma^*)$.

Suppose player $2$ plays her actions in a cyclic manner starting with action I at step $1$, followed by 
%\color{red}
actions
%\color{black}
II, III, IV and then again I and so on. Suppose player $1$'s assessment of player $2$'s action is $\mu_{odd} = (0.5,0,0.5,0)$ and $\mu_{even} = (0,0.5,0,0.5)$ at each odd and even step respectively. Then it is easy to see that player $1$'s assessments are calibrated
%\red 
with respect to the sequence of actions of player $2$. 
%\black
(Here player $2$ plays the role of Nature from the point of view of player $1$.)
%We can evaluate the CPT values of player $1$ for the following lotteries as
%\begin{align*}
%V_1(L_1(\mu_{odd},\text{0})) = 2\beta w_1^+(0.5) = 2,&& V_1(L_1(\mu_{odd},1)) = 1.99\\
%V_1(L_1(\mu_{even},\text{0})) = 1 + \beta w_1^+(0.5) = 2,&& V_1(L_1(\mu_{even},1)) = 1.99
%\end{align*}
%Thus, player $1$'s best reaction to both these assessments $\mu_{odd}$ and $\mu_{even}$ is action 0. 
Since player $1$'s best reaction is action 0 to all her assessments, she would play action 0 throughout. The distribution $\mu^*$ 
%represented in table \ref{tab: 2x4 game dist} 
is a limit point of the empirical distribution of action play and does not belong to $C(\Gamma^*)$.

%\cob

%%%%%%%%%%%%%%%%%%%%%%%%%%%%%%%%%%%%%%%%%%%%%%%

\begin{table}
\centering
\begin{tabular}{c | c | c | c | c |}
	 \multicolumn{1}{c}{}	& \multicolumn{1}{c}{I}   &  \multicolumn{1}{c}{II}  &  \multicolumn{1}{c}{III}  &  \multicolumn{1}{c}{IV} \\
	 \cline{2-5}
	0 & $0.5$  &  $0$  &  $0.5$  & $0$\\
	 \cline{2-5}
	 1	& $0$ & $0$ & $0$ & $0$\\
	 \cline{2-5}
	 \end{tabular} 
	 \caption{
	 %\com
	 Empirical distribution $\mu^o$ for the action play in example~\ref{ex: nonconvergence_calib}.
	 %\cob
	 }\label{tab: 2x4 game dist odd}
\end{table}

%%%%%%%%%%%%%%%%%%%%%%%%%%%%%%%%%%%%%%%%%%%%%%%

\begin{table}
\centering
\begin{tabular}{c | c | c | c | c |}
	 \multicolumn{1}{c}{}	& \multicolumn{1}{c}{I}   &  \multicolumn{1}{c}{II}  &  \multicolumn{1}{c}{III}  &  \multicolumn{1}{c}{IV} \\
	 \cline{2-5}
	0 & $0$  &  $0.5$  &  $0$  & $0.5$\\
	 \cline{2-5}
	 1	& $0$ & $0$ & $0$ & $0$\\
	 \cline{2-5}
	 \end{tabular} 
	 \caption{
	 %\com
	 Empirical distribution $\mu^e$ for the action play in example~\ref{ex: nonconvergence_calib}.
	 %\cob
	 }\label{tab: 2x4 game dist even}
\end{table}

%%%%%%%%%%%%%%%%%%%%%%%%%%%%%%%%%%%%%%%%%%%%%%%

\begin{table}
\centering
\begin{tabular}{c | c | c | c | c |}
	 \multicolumn{1}{c}{}	& \multicolumn{1}{c}{I}   &  \multicolumn{1}{c}{II}  &  \multicolumn{1}{c}{III}  &  \multicolumn{1}{c}{IV} \\
	 \cline{2-5}
	0 & $0.25$  &  $0.25$  &  $0.25$  & $0.25$\\
	 \cline{2-5}
	 1	& $0$ & $0$ & $0$ & $0$\\
	 \cline{2-5}
	 \end{tabular} 
	 \caption{Empirical distribution $\mu^*$ for the action play in example~\ref{ex: nonconvergence_calib}.}\label{tab: 2x4 game dist}
\end{table}

%%%%%%%%%%%%%%%%%%%%%%%%%%%%%%%%%%%%%%%%%%%%%%%

We have not described player $2$'s assessments.
We would like to 
%\red 
come up with an assessment scheme and a best reaction map for player $2$ such that 
%\black
%show that 
%\red 
if player $2$ forms assessments according to this 
assessment scheme and acts according to this best reaction 
map, then the sequence of her actions is the cyclic sequence
that we require her to play and, further, 
%\black
player $2$'s assessments are calibrated 
%and her best reactions lead to the cyclic play exhibited. 
%\red 
with respect to the sequence of actions of player $1$
(which is the all $0$ sequence). 
%\black
%However, to do so we need to modify the game into a $3$-person game.
%\color{red}
However, 
%\red 
it turns out that 
%\black
we cannot do so in this game.

Instead, we need to modify the game 
%\red 
$\Gamma^*$ 
%\black
%into a $3$-person game.
%\red
into a $3$-person game, denoted $\tilde{\Gamma}^*$. 
%\black
%\color{black}
Let player $1$ have two actions \{0,1\}, and players $2$ and $3$ 
%both have four actions \{I,II,III,IV\} each.
%\color{red}
each have four actions \{I,II,III,IV\}.
%\color{black}
Let the payoffs to all the three players be $-1$ if players $2$ and $3$ play different actions. 
%\cor
If players $2$ and $3$ play the same action, then let the resulting payoff matrix be as represented in table \ref{tab: 2x4 game}, where the rows correspond to player $1$'s actions and the columns correspond to
the common actions of players $2$ and $3$. 
%\cob
Player $1$ receives the payoff represented by the first entry in each cell and 
%player 
%\red
players
%\black
$2$ and $3$ each receive the payoff represented by the second entry.
Let player $1$'s 
%CPT behavior be characterized by the reference point, value function and probability weighting functions 
%as before in the 
%\bb
%\red 
CPT features be 
%\black
as in the
%\cob
$2$-person game
%\red 
$\Gamma^*$. 
%\black
%\cor 
For players $2$ and $3$, let them be as for player $2$ in that game.
Let 
%player 
%\red 
players 
%\black
$2$ and $3$ play in the cyclic manner as above, in sync with each other. 
%\cob
Let player $1$ play action 0 throughout.
%Let player $2$ and player $3$'s assessment at step $t$ be the point distribution supported by the action profile $a^t_{-2}$ and $a^t_{-3}$ respectively. This assessment is clearly calibrated. 
%(Here the action pair comprised of the actions of players $1$ and $3$ play the role of the actions of Nature from the point of view of player $2$, and similarly for player $3$.) 
%\bb
Let player $2$'s assessment at step $t$ be the point distribution supported by the action profile $a^t_{-2}$ which equals $0$ for player $1$ and the action played
by player $2$ for player $3$. Similarly, let 
player $3$'s assessment at step $t$ be the point distribution supported by the action profile $a^t_{-3}$ which equals $0$ for player $1$ and the action played
by player $3$ for player $2$.
%These assessments are clearly calibrated. 
%\red 
Then, for each of the players $2$ and $3$, her 
assessments are calibrated with respect to the sequence of 
%actions 
%\red
action
%\black
profiles of her opponents. 
%\black
Here the action pair comprised of the actions of players $1$ and $3$ plays the role of the actions of Nature from the point of view of player $2$, and similarly the action pair comprised of the actions of players $1$ and $2$ plays the role of the actions of Nature from the point of view of
player $3$.
%\cob
The actions of player $2$ and $3$ at every step are best reactions to their corresponding assessments.
Let the assessment of player $1$ be 
%$\mu_{odd}$ and $\mu_{even}$ 
%\red 
$\tilde{\mu}_{odd}$ and $\tilde{\mu}_{even}$ 
%\black
at odd and even steps
%\color{red}
respectively,
%\color{black}
where now the distribution 
%$\mu_{odd}$ 
%\red 
$\tilde{\mu}_{odd}$ 
%\black
puts $0.5$ probability on action profiles (I,I) and (III,III), and %$\mu_{even}$ 
%\red 
$\tilde{\mu}_{even}$ 
%\black
puts $0.5$ probability on action profiles (II,II) and (IV,IV). 
%\cor
Again player $1$'s assessments are calibrated 
%\red 
with respect to the sequence of action profiles of her
opponents 
%\black
(where now action pairs comprised of the actions of player $2$ and player $3$ play the role of the actions of Nature from the point of view of player $1$) 
and her actions are best reactions to her assessments. 
%\cob
The limit point of the empirical distribution of action play is the distribution that puts probability $0.25$ on action profiles (0,I,I), (0,II,II), (0,III,III) and (0,IV,IV). 
Since action 0 is not a best response of player $1$ to the distribution 
%$\sigma_{unif}$ 
%\red 
$\tilde{\mu}_{unif}$ 
%\black
that puts probability $0.25$ on action profiles (I,I), (II,II), (III,III) and (IV,IV), 
the limit point of the empirical distribution 
%is not a CPT correlated equilibrium. 
%\red 
is not a CPT correlated equilibrium of the $3$-player game
$\tilde{\Gamma}^*$. 
%\black
%Thus, we have a game where the assessments of each player are calibrated and each player plays her best reaction, 
%but the limit empirical distribution of action play exists and is not a CPT correlated equilibrium.
%\color{red}
Thus, we have a game where the assessments of each player are calibrated
%\red 
with respect to the sequence of action profiles of 
%the other players, 
%\red 
her opponents, 
%\black
%\black
each player plays 
%her best reaction, 
%\red 
her best reaction to her assessments at each step, 
%\black
and
the limit empirical distribution of action play exists but is not a CPT correlated equilibrium.
%\color{black}
%{\color{red} Confirm that the limit exists. \color{black}}
\end{example} 

%%%%%%%%%%%%%%%%%%%%%%%%%%%%%%%%%%%%%%%%%%%%%%%
%%%%%%%%%%%%%%%%%%%%%%%%%%%%%%%%%%%%%%%%%%%%%%%

%!TEX root = ../main.tex

\subsection*{Mediated Games}
\label{sec: mediated_eq_CPT}

%\st{The non-convexity of the set $C(\Gamma)$ is essential to the argument in example $\ref{ex: nonconvergence_calib}$.}
%\cor 
%The preceding sentence needs to be expanded on to explain why this is the case.
%\cob

%\com
In example~\ref{ex: nonconvergence_calib}, the fact that action 0 is player $1$'s best reaction to the distributions 
%$\mu_{odd}$ and $\mu_{even}$, but not to $\mu_{unif}$, 
%\red 
%$\tilde{\mu}_{odd}$ and $\tilde{\mu}_{even}$, but not to $\tilde{\mu}_{unif}$, 
%\blue
${\mu}_{odd}$ and ${\mu}_{even}$, but not to ${\mu}_{unif}$,
%\black
plays an essential role in showing 
%both 
the non-convexity of the set $C(\Gamma^*)$  
%\red 
in the $2$-player game $\Gamma^*$,
%\black
%\blue
and the fact that action 0 is player $1$'s best reaction to the distributions 
$\tilde{\mu}_{odd}$ and $\tilde{\mu}_{even}$, but not to $\tilde{\mu}_{unif}$,
helps us in showing
%\black 
the non-convergence of calibrated learning 
%to the set $C(\Gamma^*)$.
%\red 
to the set $C(\tilde{\Gamma}^*)$ in the $3$-player game 
$\tilde{\Gamma}^*$. 
%\black
%\cob
\black
We now describe 
%\com
a convex extension 
%\cob
of the set $C(\Gamma)$ 
%\red 
in a general finite $n$-person game $\Gamma$, 
%\black 
and 
%in section \ref{sec: calib_mediated} 
establish the convergence of the empirical distribution of action play to this extended set
%\red
when each player plays the best reaction to her assessment at
each step and her assessment scheme is calibrated with respect to the sequence of action profiles of her opponents.
%\black
%{\color{red} What does ``such an extension" mean in the preceding sentence? \color{black}} 
%\cor
It turns out that this extended set of equilibria also has a 
%game theoretic 
%\red 
game-theoretic 
%\black
interpretation, as follows. 
%\cob
Suppose we add a signal system $(B_i)_{i \in [n]}$ to a game $\Gamma$, 
where each $B_i$ is a finite set.
%\bb
(In Appendix~\ref{app: polish_signal_space}, we 
%\red 
study what happens when we 
%\black
relax the assumption that the sets $B_i$ are finite and show that in a certain sense it is enough to consider only finite signal sets.)
%\cob 
Suppose there is a mediator who sends a signal $b_i \in B_i$ to player $i$. 
Let $B := \Pi_{i \in [n]} B_i$ be the set of all signal profiles $b = (b_i)_{i \in [n]}$, and 
%\red 
let 
%\black
$B_{-i} := \Pi_{j \neq i} B_j$ denote the set of signal profiles $b_{-i}$ of all players except player $i$.
Let $\tilde \Gamma := (\Gamma, (B_i)_{i \in [n]})$ denote such a 
game with a signal system. 
We call it a \emph{mediated game}.
The mediator is characterized by a distribution $\psi \in \Delta(B)$ that we call the \emph{mediator distribution}. Thus, the mediator draws a signal profile $b = (b_i)_{i \in [n]}$ from the mediator distribution $\psi$ and sends signal $b_i$ to player $i$.
Let $\psi_i$ denote the marginal probability distribution on $B_i$ induced by $\psi$, and for $b_i$ such that $\psi_i(b_i) > 0$, let $\psi_{-i}(\cdot | b_i)$ denote the conditional probability distribution on $B_{-i}$.
In the definition of a correlated equilibrium, the set $B_i$ is 
%restricted to the set 
%\bb
restricted to be the set 
%\cob
of actions $A_i$
for each player $i$.

A \emph{randomized strategy} for any player $i$ is given by a function $\sigma_i : B_i \to \Delta(A_i)$ and a \emph{randomized strategy profile} $\sigma = (\sigma_1,\dots,\sigma_n)$ gives the randomized strategy for all players. 
We define the \emph{best response set} of player $i$ to a randomized strategy profile $\sigma$ and a mediator distribution $\psi$ as
%\begin{align}
%\label{eq: best_response}
%	BR_i(\psi, \sigma) := \bigg\{ \sigma^*_i &: B^i \to \Delta(A_i) \bigg| \text{ for all } b_i \in B_i \nonumber\\
%	& \supp(\sigma^*_i(b_i)) \subset \arg \max_{a_i \in A_i} V_i\l(\l\{ \tilde \mu_{-i}(a_{-i}|b_i), x_i(a_i,a_{-i}) \r\}_{a_{-i} \in A_{-i}} \r) \bigg\}
%\end{align}
%\bb
\begin{align}
\label{eq: best_response}
	BR_i(\psi, \sigma) := \bigg\{ \sigma^*_i &: B_i \to \Delta(A_i) \bigg| \text{ for all } b_i \in 
    %B_i
    \supp(\psi_i)
    , \nonumber\\
	& \supp(\sigma^*_i(b_i)) \subset \arg \max_{a_i \in A_i} V_i\l(\l\{ \tilde \mu_{-i}(a_{-i}|b_i), x_i(a_i,a_{-i}) \r\}_{a_{-i} \in A_{-i}} \r) \bigg\},
\end{align}
%\cob
where
\begin{equation}
\label{eq: tilde_mu}
	\tilde \mu_{-i}(a_{-i} | b_i ) := \sum_{b_{-i} \in B_{-i}} \psi_{-i}(b_{-i}|b_i) \prod_{j \in [n] \back i} \sigma_j(b_j)(a_j),
\end{equation}
and $\supp(\cdot)$ denotes the support of the distribution within the %brackets.
%\red 
parentheses. 
%\black

%%%%%%%%%%%%%%%%%%%%%%%%%%%%%%%%%%%%%%%%%%%%%%%

\begin{definition}
\label{def: mediated_cpt_nash_eq}
 A randomized strategy profile $\sigma$ is said to be a  \emph{mediated CPT Nash equilibrium} 
%of a game $\tilde \Gamma$ 
%\red 
of a mediated game 
$\tilde \Gamma = (\Gamma, (B_i)_{i \in [n]})$ 
%\black
 with respect to a mediator distribution $\psi \in \Delta(B)$ 
 %iff
 %\red 
 if 
 %\black
	\[
		\sigma_i \in BR_i(\psi,\sigma)  \text{ for all } i \in [n].
	\]
Let $\Sigma(\Gamma, (B_i)_{i \in [n]}, \psi)$ denote the set of all mediated CPT Nash equilibria of $\tilde \Gamma = (\Gamma, (B_i)_{i \in [n]})$ with respect to a mediator distribution $\psi \in \Delta(B)$.
\end{definition}

%%%%%%%%%%%%%%%%%%%%%%%%%%%%%%%%%%%%%%%%%%%%%%%

%{\color{red} Since the notion of a mediated equilibrium may not be familiar to many people,
%it would be useful to give an example here - in particular, one that demonstrates that this is a nontrivial notion in the sense that the set of mediated corrrelated equilibria is neither empty nor the set of all distributions, and also that the set of mediated correlated CPT equilibria is 
%different from the set of mediated correlated equilibria. \color{black}}

For any mediator distribution $\psi \in \Delta(B)$, and any randomized strategy profile $\sigma$, let $\eta(\psi,\sigma) \in \Delta(A)$ be given by
\begin{equation}
\label{eq: eta_def}
	\eta(\psi,\sigma)(a) := \sum_{b \in B} \psi(b) \prod_{i \in [n]} \sigma_i(b_i)(a_i).
\end{equation}
Thus, $\eta(\psi,\sigma)$ gives the joint distribution over 
%\red
the 
%\black
action profiles of all 
%\red
the
%\black
players corresponding to the randomized strategy $\sigma$ and the mediator distribution $\psi$.

%%%%%%%%%%%%%%%%%%%%%%%%%%%%%%%%%%%%%%%%%%%%%%%

\begin{definition}
\label{def: mediated_cpt_corr_eq}
A probability distribution $\mu \in \Delta(A)$ is said to be a \emph{mediated CPT correlated equilibrium} of a game $\Gamma$ 
%iff 
%\red 
if 
%\black
there exist a signal system $(B_i)_{i \in [n]}$, a mediator distribution $\psi \in \Delta(B)$, and a mediated CPT Nash equilibrium $\sigma \in \Sigma(\Gamma, (B_i)_{i \in [n]}, \psi)$ such that $\mu = \eta(\psi,\sigma)$.
\end{definition}

%%%%%%%%%%%%%%%%%%%%%%%%%%%%%%%%%%%%%%%%%%%%%%%

%\red 
Consider an arbitrary mediated game
$\tilde{\Gamma} = (\Gamma, (B_i)_{i \in [n]})$ with
an arbitrary mediator distribution $\psi \in \Delta(B)$,
where $B = \prod_{i=1}^n B_i$. 
%\black
If all the players choose to ignore the signals sent by the mediator, then the corresponding randomized strategy profile $\sigma$ consists of constant functions 
%\com
$\sigma_i(b_i) \equiv \mu_i^*$.
%\cob 
%\cor
Further, 
%\blue 
as shown in Remark \ref{app: rem: allNash} in
Appendix \ref{app: eq_notions}, 
%\black
it follows from 
%\com
%Definitions~\ref{def: mediated_cpt_nash_eq} 
%and \ref{def: CPT_Nash_eq} 
%\cor
%and \ref{def: CPT_Nash_eq_real} 
%\red
Definitions~\ref{def: CPT_Nash_eq_real} 
%and \ref{def: CPT_Nash_eq} 
%\cor
and \ref{def: mediated_cpt_nash_eq}  
%\black
%\cob
that 
%$\sigma$ is a mediated CPT Nash equilibrium of the game $\tilde \Gamma$ with mediator distribution $\psi$, for any mediator distribution $\psi$, iff $\sigma$ is a CPT Nash equilibrium of the game $\Gamma$. 
%\com
the 
%mixed strategy 
%\red 
product probability distribution 
%\black
$\mu^* = \prod_{i \in [n]} \mu_i^*$ 
%\cob
%\bb
is a CPT Nash equilibrium of the game $\Gamma$ iff
$\sigma$ is a mediated CPT Nash equilibrium of the 
%\red 
mediated 
%\black
game $\tilde \Gamma$ with 
%\red 
respect to the 
%\black
mediator distribution $\psi$. 
%for any mediator distribution $\psi$.
%\color{red} 
%for the mediator distribution $\psi$ being used. 
%\color{black}
%\cob
%\red 
In particular, 
%\black
since every game $\Gamma$ has at least one CPT Nash equilibrium, we %have 
%\red 
see 
%\black
that every mediated game $\tilde \Gamma$ 
%with mediator distribution $\psi$ 
has at least one mediated CPT Nash equilibrium with respect to the mediator distribution $\psi$, 
%\red 
for any mediator distribution 
$\psi$. 
%\black
%\cob

%\cor
%Please note that I have reordered some of the paragraphs above.
%\cob

Let $D(\Gamma)$ denote the set of all mediated CPT correlated equilibria of a game $\Gamma$. 
By definition, $D(\Gamma)$ is the union over all signal systems $(B_i)_{i \in [n]}$
and mediator distributions $\psi \in \Delta(B)$ of 
$\{ \eta(\psi, \sigma) : \sigma \in \Sigma(\Gamma, (B_i)_{i \in [n]}, \psi)\}$.
%This characterizes the union of the sets of equilibria of all mediated games that can be generated from $\Gamma$.
When $B_i = A_i$ for all $i \in [n]$ and 
%\red 
$\sigma = (\sigma_1, \ldots, \sigma_n)$ is the 
deterministic strategy profile given, with an abuse of notation,
by 
%\black
$\sigma_i(b_i)(a_i) = \1\{b_i = a_i\}$,
%$\sigma \in \Sigma(\Gamma,(B_i)_{i \in [n]},\psi)$
%\bb
one can check,
%\red 
see Remark \ref{app: rem: CinD} in
Appendix \ref{app: eq_notions}, 
%\black
that $\sigma \in \Sigma(\Gamma,(A_i)_{i \in [n]},\psi)$
%\cob 
%if and only if 
%\red 
iff 
%\black
$\psi \in C(\Gamma)$. 
%Thus, we have $C(\Gamma) \subset D(\Gamma)$.
%\bb
In this case $\eta(\psi,\sigma) = \psi$ and so we have $C(\Gamma) \subset D(\Gamma)$.
%\cob
Under EUT, \citet{aumann1987correlated} proves that $D(\Gamma) = C(\Gamma)$. 
%This is sometimes referred to as the \emph{Bayesian-Nash revelation-principle}.
However, under CPT, this property, in general, does not hold true. Lemma~\ref{lem: mediated_corr_eq} shows how $D(\Gamma)$ compares with $C(\Gamma)$.

For any $i, a_i, \tilde a_i \in A_i$, let $C(\Gamma, i,a_i,\tilde a_i)$ denote the set of all probability vectors $\pi_{-i} \in \Delta(A_{-i})$ such that
%\begin{equation}
%\label{eq: best_reaction_ineq}
%	V_i(\{\pi_{-i}(a_{-i}),x_i(a_i,a_{-i})\}) \geq V_i(\{\pi_{-i}(a_{-i}),x_i(\tilde a_i,a_{-i})\}).
%\end{equation}
%\com
\begin{equation}
\label{eq: best_reaction_ineq}
    V_i(L_i(\pi_{-i},a_i)) \geq V_i(L_i(\pi_{-i},\tilde a_i)).
\end{equation}
%\cob
%Since $V_i^{r_i}(\{\pi_{-i}(a_{-i}),u_i(a_i,a_{-i})\})$ is a continuous function of $\pi_{-i}$, the set $C(\Gamma, i,a_i,\tilde a_i)$ is closed.
It is clear from the definition of CPT correlated equilibrium that, for a joint probability distribution $\mu \in C(\Gamma)$, provided $\mu_i(a_i) > 0$, the probability vector 
%$\pi_{-i}(\cdot)  = \mu(\cdot|a_i) \in \Delta(A_{-i})$ 
%\bb
$\pi_{-i}(\cdot)  = \mu_{-i}(\cdot|a_i) \in \Delta(A_{-i})$ 
%\cob
should belong to $C(\Gamma, i,a_i,\tilde a_i)$ for all $\tilde a_i \in A_i$. Let
\[
	C(\Gamma, i,a_i) := \cap_{\tilde a_i \in A_i} C(\Gamma, i,a_i,\tilde a_i).
\]
Now, for all $i$, define a subset $C(\Gamma, i) \subset \Delta(A)$, as follows:
%\[
%	C(\Gamma, i) := \{\mu \in \Delta(A) | \mu(\cdot|a_i) \in C(\Gamma, i,a_i), \forall a_i \in \supp \l(\mu_i\r) \}.
%\]
%\bb
\[
	C(\Gamma, i) := \{\mu \in \Delta(A) | \mu_{-i}(\cdot|a_i) \in C(\Gamma, i,a_i), \forall a_i \in \supp \l(\mu_i\r) \}.
\]
%\cob
Note that, since 
%$V_i(\{\pi_{-i}(a_{-i}),x_i(a_i,a_{-i})\})$
%\com
$V_i(L_i(\pi_{-i},a_i))$
%\cob
 is a continuous function of $\pi_{-i}$, the sets $C(\Gamma, i,a_i,\tilde a_i)$, $C(\Gamma,i,a_i)$ and  $C(\Gamma,i)$ are all closed.
%%%%%%%%%%%%%%%%%%%%%%%%%%%%%%%%%%%%%%%%%%%%%%%

\begin{lemma}
\label{lem: mediated_corr_eq}
For any game $\Gamma$, we have
\begin{enumerate}[(i)]
	\item 
    %\bb
    For all $i \in [n]$,
    %\cob 
    $\conv \l(C(\Gamma, i)\r) = \{\mu \in \Delta(A) | \mu_{-i}(\cdot|a_i) \in \conv \l(C(\Gamma, i,a_i)\r), \forall a_i \in \supp \l(\mu_i\r) \}$,
	\item $C(\Gamma) = \cap_{i \in [n]} C(\Gamma, i)$, and
	\item $D(\Gamma) = \cap_{i \in [n]} \conv(C(\Gamma, i))$.
\end{enumerate}
where $\conv(S)$ denotes the convex hull of a set $S$.
\end{lemma}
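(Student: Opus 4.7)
The plan is to handle the three parts in the order (ii), (i), (iii), since (ii) is almost immediate from definitions, (i) is the structural characterization of $\conv(C(\Gamma,i))$ that will be used later, and (iii) reduces via (i) to an explicit mediated-game construction. For (ii), I simply unwind Definition~\ref{def: CPT_Nash_eq}: inequality~(\ref{eq: CPT_corr_ineq}), quantified over $\tilde a_i \in A_i$, is precisely $\mu_{-i}(\cdot|a_i) \in C(\Gamma,i,a_i)$, and requiring this for every $i$ and every $a_i \in \supp(\mu_i)$ places $\mu$ in $\bigcap_i C(\Gamma,i)$.

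For part (i), the inclusion $\conv(C(\Gamma,i))\subseteq\{\cdots\}$ follows from the linearity of conditioning: if $\mu = \sum_k \lambda_k \mu^{(k)}$ with $\mu^{(k)} \in C(\Gamma,i)$ and $a_i \in \supp(\mu_i)$, then $\mu_{-i}(\cdot|a_i)$ is the convex combination of the $\mu^{(k)}_{-i}(\cdot|a_i)$ with weights $\lambda_k \mu^{(k)}_i(a_i)/\mu_i(a_i)$, each lying in $C(\Gamma,i,a_i)$. For the reverse inclusion, given $\mu$ in the right-hand side, fix a finite decomposition $\mu_{-i}(\cdot|a_i) = \sum_{k \in K(a_i)} \lambda_{a_i,k}\,\pi_{a_i,k}$ with $\pi_{a_i,k} \in C(\Gamma,i,a_i)$ for each $a_i \in \supp(\mu_i)$. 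For each index function $\kappa$ with $\kappa(a_i) \in K(a_i)$, define $\mu^{\kappa}(a_i,a_{-i}) := \mu_i(a_i)\,\pi_{a_i,\kappa(a_i)}(a_{-i})$. Each $\mu^{\kappa}$ lies in $C(\Gamma,i)$, and the identity $\mu = \sum_{\kappa} \bigl(\prod_{a_i}\lambda_{a_i,\kappa(a_i)}\bigr)\mu^{\kappa}$ follows by a direct expansion that uses $\sum_{\kappa(a_i')}\lambda_{a_i',\kappa(a_i')}=1$ for $a_i' \neq a_i$.

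Part (iii) splits into two inclusions. For $D(\Gamma) \subseteq \bigcap_i \conv(C(\Gamma,i))$, I take $\mu = \eta(\psi,\sigma)$ with $\sigma \in \Sigma(\Gamma,(B_i)_{i\in[n]},\psi)$ and use (\ref{eq: eta_def}) to write $\mu_{-i}(\cdot|a_i)$ as a convex combination over $b_i \in B_i$, with weights $\psi_i(b_i)\sigma_i(b_i)(a_i)/\mu_i(a_i)$, of the distributions $\tilde\mu_{-i}(\cdot|b_i)$ from (\ref{eq: tilde_mu}); for each $b_i$ carrying positive weight, $a_i \in \supp(\sigma_i(b_i))$, so (\ref{eq: best_response}) forces $\tilde\mu_{-i}(\cdot|b_i) \in C(\Gamma,i,a_i)$, and part (i) lifts this to $\mu \in \conv(C(\Gamma,i))$. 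For the reverse inclusion $\bigcap_i \conv(C(\Gamma,i)) \subseteq D(\Gamma)$, I construct an explicit mediated equilibrium realizing $\mu$: using part (i), for each $i$ and each $a_i \in \supp(\mu_i)$ fix a decomposition $\mu_{-i}(\cdot|a_i) = \sum_{k \in K_i(a_i)} \lambda^{(i)}_{a_i,k}\,\pi^{(i)}_{a_i,k}$ with $\pi^{(i)}_{a_i,k} \in C(\Gamma,i,a_i)$, take $B_i := A_i \times K_i$ for a finite $K_i \supseteq \bigcup_{a_i} K_i(a_i)$, set $\sigma_i((a_i,k)) := e_{a_i}$, and define
\[
\psi\bigl((a_1,k_1),\ldots,(a_n,k_n)\bigr) \;:=\; \mu(a)\,\prod_{i=1}^{n} \frac{\lambda^{(i)}_{a_i,k_i}\,\pi^{(i)}_{a_i,k_i}(a_{-i})}{\mu_{-i}(a_{-i}|a_i)},
\]
with the convention that terms with $\mu(a)=0$ contribute $0$. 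A direct summation over the $k_i$'s yields $\eta(\psi,\sigma) = \mu$, and the Bayesian posterior computation $\tilde\mu_{-i}(\cdot|(a_i,k_i)) = \pi^{(i)}_{a_i,k_i} \in C(\Gamma,i,a_i)$ places $\sigma \in \Sigma(\Gamma,(B_i)_{i\in[n]},\psi)$.

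The main obstacle is the construction in the reverse direction of (iii). Under EUT, Aumann's revelation principle collapses any mediated system to $B_i = A_i$, but under CPT this collapse fails precisely because conditioning on the recommended action $a_i$ alone loses the ``extreme-point label'' identifying which $\pi^{(i)}_{a_i,k}$ is rationalizing $a_i$, and distinct such labels can each support $a_i$ without their mixture $\mu_{-i}(\cdot|a_i)$ doing so. The remedy is to enlarge each $B_i$ to carry both the action and the label, and to couple the labels across players through a single joint $\psi$; the delicate point is that a single product-form ``type conditional'' must simultaneously reproduce the marginal $\mu$ on actions and deliver, for every player $i$ and every signal $(a_i,k_i)$, the prescribed extreme $\pi^{(i)}_{a_i,k_i}$ as the Bayesian posterior on $a_{-i}$---which is exactly what the specific Bayes-induced product form above achieves.
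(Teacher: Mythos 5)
Your proposal is correct and takes essentially the same route as the paper: part (ii) by unwinding the definition, part (i) by lifting conditional decompositions to joint ones, and part (iii) via the same $B_i = A_i \times K_i$ signal system with the same Bayes-product mediator distribution (the paper's normalizing denominator $\sum_{\tilde m_i, i}\prod_i \theta_{i,a_i}^{\tilde m_i}\zeta_{-i,a_i}^{\tilde m_i}(a_{-i})$ is shown to equal $\prod_i \mu_{-i}(a_{-i}|a_i)$, which is exactly your form) and the same conditional-decomposition argument for $D(\Gamma) \subseteq \cap_i \conv(C(\Gamma,i))$. The only cosmetic difference is in the reverse inclusion of (i), where you decompose $\mu$ over index functions $\kappa$ into product-form pieces $\mu_i(a_i)\,\pi_{a_i,\kappa(a_i)}(a_{-i})$ while the paper uses pieces concentrated on a single $a_i$; both are valid.
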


%%%%%%%%%%%%%%%%%%%%%%%%%%%%%%%%%%%%%%%%%%%%%%%%%%

\begin{proof}
%[Proof of lemma \ref{lem: mediated_corr_eq}]
%\bb
Fix $i \in [n]$.
%\cob
	Note that, since the sets $C(\Gamma,i)$ and $C(\Gamma,i,a_i)$ 
    %\bb
    for each $a_i \in A_i$
    %\cob
    are closed, the convex hulls of these sets are closed.
	 Suppose $\mu = \lambda \mu^1 + (1 - \lambda) \mu^2$ where $\mu^1,\mu^2 \in C(\Gamma,i)$ and $0 < \lambda < 1$. If $a_i \in \supp (\mu_i)$, then one of the following three cases holds:

%\cor
	\textbf{Case 1} [$a_i \in \supp (\mu^1_i)$, $a_i \in \supp (\mu^2_i)$]. Then, $\mu^1_{-i}(\cdot|a_i), \mu^2_{-i}(\cdot|a_i) \in C(\Gamma,i,a_i)$ and we have,
	\[
		\mu_{-i}(\cdot|a_i) = \frac{\lambda \mu^1_i(a_i) \mu^1_{-i}(\cdot|a_i) + (1 - \lambda) \mu^2_i(a_i) \mu^2_{-i}(\cdot|a_i)}{\lambda \mu^1_i(a_i) + (1 - \lambda) \mu^2_i(a_i)}.
	\]
	Let $\theta = (\lambda \mu^1_i(a_i))/(\lambda \mu^1_i(a_i) + (1 - \lambda) \mu^2_i(a_i))$. Then $\mu_{-i}(\cdot|a_i) = \theta \mu^1_{-i}(\cdot|a_i) + (1 - \theta) \mu^2_{-i}(\cdot|a_i)$ and hence $\mu_{-i}(\cdot|a_i) \in \conv \l(C(\Gamma,i,a_i) \r)$. 
    %Thus $\mu$ belongs to the set on the right hand side of the equation in (i).

	\textbf{Case 2} [$a_i \in \supp (\mu^1_i)$, $a_i \notin \supp (\mu^2_i)$]
    Here $\mu_{-i}(\cdot|a_i) = \mu^1_{-i}(\cdot|a_1)$. Hence $\mu_{-i}(\cdot|a_i) \in C(\Gamma,i,a_i)$.
    
    \textbf{Case 3} [$a_i \notin \supp (\mu^1_i)$, $a_i \in \supp (\mu^2_i)$] This can be handled similarly to case 2. 
    
    Also, the above argument can be easily extended to when $\mu$ is a convex combination of any finite number of distributions. 
    Since all our sets are compact subsets of finite dimensional Euclidean spaces, Caratheodory's theorem applies, and hence we need to consider only finite convex combinations. 
    
    This shows that the set on the 
    %LHS 
    %\red 
    left hand side 
    %\black
    is contained in the set on the 
    %RHS 
    %\red 
    right hand side 
    %\black
    of the equation in (i)
    %\bb
    for the given fixed $i \in [n]$.
    %\cob
    %\cob

	To prove 
	%the other inclusion, 
	%\red 
    the inclusion in the other direction, 
    %\black
	%\red 
    fix $i \in [n]$ and 
    %\black
	let $\mu$ belong to the set on the 
	%RHS 
	%\red 
    right hand side 
    %\black
	of the equation in (i).
    %\bb
    %Fix $i \in [n]$.
    %\cob
    If $a_i \in \supp (\mu_i)$, then $\mu_{-i}(\cdot|a_i)$
    is a linear combination of $|A_{-i}|$ joint distributions (allowing repetitions),
%\red 
call them 
%\black    
	\[
		\zeta_{-i,a_i}^1, \dots, \zeta_{-i,a_i}^{m_i}, \dots, \zeta_{-i,a_i}^{|A_{-i}|} \in C(\Gamma,i,a_i),
	\] 
	with coefficients 
	$\theta_{i,a_i}^{m_i}, 1 \leq m_i \leq |A_{-i}|$ respectively
    %\bb
    (where 
    %$0 < \theta_{i,a_i}^{m_i} \leq 1$ 
    %\red $0 < \theta_{i,a_i}^{m_i} < 1$ \black
    %\blue 
    $0 < \theta_{i,a_i}^{m_i} \leq 1$ 
    %\black
    %\footnote{
    %\blue We will have to allow for the possibility that $\theta_{i,a_i}^{m_i} = 1$ in case we have the trivial situation $|A_{-i}| = 1$.
    %}
    for all $1 \leq m_i \leq |A_{-i}|$ 
    %\red 
    can be ensured because we allow repetitions). 
    %\black
    %\cob
    %\cor
	%Note that, since $C(\Gamma)$ is nonempty, the set $C(\Gamma,i)$ is nonempty for each $i$. The set $C(\Gamma, i)$ can be constructed from the sets $\{C(\Gamma,i,a_i),a_i \in A_i\}$ as follows: let $p_{a_i} \in C(\Gamma, i,a_i)$ for all $a_i \in A_i$ such that $C(\Gamma,i,a_i) \neq \phi$, let $q_i \in \Delta(A_i)$ be such that $q_i(a_i) =  0$ for all $a_i  \in A_i$ such that $C(\Gamma,i,a_i) = \emptyset$, and define a joint probability distribution $\bar \mu \in \Delta(A)$ by $\bar \mu(a_i,a_{-i}) = q_i(a_i) p_{a_i}(a_{-i})$ if $C(\Gamma,i,a_i) \neq \phi$ and $\bar \mu(a_i,a_{-i}) = 0$ otherwise.
	%Then $\bar \mu \in C(\Gamma,i)$, and for every $\bar \mu \in C(\Gamma,i)$, the corresponding $q_i(a_i) = \bar \mu_i(a_i)$ for all $a_i \in A_i$  and $p_{a_i} = \bar \mu_{-i}(\cdot|a_i)$ for all $a_i \in A_i$ with $C(\Gamma,i,a_i) \neq \phi$.
	For each $\zeta_{-i,a_i}^{m_i}$, construct a distribution 
    %\com
    $\mu_{i,a_i}^{m_i} \in \Delta(A)$ 
    %\cob
    by %$\mu_{i,a_i}^{m_i}(b_i, b_{-i}) = 1(b_i = a_i)
    %\zeta_{-i,a_i}^{m_i}(b_i)$.
    %\bb
    %$\mu_{i,a_i}^{m_i}(b_i, b_{-i}) = 1(b_i = a_i)
    %\zeta_{-i,a_i}^{m_i}(b_{-i})$.
    %\cob
    %\com
    $\mu_{i,a_i}^{m_i}(\tilde a_i, \tilde a_{-i}) = 1\{\tilde a_i = a_i\}
    \zeta_{-i,a_i}^{m_i}(\tilde a_{-i})$.
    %\cob
    %taking $p_{a_i} = \zeta_{-i,a_i}^{m_i}$ and $q_i = e_{a_i}$. 
    %From the observation above, we have 
    Then
    $\mu_{i,a_i}^{m_i} \in C(\Gamma,i)$. 
	Let $\lambda_{i,a_i}^{m_i} := \mu_i(a_i) \theta^{m_i}_{i,a_i}$, for all $i, m_i, a_i$ such that $\mu_i(a_i) > 0$. One can now check that 
%$\mu = \sum_{i,m_i,a_i} \lambda_{i,a_i}^{m_i} \mu_{i,a_i}^{m_i}$. 
    %\bb
    $\mu = \sum_{m_i,a_i} \lambda_{i,a_i}^{m_i} \mu_{i,a_i}^{m_i}$ for the given fixed $i \in [n]$.
    %\cob
    Thus $\mu \in \conv \l(C(\Gamma,i)\r)$.
    %\cob

	Statement (ii) follows directly from the definition of CPT correlated equilibrium.
	
    %\cor
	For statement (iii), let $\mu \in \Delta(A)$ be such that $\mu \in \conv(C(\Gamma, i))$ for all $i$. 
    %\cob
    For any $a_i$ such that $\mu_i(a_i) > 0$, by (i), we have, $\mu_{-i}(\cdot|a_i) \in \conv \l(C(\Gamma, i,a_i)\r)$.
%	We first observe that for each player $i$, for all actions $a_i$, $C(\Gamma,i,a_i)$, and hence $\conv \l(C(\Gamma, i,a_i)\r)$, is a compact subset of a $(|A_{-i}|-1)$-dimensional Euclidean space  and hence by Caratheodory's theorem $\mu_{-i}(\cdot|a_i)$
As above, let $\mu_{-i}(\cdot|a_i)$ be a convex combination of $|A_{-i}|$ joint distributions (allowing repetitions),
%\red 
call them 
%\black
	\[
		\zeta_{-i,a_i}^1, \dots, \zeta_{-i,a_i}^{m_i}, \dots, \zeta_{-i,a_i}^{|A_{-i}|} \in C(\Gamma,i,a_i),
	\] 
	with coefficients 
	$\theta_{i,a_i}^{m_i}, 1 \leq m_i \leq |A_{-i}|$ respectively
    %\bb
    (where 
    %$0 < \theta_{i,a_i}^{m_i} \leq 1$ 
    %\red $0 < \theta_{i,a_i}^{m_i} < 1$ \black
    %\blue 
    $0 < \theta_{i,a_i}^{m_i} \leq 1$ 
    %\black
    for all $1 \leq m_i \leq |A_{-i}|$
    %\red 
    can be ensured because we allow repetitions). 
    %\black
    %\cob
For all $i$, let $B_i := A_i \times M_i$, 
    where $M_i := \{1,\dots,|A_{-i}|\}$.
	% and let $\pi_i^{A_i}(\cdot)$ and $\pi_i^{M_i}(\cdot)$ be the projection maps from $B_i$ to $A_i$ and $M_i$ respectively. 
    Let the mediator distribution be given by
% \begin{align}
% \label{eq: rev_psi}
% 	\psi(b_1,\dots,b_n) &= \mu(\pi_i^{A_i}(b_i), i \in N) \nonumber\\
% 	&\times \frac{\prod\limits_{i=1}^n \l\{\theta_i^{\pi_i^{M_i}(b_i)}(\pi_i^{A_i}(b_i)) \zeta_{-i}^{\pi_i^{M_i}(b_i)}(\pi_i^{A_i}(b_i))[\pi_j^{A_j}(b_j), j \in N \back i]\r\}}{\sum\limits_{\substack{(\tilde b_i, i \in N) \\
% 	 \pi_i^{A_i}(\tilde b_i) = \pi_i^{A_i}(\tilde b_i)} } \prod\limits_{i=1}^n \l\{\theta_i^{\pi_i^{M_i}(\tilde b_i)}(\pi_i^{A_i}(\tilde b_i)) \zeta_{-i}^{\pi_i^{M_i}(\tilde b_i)}(\pi_i^{A_i}(\tilde b_i))[\pi_j^{A_j}(\tilde b_j), j \in N \back i]\r\} }.
% \end{align}
% Let $\pi_i^{A_i}(b_i) = a_i, \pi_i^{M_i}(b_i) = m_i$ for all $i$. Then equation \eqref{eq: rev_psi} can be rewritten as
%\red
\begin{align}
\label{eq: rev_psi}
	\psi\l((a_1,m_1),\dots,(a_n,m_n)\r) =
    \begin{cases} 
    \frac{\mu(a)\prod_{i=1}^n \l\{\theta_{i,a_i}^{m_i} \zeta_{-i,a_i}^{m_i}(a_{-i})\r\}} {\sum_{\tilde m_i, i \in [n] } \prod\limits_{i=1}^n \l\{\theta_{i,a_i}^{\tilde m_i} \zeta_{-i,a_i}^{\tilde m_i}(a_{-i})\r\} },
    & \mbox{ if $\mu(a) > 0$},\\
    0, & \mbox{ otherwise}.
    \end{cases}
\end{align}
%\black
%\bb
It is useful to note that 
\begin{equation}		\label{eq:denom}
\sum_{\tilde m_i, i \in [n] } \prod\limits_{i=1}^n \l\{\theta_{i,a_i}^{\tilde m_i} \zeta_{-i,a_i}^{\tilde m_i}(a_{-i})\r\} =
\prod_{i=1}^n \mu_{-i}(a_{-i}|a_i),
 \end{equation}
and that, for every $i \in [n]$,
\begin{equation}		\label{eq:marginal}
\psi_i((a_i,m_i)) := \sum_{(a_j,m_j), j \in [n] \back i} 
\psi\l((a_1,m_1),\dots,(a_n,m_n)\r) = \mu_i(a_i) \theta_{i,a_i}^{m_i}.
\end{equation}
%\cob
Let the strategy for each player $i$ be
%\cor
%\begin{align}
%\label{eq: rev_sigma}
%	\sigma_i(a_i,m_i) = e_{a_i}.
%\end{align}
%\bb
%\red
\begin{align}
\label{eq: rev_sigma}
	\sigma_i(a_i,m_i)(\tilde{a}_i) = \begin{cases}
    1, & \mbox{ if $\tilde{a}_i = a_i$},\\
    0, & \mbox{ otherwise}.
    \end{cases}
\end{align}
%\black
%\cob
%\cob
From equations \eqref{eq: eta_def}, \eqref{eq: rev_psi} and \eqref{eq: rev_sigma} 
%\cor
we have
%\begin{align*}
%	\eta(\psi,\sigma)(a) &= 
    %\sum_{b \in B} \psi(b) \prod_{i \in N} \sigma_i(b_i)(a_i) \\
%    \sum_{(\tilde{a}_i, m_i) \in B_i, i \in N} \psi\l((\tilde{a}_1,m_1),\dots,(\tilde{a}_n,m_n)\r) \prod_{i \in N} \sigma_i\l((\tilde{a}_i,m_i)\r)(a_i) \\
%	&= \mu(a) \times \sum_{m_i, i \in N}  \frac{\prod\limits_{i=1}^n \l\{\theta_{i,a_i}^{m_i} \zeta_{-i,a_i}^{m_i}(a_{-i})\r\}}{\sum_{\tilde m_i, i \in N } \prod\limits_{i=1}^n \l\{\theta_{i,a_i}^{\tilde m_i} \zeta_{-i,a_i}^{\tilde m_i}(a_{-i})\r\} } \\
%	&= \mu(a).
%\end{align*}
%\bb
%\red
\begin{align*}
	\eta(\psi,\sigma)(a) &= 
    %\sum_{b \in B} \psi(b) \prod_{i \in N} \sigma_i(b_i)(a_i) \\
    \sum_{(\tilde{a}_i, m_i) \in B_i, i \in [n]} \psi\l((\tilde{a}_1,m_1),\dots,(\tilde{a}_n,m_n)\r) \prod_{i \in [n]} \sigma_i\l((\tilde{a}_i,m_i)\r)(a_i) \\
    &= \sum_{m_i, i \in [n]} \psi\l((a_1,m_1),\dots,(a_n,m_n)\r) \\
	&= \mu(a) \sum_{m_i, i \in [n]}  \frac{\prod\limits_{i=1}^n \l\{\theta_{i,a_i}^{m_i} \zeta_{-i,a_i}^{m_i}(a_{-i})\r\}}{\sum_{\tilde m_i, i \in [n] } \prod\limits_{i=1}^n \l\{\theta_{i,a_i}^{\tilde m_i} \zeta_{-i,a_i}^{\tilde m_i}(a_{-i})\r\} } \\
	&= \mu(a).
\end{align*}
%\black
%\cob
%\cob
%Let $\pi_i^{A_i}(b_i) = a_i, \pi_i^{M_i}(b_i) = m_i$. 
From equations \eqref{eq: tilde_mu}, \eqref{eq: rev_psi},
\eqref{eq:denom}, \eqref{eq:marginal} and \eqref{eq: rev_sigma} 
%\cor
we have
%\begin{align*}
%	\tilde \mu_{-i}(a_{-i} | (a_i,m_i)) &= 
    %\sum_{b_{-i} \in B_{-i}} \psi_{-i}(b_{-i}|b_i) \prod_{j \in N \back i} \sigma_j(b_j)(a_j) \\
 %   \sum_{(\tilde{a}_j, m_j) \in B_j, j \in N \back i} \psi_{-i}(\l((\tilde{a}_j, m_j) \in B_j, j \in N \back i\r)|(a_i,m_i)) \prod_{j \in N \back i} \sigma_j((\tilde{a}_j,m_j))(a_j) \\
	%& = \sum_{\substack{b_j, j \in N\back i \\ \pi_j^{A_j}(b_j) = a_j } }  \psi_{-i}(b_{-i}|b_i)\\
	%&\propto \mu(a) \times \frac{ \l\{\theta_{i,a_i}^{m_i} \zeta_{-i,a_i}^{m_i}(a_{-i})\r\}}{\sum\limits_{m_i} \l\{\theta_{i,a_i}^{m_i} \zeta_{-i,a_i}^{m_i}(a_{-i})\r\} }\\
%	&\propto  \mu(a) \times \frac{ \l\{\theta_{i,a_i}^{m_i} \zeta_{-i,a_i}^{m_i}(a_{-i})\r\}}{ \mu_{-i}(a_{-i}|a_i) }\\
%	& = \l\{\mu_i(a_i) \theta_{i,a_i}^{m_i}\r\} \zeta_{-i,a_i}^{m_i}(a_{-i}).
%\end{align*}
%\bb
\begin{align*}
	\tilde \mu_{-i}(a_{-i} | (a_i,m_i)) &= 
    %\sum_{b_{-i} \in B_{-i}} \psi_{-i}(b_{-i}|b_i) \prod_{j \in N \back i} \sigma_j(b_j)(a_j) \\
    \sum_{(\tilde{a}_j, m_j) \in B_j, j \in [n] \back i} \psi_{-i}(\l((\tilde{a}_j, m_j), j \in [n] \back i\r)|(a_i,m_i)) \prod_{j \in [n] \back i} \sigma_j((\tilde{a}_j,m_j))(a_j) \\
    &= \sum_{m_j, j \in [n] \back i} \psi_{-i}(\l((a_j, m_j), j \in [n] \back i\r)|(a_i,m_i)) \\
    &= \frac{\sum_{m_j, j \in [n] \back i} \psi\l((a_1,m_1),\dots,({a}_n,m_n)\r)}{\psi_i((a_i,m_i))}\\
	%& = \sum_{\substack{b_j, j \in N\back i \\ \pi_j^{A_j}(b_j) = a_j } }  \psi_{-i}(b_{-i}|b_i)\\
	%&\propto \mu(a) \times \frac{ \l\{\theta_{i,a_i}^{m_i} \zeta_{-i,a_i}^{m_i}(a_{-i})\r\}}{\sum\limits_{m_i} \l\{\theta_{i,a_i}^{m_i} \zeta_{-i,a_i}^{m_i}(a_{-i})\r\} }\\
	%&\propto  \mu(a) \times \frac{ \l\{\theta_{i,a_i}^{m_i} \zeta_{-i,a_i}^{m_i}(a_{-i})\r\}}{ \mu_{-i}(a_{-i}|a_i) }\\
	%& = \l\{\mu_i(a_i) \theta_{i,a_i}^{m_i}\r\} \zeta_{-i,a_i}^{m_i}(a_{-i}).
    & = \zeta_{-i,a_i}^{m_i}(a_{-i}).
\end{align*}
%\cob
%\cob
%Thus 
%\cor
%$\tilde \mu_{-i}(\cdot | (a_i,m_i)) = \zeta_{-i,a_i}^{m_i}$, 
%\cob
%and we have 
%\cor
%\bb
Thus we have
%\cob
$\tilde \mu_{-i}(\cdot|(a_i,m_i)) \in C(\Gamma,i,a_i)$.
%\cob
Hence $\mu \in D(\Gamma)$.
%establishing 
%\bb
We have established that 
%\cob
$\cap_{i \in N} \conv(C(\Gamma, i)) \subset D(\Gamma)$.

%\cor
%To prove the other direction of statement (iii), 
%\bb
For the other direction of statement (iii), 
%\cob
%\cob
let $\mu \in D(\Gamma)$. Then there exists a signal system $(B_i)_{i \in [n]}$, a mediator distribution $\psi \in \Delta(B)$, and a mediated CPT Nash equilibrium $\sigma \in \Sigma(\Gamma, (B_i)_{i \in [n]}, \psi)$ such that $\mu = \eta(\psi,\sigma)$. 
%\bb
Fix $i \in [n]$.
%\cob
For 
%\bb
$b_i \in \supp(\psi_i)$ and
$a_i \in \supp \l(\sigma_i(b_i)\r)$, we have 
$\tilde \mu_{-i}(\cdot|b_i) \in C(\Gamma,i,a_i)$, 
from equations \eqref{eq: best_response} and \eqref{eq: best_reaction_ineq}. 
%\cor
%This means, from equations \eqref{eq: tilde_mu} and \eqref{eq: eta_def}, that we have 
%\bb
But
%\com 
\[
\mu_{-i}(\cdot|a_i) = \sum_{b_i \in \supp(\psi_i)} \frac{\psi_i(b_i)\sigma_i(b_i)(a_i)}{\mu_i(a_i)} \tilde \mu_{-i}(\cdot|b_i).
\]
%\cob
Hence 
$\mu_{-i}(\cdot|a_i) \in \conv \l(C(\Gamma,i,a_i)\r)$.
%\cob
%and hence 
%\bb
Since this holds for all $i \in [n]$, we have
%\cob
$\mu = \eta(\psi,\sigma) \in \cap_{i \in [n]} \conv(C(\Gamma, i))$. This completes the proof.
\end{proof}

%%%%%%%%%%%%%%%%%%%%%%%%%%%%%%%%%%%%%%%%%%%%%%%%%%%%%%%%

For the $2$-person game
%\com
$\Gamma^*$
%\cob
 in example \ref{ex: nonconvergence_calib}, we observed that the set 
 %\com
 $C(\Gamma^*)$
 %\cob
  is non-convex and hence 
%\com
  $C(\Gamma^*) \neq D(\Gamma^*)$. 
%\cob
%\cor
%Need to clarify the preceding sentence. $C(\Gamma)$ is not really discussed in that example.
%\cob
If $\Gamma$ is a $2\times 2$ game, i.e., a game with $2$ players, each having two actions, 
%\cor
and both behaving according to CPT, then \citet{phade2019geometry} prove that the sets $C(\Gamma,i)$, corresponding to both these players are convex, 
%\cob
and hence also the set $C(\Gamma)$. From Lemma~\ref{lem: mediated_corr_eq}, we have the following result, having the flavor of the revelation principle:

\begin{proposition}
	If $\Gamma$ is a $2\times 2$ game, then the set of all CPT correlated equilibria is equal to the set of all mediated CPT correlated equilibria.
\end{proposition}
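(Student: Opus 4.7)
The plan is to reduce the proposition to a direct application of Lemma~\ref{lem: mediated_corr_eq} combined with the convexity result from \citet{phade2017geometry}. Since the inclusion $C(\Gamma) \subset D(\Gamma)$ is already established in the discussion immediately after Definition~\ref{def: mediated_cpt_corr_eq}, the only nontrivial task is to show the reverse inclusion $D(\Gamma) \subset C(\Gamma)$ under the $2 \times 2$ hypothesis.

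First I would invoke the cited result of \citet{phade2017geometry}, which tells us that for a $2 \times 2$ game and each $i \in \{1,2\}$, the set $C(\Gamma, i)$ is convex. Hence $\conv(C(\Gamma, i)) = C(\Gamma, i)$ for each $i \in [n]$. Then I would chain together the characterizations supplied by Lemma~\ref{lem: mediated_corr_eq}, writing
\[
D(\Gamma) \;=\; \bigcap_{i \in [n]} \conv\bigl(C(\Gamma, i)\bigr) \;=\; \bigcap_{i \in [n]} C(\Gamma, i) \;=\; C(\Gamma),
\]
where the first equality is part (iii) of the lemma, the middle equality uses the convexity result just cited, and the last equality is part (ii) of the lemma. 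Combined with the already-noted inclusion $C(\Gamma) \subset D(\Gamma)$, this gives $C(\Gamma) = D(\Gamma)$, which is the claim.

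There is essentially no obstacle here beyond correctly invoking the external convexity result: all of the heavy lifting has already been done in Lemma~\ref{lem: mediated_corr_eq}, whose part (i) is precisely what makes the set-level convexity of each $C(\Gamma, i)$ translate cleanly into $\conv(C(\Gamma,i)) = C(\Gamma,i)$ inside $\Delta(A)$. If one wanted a self-contained argument instead of quoting \citet{phade2017geometry}, the work would shift to showing that for a $2 \times 2$ game each slice $C(\Gamma, i, a_i, \tilde a_i) \subset \Delta(A_{-i})$ is convex; since $\Delta(A_{-i})$ is one-dimensional and $V_i(L_i(\pi_{-i}, a_i)) - V_i(L_i(\pi_{-i}, \tilde a_i))$ can be analyzed as a single-parameter function, one would verify that its sign set is an interval, and then use Lemma~\ref{lem: mediated_corr_eq}(i) to lift this to convexity of $C(\Gamma, i)$. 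Either route concludes the proposition.
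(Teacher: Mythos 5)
Your proposal is correct and is essentially the paper's own argument: the paper likewise cites \citet{phade2017geometry} for the convexity of $C(\Gamma,1)$ and $C(\Gamma,2)$ and then reads off $D(\Gamma)=\cap_i \conv(C(\Gamma,i))=\cap_i C(\Gamma,i)=C(\Gamma)$ from parts (iii) and (ii) of Lemma~\ref{lem: mediated_corr_eq}. The only quibble is that once $C(\Gamma,i)$ is known to be convex, $\conv(C(\Gamma,i))=C(\Gamma,i)$ is immediate and does not require part (i) of the lemma, as you suggest in passing.
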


%\cor
In the context of mediated games, a strategy $\sigma_i$ for player $i$ 
%\cob
is said to be \emph{pure} if $\supp \l(\sigma_i\r)$ is singleton and a strategy profile $\sigma = (\sigma_i)_{i \in [n]}$ is said to be a \emph{pure strategy profile} if each $\sigma_i$ is a pure strategy.

\begin{remark}
\label{rem: rev_pure_strat}
From the proof of Lemma~\ref{lem: mediated_corr_eq}, we observe that for any $\mu \in D(\Gamma)$, there exists a signal system $(B_i)_{i \in [n]}$ 
%\bb
(of size $|B_i| = |A_i| \times |M_i| = |A|$), 
%\cob
a mediator distribution $\psi \in \Delta(B)$, and a mediated CPT Nash equilibrium $\sigma \in \Sigma(\Gamma, (B_i)_{i \in [n]}, \psi)$ such that $\mu = \eta(\psi,\sigma)$ where $\sigma$ is a pure strategy profile.
%\footnote{\red The remark that originally followed this one 
%has been moved to the end of Appendix~\ref{app: eq_notions}
%as suggested. \black}
\end{remark}

%%%%%%%%%%%%%%%%%%%%%%%%%%%%%%%%%%%%%%%%%%%%%%%

 %One of the attractive properties about Aumann's correlated equilibrium is that it has a nice geometry, namely, it is a convex polytope. It has been observed that the geometry of CPT CE as defined by Keskin is much more complicated. In particular, it is not convex. In fact it could also be disconnected (Allerton paper). Another important feature of Aumann's correlated equilibrium is that the joint empirical distribution of action profiles in a repeated game converges to the set of correlated equilibrium of the stage game under certain learning strategies, calibrated learning being one of them. 
 %The set $D(\Gamma)$ preserves these two properties. Indeed, lemma \ref{lem: mediated_corr_eq} shows that $D(\Gamma)$ is convex. In the next section we show that under calibrated learning, the joint empirical distribution of action profiles in a repeated game converges to $D(\Gamma)$.

%%%%%%%%%%%%%%%%%%%%%%%%%%%%%%%%%%%%%%%%%%%%%%%
%%%%%%%%%%%%%%%%%%%%%%%%%%%%%%%%%%%%%%%%%%%%%%%

%!TEX root = ../main.tex

\subsection*{Calibrated learning to mediated CPT correlated equilibrium}
\label{sec: calib_mediated}

Let $\eda^t$ denote the empirical joint distribution of the action play up to step $t$. Formally,
 \[
 	\eda^t = \frac{1}{t} \sum_{\tau = 1}^t e_{a^\tau},
 \]
 where $e_{a^t}$ is an $|A|$-dimensional vector with its $a^t$-th component equal to $1$ and the rest $0$.
 %\cor
%\bb
We write the coordinates of $\eda^t$ as $(\eda^t(a), a \in A)$. 
For each $i \in [n]$, we write $\eda^t_i := (\eda^t_i(a_i), a_i \in A_i)$
for the empirical distribution of the actions of player $i$. Thus $\eda^t_i$ is
the $i$-th marginal distribution corresponding to $\eda^t$. Similarly,
for $i \in [n]$, $\eda^t_{-i} := (\eda^t_{-i}(a_{-i}|a_i), a \in A)$ are conditional
distributions corresponding to $\eda^t$, 
where $\eda_{-i}^t(a_{-i}|a_i)$ is defined to be 0 when $\xi^t(a) = 0$.
%\cob

Let the distance between a vector $x$ and a set $X$ 
%\color{red}
in the same Euclidean space
%\color{black}
be given by
\[
	d(x,X) = \inf_{x' \in X} \Vert x - x' \Vert,
\]
where $\Vert x \Vert$ denotes the standard Euclidean norm of $x$.
%\cob
We say that a sequence $(x^t,t \geq 1)$ converges to a set $X$ if the following holds:
\[
	\lim_{t \to \infty} d(x^t,X) = 0.
\]
%and the distance between two sets $X$ and $X'$ be given by
%\[
%	d(X,X') = \sup_{x' \in X'} d(x',X).
%\]

%%%%%%%%%%%%%%%%%%%%%%%%%%%%%%%%%%%%%%%%%%%%%%%

\begin{theorem}
	\label{thm: calibrated_CPTeq}
	Assume that 
	%all the players use calibrated forecasters to predict assessments and choose the best reaction to these assessments at every step. 
	%\red 
	the assessment schemes and best reaction maps of the players are such that if each player at each step plays the best reaction to her assessment then each player is calibrated with respect to the sequence of action profiles of the other players. 
	%\black
	Then the 
	%joint empirical distribution 
	%\color{red} 
	empirical joint distribution 
	%\color{black}
	of action play $\eda^t$ converges to the set of mediated CPT correlated equilibria.
	%, i.e.,
	%\[
	%	d(\eda^t,D(\Gamma)) \to 0.
	%\]
\end{theorem}

%%%%%%%%%%%%%%%%%%%%%%%%%%%%%%%%%%%%%%%%%%%%%%%%%%%%%%%%%%%%%%%%%%%%%%%%%%%%%%%%%%%%%

\begin{proof}
%[Proof of theorem \ref{thm: calibrated_CPTeq}]
	Consider the sequence of empirical distributions $\eda^t$. Since the simplex $\Delta(A)$ of all joint distributions over action profiles is a compact set, every 
	%\red 
	such %\black 
	sequence has a convergent subsequence. Thus, it is enough to show that the limit of any convergent subsequence of $\eda^t$ is in $D(\Gamma)$. Let $\eda^{t_k}$ be such a convergent subsequence and denote its limit by $\hat \eda$.

%\cor
	Let $a_i$ be an action of  player $i$
 %   \cob
    such that $\hat \eda_i(a_i) > 0$. 
	Let $R_i(a_i) \subset \Delta(A_{-i})$ be the set of all joint distributions $\mu_{-i}$ for which action $a_i$ is player $i$'s %best reaction under CPT preferences.
	%\red 
	best reaction. %\black
	%Let $\hat R_i(a_i)$ be the set of assessments for which $a_i$ is the best response , i.e.,
	%\begin{align*}
	%	\hat R_i(a_i) :=
	%	  	\l\{\mu_{-i} \in \Delta (A_{-i}) \right.| &\forall \tilde a_i \in A_i,\\
	%	  	&\left. V_i^{r_i}\l(L(\mu_{-i},a_i)\r) \geq V_i^{r_i}\l(L(\mu_{-i},\tilde a_i)\r) \r\}.
	%\end{align*}
	%Note that $\hat R_i(a_i)$ is closed and hence also compact and $R_i(a_i) \subset \hat R_i(a_i)$. 
	Note that $R_i(a_i)$ forms a partition of the simplex $\Delta(A_{-i})$. 
	%\com
	%Recall that 
	%\red 
	Let 
	%\black
	$\mu_{-i}^t \in \Delta(A_{-i})$ 
	%denotes 
	%\red 
	denote %\black
	player $i$'s assessment at step $t$, and 
	%\cob
	let $Q_i^{t}$ denote the set of assessments made by her up to step $t$.
	Since $\hat \eda_i(a_i) > 0$, there exists an integer $k_0 \geq 1$ and an $\epsilon > 0$ such that, for all $k \geq k_0$,
	%\red 
	we have 
	%\black
	$\eda_i^{t_k}(a_i) > \epsilon$.
	For all $k \geq k_0$, we have
	\begin{align*}
		\eda^{t_k}_{-i}(a_{-i} | a_i) \eda_i^{t_k}(a_i) t_k  &= \sum_{\substack{\tau \leq t_k \\ \text{s.t. } \mu^\tau_{-i} \in R_i(a_i)}} \1 \{a_{-i}^\tau = a_{-i} \}\\
		&=   \sum_{q \in R_i(a_i) \cap Q_i^{t_k}} \sum_{\substack{\tau \leq t_k \\ \text{s.t. } \mu_{-i}^\tau = q } } \1 \{a_{-i}^\tau = a_{-i} \}\\
		&=  \sum_{q \in R_i(a_i)\cap Q_i^{t_k}} \rho(q,a_{-i},t_k) N(q,t_k)\\
		&=  \sum_{q \in R_i(a_i)\cap Q_i^{t_k}} q(a_{-i}) N(q,t_k) \\
		&+  \sum_{q \in R_i(a_i)\cap Q_i^{t_k}} \l(\rho(q,a_{-i},t_k) - q(a_{-i}) \r) N(q,t_k).
	\end{align*}
	Using
	%\red
	\[
		\eda_i^{t_k}(a_i) t_k = \sum_{q \in R_i(a_i) \cap Q_i^{t_k}} N(q, t_k),
	\]
	%\black
	we get, for all $k \geq k_0$,
	\begin{align*}
		\eda^{t_k}_{-i}(a_{-i} | a_i) &= \frac{\sum_{q \in R_i(a_i)\cap Q_i^{t_k}} q(a_{-i}) N(q,t_k)}{\sum_{q \in R_i(a_i) \cap Q_i^{t_k}} N(q, t_k)} \\
		&+  \frac{1}{\eda_i^{t_k}(a_i)} \sum_{q \in R_i(a_i)\cap Q_i^{t_k}} \l(\rho(q,a_{-i},t_k) - q(a_{-i}) \r) \frac{N(q,t_k)}{t_k}.
	\end{align*}
	Since 
	%the forecast being used by 
	player $i$ is calibrated
	%\red 
	with respect to the sequence of action profiles of her opponents, 
	%\black
	the second term in the last expression goes to zero as $k \to \infty$ (here, we use the fact that $\eda_i^{t_k}(a_i) > \epsilon > 0$ for all $k \geq k_0$). Further, we have, for all $k \geq 1$,
%    \cor
	\[
		\frac{ \sum_{q \in R_i(a_i)\cap Q_i^{t_k}} q N(q,t_k)}{\sum_{q \in R_i(a_i)\cap Q_i^{t_k}} N(q,t_k) } \in \conv \l( R_i(a_i)\r).
	\]
	Taking the limit as $k \to \infty$ we have, $\hat \eda_{-i}(\cdot |a_i) \in \bar{\conv} \l(R_i(a_i)\r)$, where $\bar{\conv}(\cdot)$ denotes the closed convex hull.
    Note that $R_i(a_i) \subset C(\Gamma,i,a_i)$
    and $C(\Gamma,i,a_i)$ is closed. 
 %   \cob
    Thus $\hat \eda_{-i}(\cdot |a_i) \in \conv \l(C(\Gamma,i,a_i)\r)$ for all $a_i \in A_i$ such that $\hat \eda_i(a_i) > 0$. 
    By Lemma~\ref{lem: mediated_corr_eq}, we have $\hat \eda \in \conv \l(C(\Gamma,i)\r)$, and since this is true for all players $i$, we have $\hat \eda \in D(\Gamma)$.
\end{proof}

%%%%%%%%%%%%%%%%%%%%%%%%%%%%%%%%%%%%%%%%%%%%%%%%%%%%%%%%%%%%%%%%%%%%%%%%%%%%%%%%%%%%%
\begin{remark}
\label{rem: calib_conv_individual}
In the proof above we, in fact, prove the following stronger statement: 
If player $i$'s assessments are calibrated 
%\red 
with respect to the sequence of action profiles of her opponents 
%\black
and she 
%chooses best reaction 
%\bb
chooses the best reaction
%\cob 
to 
%these 
%\red 
her 
%\black
assessments at every step, then the joint empirical distribution of action play converges to the set $\conv \l(C(\Gamma,i)\r)$.
\end{remark}

%\red NOTE: There is a problem with the text starting at this point
%and till the end of the "proof" of Proposition 3.10. \black

%\cor
%This establishes the convergence of the empirical distribution of action play 
%\cob
%to the set of mediated CPT correlated equilibria 
%\blue
%It is crucial that each player $i$ are calibrated with respect to the sequence of action profiles of her opponents.
Now the question remains whether 
%the players 
%\red 
each player $i$ 
%\black
can make assessments that are guaranteed to be calibrated no matter what strategies her opponents use.
But this has nothing to do whether the players have 
%CPT preferences or EUT preferences 
%\red 
EUT or CPT preferences, 
%\black
and has been answered 
%\red 
in the affirmative 
%\black
by \citep[Theorem 3]{foster1997calibrated}.
%In particular, 
%\red 
To be precise, 
%\black
%it has been proved that there exists 
%a randomized forecasting scheme 
%for 
%player $i$ 
%such that, no matter what strategies 
%$\strat_j, j \neq i,$ the 
%opponents employ, the assessments of 
%player $i$ are calibrated almost surely.
%Here, by a randomized forecasting scheme we mean the following: At
at each step $t$, the player $i$ predicts a distribution $\mu_{-i}^t \in \Delta(A_{-i})$ by drawing one from a distribution over the space of distributions $\Delta(A_{-i})$, determined by the history $H^{t-1}$
(which we recall is given by the sequence of action profiles
of all the players over the steps up to $t-1$) and 
a random seed $U_i^t$, where the seeds
$(U_i^t, t \ge 1)$ are i.i.d. and independent of the
randomizations, if any, used by the other players.
The rule by which this probability distribution 
is created as a function of $H^{t-1}$ and $U_i^t$ is assumed
to be common knowledge to all the players.
The assessment of player $i$ at step $t$ is then the realization of
this random choice. 
%\black
%The existence of a randomized forecasting scheme follows from the following result on calibrated learning in the Nature-forecaster framework, where we imagine the opponents of player $i$ to play the role of Nature and player $i$ as the forecaster.
%\black
%under calibrated learning. 
%\citet{oakes1985self} proves that there does not exist a deterministic forecasting scheme that is calibrated 
%\cor
%for all sequences played by Nature.
%\cob
%As noted in \citep[Theorem 3]{foster1997calibrated}, however, there exists a randomized forecasting scheme such that, 
%no matter what outcome sequence 
%\red no matter what 
%\blue
%strategy
% \black
%Nature plays, the forecaster is almost surely calibrated. 
%\blue
%Let the forecaster have a randomized forecasting scheme, i.e. at each step $t$, the forecaster predicts a distribution $q^t \in \Delta(S)$ by drawing one randomly from a distribution $p^t$ over the space of distributions $\Delta(S)$,
 %(note that $\Delta(S)$ is a Polish space and hence the space of distributions $\Delta(S)$ is well-defined), 
 %based on the history of Nature's actions $(y^1,\dots,y^{t-1})$.
%Now consider the scenario in which Nature is modeled 
%as an adaptive adversary \citep{foster1998asymptotic}.
%as follows: 
%\red 
Lumping together the opponents of player $i$ as Nature 
from the point of view of this player, 
%\black
at each step $t$, Nature 
%\red 
can be assumed to have access not only to the history
$H^{t-1}$ but also to the realizations of the past seed values
$(U_i^1, \ldots, U_i^{t-1})$, so Nature knows the
assessments of the player $i$ from steps $1$ to $t-1$.
%has access to the history $(q^1, q^2, \dots, q^{t-1}, y^1, \dots, y^{t-1})$, 
%it knows the forecasting scheme used by the forecaster, and hence has access to $(p^1, \dots, p^t)$.
%Note that at step $t$ Nature knows the distribution $p^t$ on forecaster's predictions but does not know the prediction $q^t$.
%\red 
Crucially, while Nature now knows the distribution of the
assessment of player $i$ at time $t$, Nature does not know
the realization of this assessment till the next time step. 
%\black
%Let $\strat_f$ and $\strat_N$ be the strategies used by the forecaster and the Nature respectively and let $P_{\strat_f,\strat_N}$ denote the probability distribution generated by them. 
In this scenario 
%refered 
%\red 
(referred 
%\black
to as 
%an adaptive adversary 
%\red 
the {\em adaptive adversary} scenario 
%\black
in \citet{foster1998asymptotic}), a strategy for Nature 
%comprises of the 
%\red 
is comprised of 
%\black
Nature playing an action 
%$y^t$ 
at step $t$ by drawing one randomly from a distribution on 
%$S$, 
%\red 
her set of actions (i.e. the set $A_{-i}$ of action profiles of the opponents of player
$i$) 
%\black
based on the information available to 
%Nature 
%\red 
her 
%\black
at this 
%stage.
%\red 
step, namely $H^{t-1}$ and $(U_i^1, \ldots, U_i^{t-1})$. 
%\black
The calibrated learning result 
%as proved 
%\red 
proved 
%\black
in \citet{foster1998asymptotic} says that there exists 
%\red 
such 
%\black
a randomized forecasting scheme 
%\red 
on the part of player $i$ 
%\black
such that, no matter what 
%\red 
randomized 
%\black
strategy Nature employs
%\red 
as above, %\black
%the forecaster's calibration score
%\red 
we have %\black
\begin{equation}        \label{eq:calibscore}
	 \sum_{q \in Q^t} | \rho(q,y,t) - q(y) | \frac{N(q,t)}{t} \to 0, \text{ as } t \to \infty,
\end{equation}
for all 
%$y \in S$, 
$y \in A_{-i}$, 
almost surely (over the 
%randomization in the randomized forecasting scheme and 
%\red 
random seeds of player $i$ and the randomization in %\black
Nature's strategy).
%irrespective of the outcome sequence played by the Nature
\footnote{\citet{foster1998asymptotic} prove the existence of a randomized forecasting scheme that 
%\cor
makes the forecaster's calibration score,
%\red 
i.e. the expression in equation \eqref{eq:calibscore}, 
%\black 
tend to zero in probability. 
%\cob
However, as noted in \cite{cesa2006prediction}, the same argument proves that the convergence  of the calibration score holds, in fact, almost surely.}
%\red 
Here, as in equation \eqref{eq:calibrated}, $Q^t$ denotes the
set of probability distributions in $\Delta(A_{-i})$ actually
predicted by player $i$ up to step $t$. 
%\black

Combining this result with theorem \ref{thm: calibrated_CPTeq} we have,
%\black

\begin{corollary}
\label{cor: CPT_corr_convergence}
%\cor
	There exist 
	a
	randomized 
	%calibrated forecasting schemes 
	%\red 
	assessment scheme and a best reaction map %\black
	for each player such that, if each player predicts her assessments according to her scheme and plays the best reaction to her assessments, 
	%\red 
	then it is almost surely true (over the randomization in the %\black
	%\blue
	%choice of the assessment schemes) 
	randomized assessment schemes for the players)
	%\red
	that each player is calibrated with respect to the sequence of action profiles of her opponents, and hence %\black
	the empirical distribution of action play converges 
	%almost surely 
	to the set of mediated CPT correlated equilibria.
 %   \cob
\end{corollary}
\begin{proof}
Let player $i$ be the forecaster and let all the opponents together form Nature from the point of view of the player. 
%\cor
Thus Nature's action set is 
%$S = A_{-i}$. 
%\red 
$A_{-i}$. %\black
By the \citet{foster1998asymptotic} result, 
%\cob
there exists a randomized 
%\red 
assessment %\black
scheme for player $i$ 
%\red 
such that, whatever the randomized strategy that Nature uses,
the sequence of assessments of player $i$
%\black
is calibrated
%\red 
almost surely with respect to the sequence of actions of 
Nature. %\black 
Let player $i$ use 
%this 
%\red 
such a %\black
randomized scheme to 
%predict 
%\red 
determine %\black
her assessments. 
From remark \ref{rem: calib_conv_individual}, it follows that the empirical distribution of play converges to the set $\conv \l(C(\Gamma,i) \r)$ almost surely. 
If 
%each player $i$ 
%\red 
each player %\black
follows such a strategy, then we get almost sure convergence to $D(\Gamma)$.
\end{proof}
%%%%%%%%%%%%%%%%%%%%%%%%%%%%%%%%%%%%%%%%%%%%%%%%%

%\cor
We now show that, in a certain sense, the set $D(\Gamma)$ is the smallest possible extension 
%\cob
of the set $C(\Gamma)$ that guarantees convergence of the empirical distribution 
%of play
%\bb
of action play
%\cob
to this set, 
when all the players have 
%calibrated assessments and play the best reaction to these assessments. 
%\red 
assessment schemes and best reaction maps such that when each player plays the best reaction to her assessment at each step the player is calibrated with respect to the sequence of action profiles of her opponents. %\black
In particular, we claim the following.

\begin{proposition}
\label{prop: med_converse}
%\color{red}
%vacomment: deleted a phrase in the following sentence that seemed irrelevant. Please check.
%\color{black}
	For all games $\Gamma$ 
	%and any $\mu \in D(\Gamma)$,
such that the sets $C(\Gamma, i, a_i), i \in [n], a_i \in A_i$ do not have any isolated points, if $\mu \in D(\Gamma)$, then 
%\cor
%there exists a sequence of actions $(a^t_i)_{t \geq 1}$ and a %sequence of assessments $(\mu_{-i}^t)_{t \geq 1}$, for each player $i \in [n]$,
% such that the 
% \red sequence of assessments 
% of each player $i$ 
% is calibrated
% with respect to the sequence of actions of Nature, represented by the opponents
% of player $i$ playing according to their respective sequences of 
% actions, \black
% and such that the action $a_i^t$ is the best reaction
 %\footnote{
 %Here we ask $a_i^t$ to be one of the best responses to the assessment $\mu_{-i}^t$, instead of a fixed best response for each assessment.
 %} 
% to the assessment $\mu_{-i}^t$, for all $t \geq 1, i \in [n]$, and
%\cob
%\red 
there exists an assessment scheme and a best reaction map for each player such that if each player plays her best reaction to her assessment at each step then each player's assessments are calibrated with respect to the sequence of action profiles of her opponents
and %\black
 the empirical distribution of action play converges to $\mu$. 
\end{proposition}

%\cor
%With respect to the preceding sentence, how do we know that such games exist?
%More generally, some comment should be made about how many such games there are,
%e.g. is the situation generic?
%\cob
The following proposition (proved in Appendix~\ref{app: generic}) shows 
under some technical conditions on the value function of each player
%that. for 
%\blue
that, for 
%\black
%``generic'' 
%\color{red}
generic
%\color{black}
games $\Gamma$, the sets $C(\Gamma, i, a_i)$, $i \in [n], a_i \in A_i$, do not have any isolated points.
For any player $i$, we know that 
%\blue
the 
%\black
value function $v_i^{r_i}(x)$ is a strictly increasing continuous function.
Let the 
%\red 
open %\black
interval $Y_i \subset \bbR$ denote the range of $v_i^{r_i}$, and let $\lambda_i^*$ denote the 
%pushforward 
%\red 
push forward %\black
measure of the Lebesgue measure on $\bbR$ with respect to the function $v_i^{r_i}$.
Let $\hat \lambda_i$ denote the Lebesgue measure restricted to the interval $Y_i$.
We will require that the function $v_i^{r_i}$ is such that $\lambda_i^* \ll \hat \lambda_i$ (i.e., the measure $\lambda_i^*$ is absolutely continuous with respect to the measure $\hat \lambda_i$).
Since the function $v_i^{r_i}$ is strictly increasing, its inverse function $(v_i^{r_i})^{-1} : Y_i \to \bbR$ is well defined. 
We have $\lambda_i^* \ll \hat \lambda_i$ if and only if the function $(v_i^{r_i})^{-1}$ is absolutely continuous.

\begin{proposition}
\label{prop: generic_isolated}
For any fixed 
%preference features 
%\red 
CPT features %\black
$r_i, v_i^{r_i}, w_i^\pm$
such that $(v_i^{r_i})^{-1}$ is absolutely continuous,
and a fixed action set $A_i$ for each of the players $i \in [n]$ 
%\blue
(here, we assume $n > 1$ and $|A_i| > 1, \forall i \in [n]$),  
%\black
%for all of them, 
the set of all games $\Gamma$ for which there exists a player $i \in [n]$ and an action $a_i \in A_i$ such that the set $C(\Gamma, i, a_i)$ has an isolated point is a null set with respect to the Lebesgue measure $\lambda$ on the space of payoffs $(x_i(a), a \in A, i \in [n])$, viewed as an $n \times |A|$-dimensional Euclidean space.
\end{proposition}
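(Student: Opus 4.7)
My plan is to show that the bad payoffs---those for which some $C(\Gamma, i, a_i)$ has an isolated point---form a finite union of Lebesgue null sets. The key structural observation is that for each player $i$ and action $a_i$, the CPT evaluation $g_{i, a_i}(\pi_{-i}) := V_i(L_i(\pi_{-i}, a_i))$ depends on the payoff vector only through the block $(x_i(a_i, a_{-i}))_{a_{-i} \in A_{-i}}$, so the differences $f_{i, a_i, \tilde a_i} := g_{i, a_i} - g_{i, \tilde a_i}$ that cut out $C(\Gamma, i, a_i)$ split across disjoint payoff-coordinate blocks as $\tilde a_i$ varies.

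First I would remove a Lebesgue null set of payoffs to ensure that, for every $i$ and $a_i$, the numbers $\{x_i(a_i, a_{-i}) : a_{-i} \in A_{-i}\} \cup \{r_i\}$ are pairwise distinct. On this open full-measure set the ordering permutation $\alpha$ in~\eqref{eq: CPT_value_discrete} is locally constant in the payoffs, so $g_{i, a_i}(\pi_{-i})$ is continuous in $\pi_{-i}$ and affine in each $v_i^{r_i}(x_i(a_i, a_{-i}))$, with coefficient equal to the decision weight assigned to the position of $a_{-i}$ in the ordering. By the strict monotonicity of $w_i^\pm$, this coefficient is strictly positive whenever $\pi_{-i}(a_{-i}) > 0$. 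Consequently, perturbing a single payoff coordinate $x_i(\tilde a_i^\star, a_{-i}^0)$ produces a strictly monotonic shift of $f_{i, a_i, \tilde a_i^\star}(\pi_{-i})$ at every $\pi_{-i}$ with $\pi_{-i}(a_{-i}^0) > 0$, while leaving all other $f_{i, a_i, \tilde a_i}$ unchanged.

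Next I would run a Fubini argument. Suppose $\pi^*_{-i}$ is an isolated point of $C(\Gamma, i, a_i)$ with nonempty binding set $T := \{\tilde a_i : f_{i, a_i, \tilde a_i}(\pi^*_{-i}) = 0\}$. Pick any $\tilde a_i^\star \in T$ and any $a_{-i}^0 \in \mathrm{supp}(\pi^*_{-i})$, fix every payoff coordinate except $x_i(\tilde a_i^\star, a_{-i}^0)$, and let the latter range over $\mathbb{R}$. By the structural observation above, $f_{i, a_i, \tilde a_i^\star}$ then translates strictly monotonically (pointwise on $\{\pi : \pi(a_{-i}^0) > 0\}$) while the other $f_{i, a_i, \tilde a_i}$ and the simplex geometry remain invariant; hence an isolated point of the intersection can occur only at parameter values for which the moving zero surface $\{f_{i, a_i, \tilde a_i^\star} = 0\}$ enters into a single-point tangency with the stationary zero surfaces of the other binding constraints and the simplex faces. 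The strict monotonic dependence forces this tangent parameter set to be countable in $\mathbb{R}$ (hence Lebesgue null), and Fubini then yields a null set in the full payoff space. A finite union over all choices of $(i, a_i, T, \alpha)$ completes the proof.

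The main obstacle I foresee is rigorously justifying the countability step, since $v_i^{r_i}$ and $w_i^\pm$ are only assumed continuous and strictly increasing (not smooth), so Sard's theorem is unavailable as a black box. I expect this to be handled by exploiting the \emph{affine} dependence of $f_{i, a_i, \tilde a_i^\star}(\pi)$ on $v_i^{r_i}(x_i(\tilde a_i^\star, a_{-i}^0))$ to reformulate the tangency condition as a one-parameter family of compact set intersections in which single-point intersections occur only at isolated parameter values, or alternatively by induction on $|T|$ and the dimension of the supporting face of the simplex, reducing each step to a one-dimensional monotone-shift / intermediate-value argument on the single perturbed coordinate.
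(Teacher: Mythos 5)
Your overall strategy --- Fubini-slice the payoff space along a one-dimensional direction in which one of the inequalities defining $C(\Gamma,i,a_i)$ shifts strictly monotonically, then argue that isolated points can arise only at a null set of values of that parameter --- is essentially the paper's strategy. The difference is the choice of direction: you perturb a single entry $x_i(\tilde a_i^\star, a_{-i}^0)$ of a competing row, whereas the paper shifts the entire row $(x_i(a_i,a_{-i}))_{a_{-i}}$ by a common constant (holding the successive differences fixed), which makes the perturbation act strictly monotonically at \emph{every} $\pi_{-i} \in \Delta(A_{-i})$ and moves \emph{all} the constraints at once, so no case analysis over supports and binding sets is needed. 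Your variant is repairable (every isolated point has a nonempty support and a nonempty binding set, so a finite union over the choices of $(\tilde a_i^\star, a_{-i}^0)$ covers all cases), but it is more delicate because your constraint does not move on the face $\{\pi_{-i}(a_{-i}^0)=0\}$ and the remaining constraints stay fixed.

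The genuine gap is exactly the step you flag: you never establish countability of the exceptional parameter set, and neither of your proposed workarounds clearly delivers it. Made precise, your ``single-point tangency'' picture says that an isolated point $\pi^*$ with $\tilde a_i^\star$ binding occurs at parameter $t$ if and only if $t$ equals the threshold $t^*(\pi^*) := \sup\{t : f_{i,a_i,\tilde a_i^\star}(\pi^*;t)\le 0\}$ and $\pi^*$ is a strict local maximum of $t^*$ restricted to the closed set cut out by the non-moving constraints. What you are missing is the purely topological lemma that a real-valued function on a second-countable space has at most countably many strict local maxima, hence at most countably many strict-local-maximum \emph{values}: inject each strict local maximizer into a pair of rational boxes witnessing the strictness, and observe that two distinct strict local maximizers cannot share the same boxes. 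This is precisely how the paper closes its argument (with strict local minima of its threshold function $F_i^{a_i}$), and it needs no smoothness at all, so your concern about Sard's theorem is a red herring; but without this lemma, or an equally concrete substitute, the proof does not go through. Two smaller points: you need continuity of the threshold in $\pi_{-i}$ (which follows from continuity of $V_i$ in the probability vector) to rule out isolated points at non-binding parameter values, and your ``affine in $v_i^{r_i}(x)$'' structure holds only piecewise once the varying coordinate crosses the other payoffs or the reference point, although strict monotonicity does survive those crossings.
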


\begin{proof}[Proof of Proposition~\ref{prop: med_converse}]
Since $\mu \in D(\Gamma)$, as noted in Remark~\ref{rem: rev_pure_strat}, there exists a signal system $(B_i)_{i \in [n]}$
%\red 
where $B_i$ can be identified with $A_i \times A_{-i}$, %\black
a mediator distribution $\psi \in \Delta(B)$, and a mediated CPT Nash equilibrium $\sigma \in \Sigma(\Gamma, (B_i)_{i \in [n]}, \psi)$ such that $\mu = \eta(\psi,\sigma)$, where $\sigma$ is a pure strategy profile. 
With 
an
abuse of notation, let $\sigma_i(b_i)$ denote the unique element in the support of $\sigma_i(b_i)$.
Let $(b^1,b^2,\dots)$ be a sequence of signal profiles such that the empirical distribution of these signal profiles converges to $\psi$
%\red 
and such that $\psi(b^t) > 0$ for all $t \ge 1$. %\black
At step $t$, let player $i$ predict her assessment $\tilde \mu_{-i}(\cdot|b_i)$ (as defined in equation~\eqref{eq: tilde_mu}) and play $\sigma_i(b_i)$. 
%\cor
%\st{It is easy to see that} 
The 
%assessments of all players are calibrated. 
%\red 
sequence of assessments of each player is calibrated with
respect to the sequence of 
%actions of Nature, represented by the opponents
% of player $i$ playing according to their respective sequences of 
% actions. 
action profiles of her opponents.
 %\black
%\cob
%\cor
%Why is this easy to see? This needs an argument.
%\cob
%\com
To see this, fix a player $i$, 
%and 
let $q \in \Delta(A_{-i})$ be one of the assessments made by her, and let $G = \{b_i \in B_i | \tilde \mu_{-i}(\cdot|b_i) = q\}$. 
Let $t^k(b_i)$ denote the step when player $i$ receives signal $b_i$ for the $k$th time. 
By construction, the empirical average of the action profiles of 
%\red 
the 
opponents of player $i$ %\black
over 
%\red 
the %\black 
steps $(t^k(b_i))_{k \geq 1}$ converges to $\tilde \mu_{-i}(\cdot|b_i)$. 
As a result, the empirical average of the action profiles of 
%\red 
the opponents of player $i$ %\black 
over 
%\red 
the %\black
steps when player $i$ receives a signal $b_i \in G$ converges to $q$. 
Since this holds for any assessment $q$ made by player $i$, her assessments are calibrated.
%\cob
Further, by construction, the empirical distribution of 
%play 
%\red 
action play %\black
converges to $\mu$. 

If $\tilde \mu_{-i}(\cdot|b_i) = \tilde \mu_{-i}(\cdot|\tilde b_i)$ implies $\sigma_i(b_i) = \sigma_i(\tilde b_i)$, for all $b_i,\tilde b_i \in B_i, i \in [n]$, then we can define $\sigma_i(b_i)$ as the best reaction to the assessment $\tilde \mu_{-i}(\cdot|b_i)$ and the claim is proved. 
%Suppose 
%\red 
If %\black
there exist
%\cob
$b_i,\tilde b_i$ such that $\tilde \mu_{-i}(\cdot|b_i) = \tilde \mu_{-i}(\cdot|\tilde b_i)$ but $\sigma_i(b_i) \neq \sigma_i(\tilde b_i)$, 
then there is a problem in defining the best reaction to the assessment $\tilde \mu_{-i}(\cdot|b_i)$.
%It is for this reason that we claim the result for almost all games and not all games. 
%{\color{blue} give a rigorous proof in appendix}
%In Appendix~\ref{app: nonunique_best_react}, 
We now describe a way to get around such a situation,
%\red 
analogous to the scheme used in \cite{foster1997calibrated}
to resolve the same kind of issue. %\black
% and this completes the proof.
Let $\zeta^*_{-i} := \tilde \mu_{-i}(\cdot|b_i) = \tilde \mu_{-i}(\cdot|\tilde b_i)$ and let $a_i^* := \sigma_i(b_i) \neq \sigma_i(\tilde b_i)$. 
By the assumption that the set $C(\Gamma, i, a_i^*)$ does not have any isolated points, 
there exists a sequence $(\hat \zeta_{-i}^l)_{l \geq 1}$ of distinct
probability
distributions in $C(\Gamma, i, a_i^*)$ such that $\hat \zeta_{-i}^l \to \zeta_{-i}^*$ 
and $(\hat \zeta_{-i}^l)_{l \geq 1}$ are all distinct from the distributions $(\tilde \mu_{-i}(\cdot|b_i), \forall b_i \in B_i)$. 
Further, let the sequence 
$(\hat \zeta_{-i}^l)_{l \geq 1}$ 
be such that 
$|\hat \zeta_{-i}^l(a_{-i}) - \zeta^*_{-i}(a_{-i})| < 1/l$, for all $a_{-i} \in A_{-i}$,
%\red 
i.e. $\hat \zeta_{-i}^l$ is within $1/l$ of 
$\zeta^*_{-i}$ in the sup norm, %\black
for all $l \geq 1$. 
%\bb
We will now replace the assessments $\zeta_{-i}^*$ at 
the %\black
steps $(t^k(b_i))_{k \geq 1}$ by 
the %\black
assessments $(\hat \zeta_{-i}^l)_{l \geq 1}$, with each $\hat \zeta_{-i}^l$ repeated sufficiently many times that the empirical distribution of the action profiles of the opponents over the steps that player $i$'s assessment is $\hat \zeta_{-i}^l$ is within $1/l$ of $\zeta_{-i}^*$
%\red 
in the sup norm. %\black
To achieve this, start by replacing the assessment at step $t^1(b_i)$ by $\hat \zeta_{-i}^1$.
Next replace the assessments at steps $t^k(b_i), k = 2, 3, \dots$ with $\hat \zeta_{-i}^2$ until the empirical distribution of the action profiles of the opponents over these steps is within $1/2$ of $\zeta_{-i}^*$
%\red 
in the sup norm. %\black
In general, keep replacing the assessments at steps $t^k(b_i)$ with $\hat \zeta_{-i}^l$ until the empirical distribution of the action profiles of the opponents over these steps is within $1/l$ of $\zeta_{-i}^*$
%\red 
in the sup norm, %\black
and then switch to replacing by $\hat \zeta_{-i}^{l+1}$.
Note that each assessment $\hat \zeta_{-i}^l$ will be used only for a finite number of steps since the empirical distribution of the action profiles of the opponents over the steps $(t^k(b_i))_{k \geq 1}$converges to $\zeta^*_{-i}$.
Thus, the empirical distribution of the action profiles of the opponents over the steps when player $i$ makes assessment $\hat \zeta_{-i}^l$ is within $2/l$ of $\hat \zeta_{-i}^l$
%\red 
in the sup norm. %\black
We know that if a sequence 
%\red 
of probability distributions $(s_t)_{t \geq 1}$ 
on $A_{-i}$ %\black
converges to 
%\red 
a probability distribution $s$ on $A_{-i}$, %\black
then the sequence of the running averages $S_t = (1/t)\sum_{\tau = 1}^t s_\tau, t \geq 1,$ also converges to $s$.
Using this fact, we observe that 
%\red 
the sequence of %\black
player $i$'s assessments 
%continue 
%\red 
continues %\black
to be calibrated 
%\red 
with
respect to the sequence of 
%actions of Nature, represented by the opponents
% of player $i$ playing according to their respective sequences of 
% actions, 
 action profiles of her opponents
 %\black
 even after the above replacement.
Since the assessments $\{\hat \zeta^l_{-i}\}$ are distinct from the assessments $(\tilde \mu_{-i}(\cdot|b_i), \forall b_i \in B_i)$, we can define action $a_i^*$ as the best reaction to $\hat \zeta^l_{-i}$ for all $l \geq 1$.
The above trick can be used to resolve all instances where $\tilde \mu_{-i}(\cdot|b_i) = \tilde \mu_{-i}(\cdot|\tilde b_i)$ but $\sigma_i(b_i) \neq \sigma_i(\tilde b_i)$. 
Each time taking the corresponding sequence $\{\hat \zeta^l_{-i}\}$ distinct from all previously used assessments. 
This solves the problem of defining 
%\red 
the best reaction map of each player %\black
%for all the assessments 
and completes the proof.
\end{proof}

\cob

\section{No-regret learning and CPT correlated equilibrium}
\label{sec: noregrete}

%\red NOTE: There is a problem with the following paragraph.
%The point is that Remark 3.9 assumes that the other players
%are playing in a way that the given player is calibrated with
%respect to their action profiles. This is not what is being
%claimed in the following paragraph, and it is not really clear 
%if what is claimed in the following paragraph is true. \black

The randomized forecasting scheme proposed in \citet{foster1998asymptotic} generates a probability distribution on the space of assessments of player $i$. Player $i$ draws her assessment from this distribution and then plays her best reaction. This two step process gives rise to a randomized strategy for player $i$ at each step. 
%The randomized forecasting scheme assumes that the Nature, at each step $t$, might know only the forecasting rule of the forecaster and not the actual forecast at step $t$ beforehand. Thus, in the context of a repeated game, the opponents could know the randomized strategy of player $i$ at each step but not the actual action played beforehand. The fact that the opponents do not know the action played by player $i$ beforehand is consistent with the underlying assumption that all players play simultaneously. 
%\blue
Together with Remark~\ref{rem: calib_conv_individual} we get that, no matter what strategies the opponents play, 
%\black
player $i$ can guarantee that the empirical distribution of action play 
%\cor
converges almost surely to 
%\cob 
the set $\conv \l(C(\Gamma,i) \r)$.

Under EUT, player $i$ has a strategy that guarantees 
%\cor
the almost sure convergence 
%\cob
of the empirical distribution 
%of play
%\bb
of action play
%\cob
to the set $C(\Gamma,i)$. This convergence is related to the notion of no-regret learning. We now describe this approach. 
Suppose that, at step $t$, player $i$ imagines replacing action $a_i$ by action $\tilde a_i$, every time she played action $a_i$ in the past. 
%\cor
Assuming the actions of the other players
%\cob
did not change, her payoff would become $x_i(\tilde a_i,a_{-i}^\tau)$ for all $\tau \leq t$ such that $a_i^t = a_i$, instead of $x_i(a_i,a_{-i}^\tau)$, while for all $\tau \leq t$ such that $a_i^t \neq a_i$ it will continue to be $x_i(a^t)$. We define the resulting \emph{CPT regret} of player $i$ for having played action $a_i$ instead of action $\tilde a_i$ as
%\cor
 \begin{equation}   \label{eq:CPTregret}
 	K_i^t(a_i,\tilde a_i) := \eda_i^t(a_i) \reg_i\l[\l\{\l(\eda_{-i}^t(a_{-i}|a_i),x_i(\tilde a_i,a_{-i}),x_i(a_i,a_{-i}) \r)\r\}_{a_{-i} \in A_{-i}}\r],
 \end{equation}
 %\cob
 where
 \begin{equation}
 \label{eq: reg_def}
 	\reg_i \l[\l\{ (\nu_l,\hat z_l, z_l)  \r\}_{l=1}^m\r] := V_i\l(\l\{(\nu_l,\hat z_l)\r\}_{l=1}^m \r) - V_i\l(\l\{(\nu_l, z_l)\r\}_{l=1}^m \r)
 \end{equation}
 is the difference in the CPT values of the lotteries $\l\{(\nu_l,\hat z_l)\r\}_{l=1}^m$ and $\l\{(\nu_l,z_l)\r\}_{l=1}^m$.
 We associate player $i$ 
% \cor
 with CPT regrets 
% \cob
 $\l\{K_i^t(a_i,\tilde a_i), a_i,\tilde a_i \in A_i, a_i \neq \tilde a_i\r)\}$ at each step $t$.
 Under EUT, 
% \cor
 this simplifies to
% \cob
 \begin{equation}
 \label{eq: EUT_regret_avg}
 	K_i^t(a_i,\tilde a_i) = \frac{1}{t} \sum_{\tau \leq t : a_i^\tau = a_i} [x_i(\tilde a_i,a_{-i}^\tau) - x_i(a^\tau)],
 \end{equation}
 in agreement with the definition given in \citet{hart2000simple}.

The following proposition shows the connection between regrets and correlated equilibrium.

 \begin{proposition}\label{prop: adpt_reg_eq}
 	Let $(a^t)_{t \geq 1}$ be a sequence of action profiles played by the players. Then $\limsup_{t \to \infty} K_i^t(a_i,\tilde a_i) \leq 0$, for every $i \in [n]$ and every $a_i,\tilde a_i \in A_i, a_i \neq \tilde a_i$, if and only if 
    %\cor
    the sequence of empirical distributions
%\cob    
    $\eda^t$ converges to the set $C(\Gamma)$ of CPT correlated equilibrium.
 	% in the sense that
	%\[
	%	d(\eda^t,C_\epsilon(\Gamma)) \to 0.
	%\] 
 \end{proposition}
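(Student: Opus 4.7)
The plan is to recognize the CPT regret $K_i^t(a_i,\tilde a_i)$ as a continuous function of $\eda^t$, and then to express $C(\Gamma)$ as a sublevel set of these same functions, so that the statement reduces to a standard continuity/compactness argument.

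First I would define, for each $i$ and pair $a_i,\tilde a_i \in A_i$, the function $F_i^{a_i,\tilde a_i} : \Delta(A) \to \mathbb{R}$ by
\[
F_i^{a_i,\tilde a_i}(\mu) := \mu_i(a_i)\bigl[V_i(L_i(\mu_{-i}(\cdot|a_i),\tilde a_i)) - V_i(L_i(\mu_{-i}(\cdot|a_i),a_i))\bigr],
\]
with the convention that this equals $0$ when $\mu_i(a_i)=0$. A small lemma shows $F_i^{a_i,\tilde a_i}$ is continuous on all of $\Delta(A)$: the bracketed CPT difference is bounded (since $V_i$ depends continuously on the probability vector, which lives in the compact simplex), and the prefactor $\mu_i(a_i)$ kills the apparent singularity when $\mu_i(a_i)\to 0$. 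Directly from Definition~\ref{def: CPT_Nash_eq}, $\mu \in C(\Gamma)$ iff $F_i^{a_i,\tilde a_i}(\mu) \le 0$ for every $i,a_i,\tilde a_i$, so setting $F(\mu) := \max_{i,a_i,\tilde a_i} F_i^{a_i,\tilde a_i}(\mu)$ gives a continuous function with $C(\Gamma) = \{F \le 0\}$. The key identification is then $K_i^t(a_i,\tilde a_i) = F_i^{a_i,\tilde a_i}(\eda^t)$, which is immediate from the definitions of $K_i^t$ and of $\reg_i$ in equation (\ref{eq: reg_def}).

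For the ``if'' direction, assume $\eda^t \to C(\Gamma)$, i.e.\ $d(\eda^t,C(\Gamma)) \to 0$. Since $\Delta(A)$ is compact, $F$ is uniformly continuous there, and $F \le 0$ on $C(\Gamma)$; so for every $\varepsilon>0$, eventually $F(\eda^t) < \varepsilon$, giving $\limsup_t K_i^t(a_i,\tilde a_i) \le \limsup_t F(\eda^t) \le 0$ for every triple $(i,a_i,\tilde a_i)$.

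For the ``only if'' direction, suppose $\limsup_t K_i^t(a_i,\tilde a_i) \le 0$ for all $(i,a_i,\tilde a_i)$, which gives $\limsup_t F(\eda^t) \le 0$. If $d(\eda^t, C(\Gamma)) \not\to 0$, then by compactness some subsequence $\eda^{t_k}$ converges to a limit $\hat\eda \notin C(\Gamma)$, so $F(\hat\eda) > 0$; but continuity yields $F(\eda^{t_k}) \to F(\hat\eda)$, contradicting $\limsup_t F(\eda^t) \le 0$. Hence $\eda^t \to C(\Gamma)$.

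The only subtle point — and hence the ``hard part'' such as it is — is verifying that $F_i^{a_i,\tilde a_i}$ extends continuously to the boundary faces where $\mu_i(a_i)=0$, since $\mu_{-i}(\cdot|a_i)$ is not defined there. This is handled by the multiplicative factor $\mu_i(a_i)$ in the CPT-regret definition, which makes the extension by $0$ continuous. Everything else is a direct combination of the definitions and standard compactness arguments on $\Delta(A)$.
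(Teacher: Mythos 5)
Your proposal is correct and follows essentially the same route as the paper: both arguments rest on the continuity of $V_i$ in the probability vector, the compactness of $\Delta(A)$, and a subsequence argument, with your function $F_i^{a_i,\tilde a_i}$ being exactly the limit expression the paper writes for $K_i^{t_k}(a_i,\tilde a_i)$. Your write-up is somewhat more explicit than the paper's (which dismisses the conclusion as ``immediate from the definition''), in particular in handling the boundary case $\mu_i(a_i)=0$ via the vanishing prefactor and in spelling out both directions of the equivalence.
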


%%%%%%%%%%%%%%%%%%%%%%%%%%%%%%%%%%%%%%%%%%%%%%%%%%%%%%%%%%%%%%%%%%%%%%%%%%%%%%%%%%%%%

\begin{proof}
%[Proof of proposition \ref{prop: adpt_reg_eq}]
Since $\Delta(A)$ is a compact set, $\eda^t$ converges to the set $C(\Gamma)$ iff for every convergent subsequence $\eda^{t_k}$, say, converging to $\hat \eda$, we have $\hat \eda \in C(\Gamma)$. Let $\eda^{t_k} \to \hat \eda$ be a convergent subsequence. For each player $i$, and for every $a_i,\tilde a_i \in A_i, a_i \neq \tilde a_i$ such that $\hat \eda_i(a_i) > 0$, we have
%\cor
\begin{equation}    \label{eq:regreteq}
	K_i^{t_k}(a_i,\tilde a_i) \to \hat \eda_i(a_i) \reg_i\l[\l\{\l(\hat \eda_{-i}(a_{-i}|a_i),x_i(\tilde a_i,a_{-i}),x_i(a_i,a_{-i}) \r)\r\}_{a_{-i} \in A_{-i}}\r],
\end{equation}
%\cob
by continuity of $V_i(p,x)$ as a function of the probability vector $p$ for a fixed outcome profile $x$. The result is immediate from the definition of CPT correlated equilibrium.
\end{proof}

%%%%%%%%%%%%%%%%%%%%%%%%%%%%%%%%%%%%%%%%%%%%%%%%%%%%%%%%%%%%%%%%%%%%%%
Player $i$ is said to have a no-regret learning strategy
%her regrets tend to be arbitrarily small almost surely, irrespective of other players' strategies, i.e.
%\bb
if, irrespective of the strategies of the other players,
her regrets satisfy
%\cob
\[
	P \l(\limsup_{t \to \infty} K_i^t(a_i,\tilde a_i) \leq 0\r) = 1, \text{ for every } a_i,\tilde a_i \in A_i, a_i \neq \tilde a_i.
\]
This is equivalent to asking if the vector of regrets $\l(K_i^t(a_i,\tilde a_i), a_i,\tilde a_i \in A_i, a_i \neq \tilde a_i\r))$, converges to the 
%negative orthant 
nonpositive orthant %\black
almost surely. 
This is related to the concept of approachability, the setup for which is as follows.
Consider a repeated two player game, where now at step $t$, if the row player and the column player play actions $\hat a_{row}^t$ and $\hat a_{col}^t$ respectively, then the row player receives a vector payoff $\vec x(\hat a_{row}^t,\hat a_{col}^t)$ instead of a scalar payoff. A subset $S$ is said to be approachable by the row player if she has a (randomized) strategy such that, no matter how the column player plays, we have 
%\black
\[
	\lim_{t \to \infty} d \l(\frac{1}{t} \sum_{\tau = 1}^t \vec x(\hat a_{row}^t,\hat a_{col}^t) , S \r) = 0, \text{ almost surely}.
\]
%\cor
Blackwell's approachability theorem, \cite{blackwell1956}, 
%states 
establishes
that 
%\cob
a convex closed set $S$ is approachable if and only if every halfspace $\hyp$ containing $S$ is approachable. 
%\magenta
%Comment: The preceding is not an accurate description of Blackwell's
%approachability theorem. It is a consequence of the theorem, but the
%statement is really about being ``able to get to the other side of the 
%hyperplane in one step". Please rewrite.
%\black

\citet{hart2000simple} cast the repeated game with stage game $\Gamma$ in the above setup as a two player repeated game where player $i$ is the row player and the opponents together form the column player. Let $\vec x(\hat a_i,\hat a_{-i})$ be the vector payoff when player $i$ plays action $\hat a_i$ 
%\cor
and the others play 
%\cob
$\hat a_{-i}$, with components given by
\[
	\vec x_{a_i,\tilde a_i}(\hat a_i,\hat a_{-i}) = \begin{cases}
		x_i(\tilde a_i, \hat a_{-i}) - x_i(a_i,\hat a_{-i}) &\text{ if } a_i = \hat a_i,\\
		0 &\text{ otherwise},
	\end{cases}
\]
for all $a_i,\tilde a_i \in A_i, a_i \neq \tilde a_i$. 
%\cor
%\cor
%The preceding equation makes no sense. Something called $a_{-i}$
%is appearing on the RHS. Is this supposed to be $\hat{a}_{-i}$.
%Please check this and correct the problem.
%\cob
Under EUT, the average vector payoff 
%\cob
of the row player corresponds to the regret of player $i$ (see equation~\ref{eq: EUT_regret_avg}). \citet{hart2000simple} show that the 
%negative orthant 
%\red 
nonpositive orthant %\black
is approachable for the row player and hence player $i$ has a no-regret learning strategy. Under CPT, if the average vector payoffs were to match 
%with player $i$'s regrets, 
%\red 
the regrets of player $i$, 
%\black
then the vector payoffs for the row player at step $t$ %turn out to 
%\red 
would need to %\black
depend on the empirical distribution of action play up to step $t$. 
Indeed, the component corresponding to the pair $(a_i, \tilde a_i)$ of the vector payoff for the row player at step $t$ when player $i$ plays action $\hat a_i$ and the others play $\hat a_{-i}$ 
%should 
%\red 
would need to %\black
match the difference
\[
	(t+1)K_i^{t+1}(a_i,\tilde a_i) - t K_i^t(a_i,\tilde a_i).
\]
%And this 
This
difference depends on the empirical distribution of action play up to step $t$, and hence in general changes with $t$.
This suggests that there might be difficulties in adapting the approach of \citet{hart2000simple} to study no-regret
learning strategies under CPT.

The following example shows that 
%under CPT, 
under CPT 
approachability 
of the 
%nonnegative orthant 
%\red 
nonpositive orthant %\black
need not hold.
%Blackwell's sufficiency condition \cite[Section 2]{blackwell1956} therefore does not hold with such state dependent payoffs.  
%\footnote{ \red Deleted a sentence here. See the source and 
%confirm 
%if it is okay. \blue This is okay. \black}
In other words, it can happen under CPT that at least one of the players does not have a no-regret learning strategy.
%\magenta
%Comment:
%In the preceding paragraph, the first sentence itself seems to mean
%that one of the players does not have a no-regret learning strategy.
%What is the point of mentioning Blackwell's sufficiency condition?
%\black
%\cob

%\input{tex/ex_nonapproach}
%!TEX root = ../main.tex

%%%%%%%%%%%%%%%%%%%%%%%%%%%%%%%%%%%%%%%%%%%%%%%%%%%%%%%%%%%%%%%%%%%%%%%%%%%%%%%%%%%%%%%%%%%%%%%%%%%%%%%%%%%%%%%%%%%%%%

\begin{example}
\label{ex: nonapproach_CPT}
	Consider the $2$-player repeated game from Example~\ref{ex: nonconvergence_calib}. 
	%Suppose player $1$ plays a no-regret learning strategy. 
	Recall the following distributions on player $2$'s actions: $\sigma_{odd} = (0.5,0,0.5,0), \sigma_{even} = (0,0.5,0,0.5)$ and $\sigma_{unif} = (0.25,0.25,0.25,0.25)$. We observed that player $1$'s action 1 is not a best response to $\sigma_{odd}$ and $\sigma_{even}$ and player $1$'s action 0 is not a best response to $\sigma_{unif}$.
	For an integer $T > 2$, consider the following strategy for player $2$:
	\begin{itemize}
		\item play mixed strategy $\sigma_{odd}$ at step $1$,
		\item play mixed strategy $\sigma_{even}$ at step $2$,
		\item play mixed strategy $\sigma_{odd}$ at steps $2T^k < t \leq T^{k+1}$, for $k \geq 0$,
		\item play mixed strategy $\sigma_{even}$ at steps $T^{k+1} < t \leq 2T^{k+1}$, for $k \geq 0$.
	\end{itemize}
\end{example}
	The rest of this section will be devoted to proving that player $1$ cannot have a no-regret learning strategy.
	In particular, we will prove the following:

\begin{proposition}
     \label{prop: nonconv_reg}
     	In the above example, for a suitable choice of $T, \tilde \delta > 0$ and $\tilde \epsilon > 0$, 
     %\cob
     there exists an integer $k_0$ such that no matter what learning strategy player $1$ uses, for all $k \geq k_0$ we have
	\[
		P\l( \bar K^k > \tilde \epsilon \r) > \tilde \delta,
	\]
	where
	\begin{equation}        \label{eq:3regrets}
		\bar K^k := 
		%[K_1^{\T^{k+1}}(\text{0},\text{1})]^+ 
		 [K_1^{T^{k+1}}(\text{1},\text{0})]^+ 
		+ [K_1^{2T^{k+1}}(\text{0},\text{1})]^+ 
		+ [K_1^{2T^{k+1}}(\text{1},\text{0})]^+,
	\end{equation}
	%and 
	%\red 
	using the notation %\black
	$[\cdot]^+ := \max \{\cdot,0\}$.
	%\red 
	Here, for actions $a_i$ and $\tilde a_i$ of player $1$,
	$K_1^t(a_i,\tilde a_i)$ are the CPT regrets of player $1$ at step $t$, as 
	defined in equation \eqref{eq:CPTregret}. %\black
\end{proposition}
	%By the strong law of large numbers, for any $\delta > 0$, there exists a $t_0 > 1$ such that for all $t \geq t_0$,
	%\[
	%	P(|\eda_2^t(\text{I}) - \eda_2^t(\text{III})| < \delta) > 1 - \delta,
	%\]
	%and
	%\[
	%	P(|\eda_2^t(\text{II}) - \eda_2^t(\text{IV})| < \delta) > 1 - \delta.
	%\]

	 %We now show that for a suitable choice of $\T, \epsilon > 0$ and $\delta > 0$,
     %\bb    
    %\begin{proof}

%%%%%%%%%%%%%%%%%%%%%%%%%%%%%%%%%%%%%%%%%%%%%%%%%%%%%%%%%%%

    Consider the subsequence of steps $(t^l_{odd})_{l \geq 1}$ when player $2$ played $\sigma_{odd}$. Let $\nu^l_{odd}(a_1,a_2)$ denote 
    the empirical distribution 
    %\bb
    over those times
    %\cob
    of
    %\cor
    the action profile $(a_1,a_2)$, where $a_1 \in \{\text{0,1}\}, a_2 \in \{\text{I,III}\}$, 
    %\cob
    i.e.
	\begin{equation}
		\label{eq: nu_odd_def}
		\nu^l_{odd}(a_1,a_2) := \frac{1}{l}\sum_{u = 1}^l \1 \{a^{t^u_{odd}} = (a_1,a_2)\}.
	\end{equation}
	Similarly, consider the sequence of steps $(t^l_{even})_{l \geq 1}$ when player $2$ played
    %\cor
    $\sigma_{even}$. 
    %\cob
    Let $\nu^l_{even}(a_1,a_2)$ denote the empirical distribution
    %\bb
    over those times
    %\cob
    %of the action profile 
    of the action profile 
    $(a_1,a_2)$, where 
    $a_1 \in \{\text{0,1}\}, a_2 \in \{\text{II,IV}\}$, i.e.
	\begin{equation}
		\label{eq: nu_even_def}
		\nu^l_{even}(a_1,a_2) := \frac{1}{l}\sum_{u = 1}^l \1 \{a^{t^u_{even}} = (a_1,a_2)\}.
	\end{equation}

\begin{lemma}
		\label{lem: nu_close}
		For any $\delta > 0$, there exists an integer $l_\delta > 1$, such that for all $l \geq l_\delta$, we have
		\begin{align}
		\label{eq: mart_err_1}
		P\l(|\nu^l_{odd}(\text{0,I})) - \nu^l_{odd}(\text{0,III})| < \delta \r) &> 1 - \delta,\\
		\label{eq: mart_err_2}
		P\l(|\nu^l_{odd}(\text{1,I})) - \nu^l_{odd}(\text{1,III})| < \delta \r) &> 1 - \delta,\\
		\label{eq: mart_err_3}
		P\l(|\nu^l_{even}(\text{0,II})) - \nu^l_{even}(\text{0,IV})| < \delta \r) &> 1 - \delta,\\
		\label{eq: mart_err_4}
		P\l(|\nu^l_{even}(\text{1,II})) - \nu^l_{even}(\text{1,IV})| < \delta \r) &> 1 - \delta.
	\end{align}
\end{lemma}
The proof of Lemma~\ref{lem: nu_close} can be found in Appendix~\ref{app: lemma-nu-close-proof}.

%%%%%%%%%%%%%%%%%%%%%%%%%%%%%%%%%%%%%%%%%%%%%%%%%%%%%%%%%%%%%%%%%%%%%%%%%%%%%%%%%%%%%%%%%%%%%%%%%%%%%%%%%%%%%%%%%%%%%%

For a vector $q \in \bbR^S$ and $\epsilon > 0$, let $[q]_{\epsilon} := \l\{\tilde q \in \bbR^S : |\tilde q(s) - q(s)| < \epsilon, \forall s \in S\r\}$ denote the set of all vectors 
%\red 
strictly %\black
within $\epsilon$ of $q$ in the sup norm.
    %\cor
    %The preceding notation seems to be used later for distributions that are not probability distributions. It should be defined accordingly.
    %\cob
	Select positive constants 
    %$\epsilon_3,c_3,\epsilon_2,c_2,\epsilon_1,c_2$ 
    %\bb
    $\epsilon_3,c_3,\epsilon_2,c_2,\epsilon_1,c_1$ 
    %\cob
    as follows: 
	\begin{itemize}
		\item Let $\epsilon_3 < 1$ and $c_3$ be such that 
		%action $0$ is not a best response to $\sigma_{unif}$ within $\epsilon_3$ error and 
		%\red 
		for the indicated regret we have %\black
		\begin{equation}
		\label{eq: eps_regret_3}
			\reg_1 \l[\l\{ \l(\mu(\cdot),x_1(\text{1},\cdot),x_1(\text{0},\cdot)\r)\r\}\r] > c_3,
		\end{equation}
		for all 
		%\red 
		probability %\black 
		distributions $\mu \in [\sigma_{unif}]_{\epsilon_3}$ (such constants exist because action 0 is not a best response to $\sigma_{unif}$). 
		Let 
		\begin{equation}
		\label{eq: del_3_def}
			\delta_3 := \epsilon_3/2.
		\end{equation}
		%\red 
		Note that $\delta_3 < 1/2$. %\black
		\item Let $\epsilon_2 < 1$ and $c_2$ be such that
		 %action $1$ is not a best response to $\sigma_{even}$ within $\epsilon_2$ error and 
		 %\red 
		 for the indicated regret we have %\black
		 \begin{equation}
		 \label{eq: eps_regret_2}
		 	\reg_1 \l[\l\{\l(\mu(\cdot),x_1(\text{0},\cdot),x_1(\text{1},\cdot)\r)\r\}\r] > c_2,
		 \end{equation}
		  for all 
		  %\red 
		  probability %\black 
		  distributions $\mu \in [\sigma_{even}]_{\epsilon_2}$ (such constants exist because action 1 is not a best response to $\sigma_{even}$). 
		 Let 
		 \begin{equation}
		 \label{eq: del_2_def}
		 	\delta_2 := {\epsilon_2 \delta_3}/{4}.
		 \end{equation}
		 %\red 
		 Note that $\delta_2 < 1/8$. %\black
		\item Let $\epsilon_1 < 0.5$ and $c_1$ be such that
		 %action $1$ is not a best response to $\sigma_{odd}$ within $\epsilon_1$ error and 
		 %\red 
		 for the indicated regret we have %\black 
		 \begin{equation}
		 	\label{eq: eps_regret_1}
		 	\reg_1 \l[\l\{\l(\mu(\cdot),x_1(\text{0},\cdot),x_1(\text{1},\cdot)\r)\r\}\r] > c_1,	
		 \end{equation}
		  for all 
		  %\red 
		  probability %\black
		  distributions $\mu \in [\sigma_{odd}]_{\epsilon_1}$ (such constants exist because action 1 is not a best response to $\sigma_{odd}$). 
		 Let 
\begin{equation}
\label{eq: del_1_def}
	\delta_1 := {\epsilon_1 \delta_2}.
\end{equation}
	%\red 
	Note that $\delta_1 < 1/16$. %\black	 
	\end{itemize}
	Let $T > {2}/{\delta_1}$ and $k_0$ be such that 
	\begin{equation}
	\label{eq: k_min_choose}
		T^{k_0+1} > \max \l\{t^{l_{\delta_1} }_{odd}, t^{l_{\delta_1} }_{odd}, t^{l_{\delta_1} }_{even}, t^{l_{\delta_1} }_{even} \r\},
	\end{equation}
	where $l_{\delta_1}$ is such that the inequalities in Lemma~\ref{lem: nu_close} hold for $\delta = \delta_1$.

%%%%%%%%%%%%%%%%%%%%%%%%%%%%%%%%%%%%%%%%%%%%%%%%%%%%%%%%%%%%%%%%%%%%%%%%%%%%%%%%%%%%%%%%%%%%%%%%%%%%%%%%%%%%%%%%%%%%%%

%Let 
%\red 
For $k \ge 0$, let %\black
$f_1^{k+1}$ denote the fraction of times player $2$ plays $\sigma_{even}$ up to step $t = T^{k+1}$. 
	From the definition of the strategy of player $2$, we have 
	\begin{equation}
	\label{eq: f_1 bound}
		f_1^{k+1} < \frac{2 T^k}{T^{k+1}} = \frac{2}{T}.
	\end{equation} 
	%Let 
	%\red 
	Similarly, for $k \ge 0$, let %\black
	$f_2^{k+1}$ denote the fraction of times player $2$ plays $\sigma_{even}$ up to step $t = 2 T^{k+1}$. We have
	\begin{equation}
	\label{eq: f_2 bound}
		f_2^{k+1} = \frac{T^{k+1} + \frac{T^{k+1} - 1}{T - 1}}{2T^{k+1}} \in \l[\frac{1}{2},\frac{1}{2} + \frac{1}{T}\r],
	\end{equation}
	where the last inclusion follows from the 
	%fact 
	%\red 
	assumption %\black
	that $T > 2$. 
	%(since $T > 2/\delta_1$ 
	%\bb
	%and $\delta_1 = (\epsilon_1 \epsilon_2 \epsilon_3)/8 < 1/8$).
	%\cob
	%\footnote{ \red Deleted the explanation here, since it is unnecessary. It was assumed at the outset that $T > 2$. Please check. \black}
    %\cor
    %The interval on the RHS might be $\l[\frac{1}{2},\frac{1}{2} + \frac{1}{2(\T - 1)}\r]$.
    %\cob
	%Let $T > 1/\epsilon_1$ and let $k_0$ be such that $T^{k+1} > \max\limits_{s = 1,2,3,4} \{t^{l^s_{\epsilon_1} }_{odd} \}$. Note that $k_0$ depends on the constants $T$ and $\epsilon_1$. 
	Note that
	\[
	 	f_2^{k+1}  = 1/2 + f_1^{k+1}/2.
	\] 
	%Let 
	%\red 
	Next, for $k \ge 0$, let %\black
	\begin{equation}        \label{eq:f3}
		f_3^{k+1} := \eda_1^{T^{k+1}}(\text{0}),
	\end{equation}
	%\red 
	i.e. the fraction of times player $1$ plays action $0$ up to step $t = T^{k+1}$,
	and let 
	\begin{equation} \label{eq:f4}
	f_4^{k+1} := 2 \xi_1^{2 T^{k+1}}(0) - \xi_1^{T^{k+1}}(0),
	\end{equation}
	i.e. the fraction of times player $1$ plays action $0$ among the
	steps from $T^{k+1} + 1$ to $2 T^{k+1}$. %\black
	Note that
	%\red 
	$f_3^{k+1}$ and $f_4^{k+1}$ are random variables, %\black
	 in contrast with $f_1^{k+1}$ and $f_2^{k+1}$.

%%%%%%%%%%%%%%%%%%%%%%%%%%%%%%%%%%%%%%%%%%%%%%%%%%%%%%%%%%%%%%%%%%%%%%%%%%%%%%%%%%%%%%%%%%%%%%%%%%%%%%%%%%%%%%%%%%%%%%

	We will establish the proof of Proposition~\ref{prop: nonconv_reg} is stages through several lemmas. 
    In the next couple of paragraphs we first outline our proof strategy.

	Depending on the strategy of player 1, we have two possibilities, either $P(f_3^{k+1} < 1 - \delta_2) > 1/4$ or $P(f_3^{k+1} < 1 - \delta_2) \leq 1/4$.
%	\footnote{ \red Note that the first sentence of this 
%	paragraph has been moved down relative to where it was in the earlier version of the document. Also a couple of imprecise sentences have been deleted in this paragraph and the next, since they are not needed. Please check. \black}
	%Suppose the learning strategy of player $1$ is such that 
	%there is a significant probability 
	%that she plays action 1 until step $T^{k+1}$ 
	%for a fraction of times bounded below by $\delta_2 > 0$. 
	%Then 
	%\red 
	In the former case, %\black
	in Lemma~\ref{lem: emp_rough_cond_prob_1}, we show that the empirical distribution 
	$\eda^{T^{k+1}}(\text{1}, \cdot)$ 
	is restricted to be of a certain type with significant probability, conditioned on 
	$\{f_3^{k+1} < 1 - \delta_2\}$. 
	The purpose of this lemma is to show that the conditional distribution 
	$\eda_{-1}^{T^{k+1}}(\cdot|\text{1})$ 
	is close to $\sigma_{odd}$.
	We explain this in Lemma~\ref{lem: prob_f3_big}, and use it to establish that player $1$ has a significant regret at step $T^{k+1}$ for not having played action 0 whenever she played action 1 up to that step, i.e. $K_1^{T^{k+1}}(\text{1,0})$ is considerable.

	%Now suppose instead, the learning strategy of player $1$ is such that with high probability she plays action 0 for almost all the steps until step $T^{k+1}$.
	%Then 
	%\red 
	In the latter case, %\black
	in Lemma~\ref{lem: emp_prob_2}, we show that the distribution $\eda^{2T^{k+1}}$ is restricted to be of a certain type with significant probability, conditioned on 
	$\{f_3^{k+1} \geq 1 - \delta_2\}$. 
	We then consider two cases depending on 
	%the fraction of times player $1$ plays action 0 from step $T^{k+1} + 1$ to step $2T^{k+1}$.
	%Let $f_4^{k+1}$ denote this fraction.
	%\red 
	$f_4^{k+1}$, which was defined in equation \eqref{eq:f4}.
	%\black
	If $f_4^{k+1}$ is less than $1 - \delta_3$, 
	then, in Lemma~\ref{lem: prob_f3_small_f4_small}, 
	we show that the conditional distribution 
	$\eda_{-1}^{2T^{k+1}}(\cdot|\text{1})$ 
	is similar to $\sigma_{even}$ 
	and hence
	player $1$ suffers from a significant regret at step $2T^{k+1}$ for not having played action 0 whenever she played action 1 up to that step, 
	i.e. 
	%$K_1^{T^{k+1}}(\text{1,0})$ 
	%\red 
	$K_1^{2 T^{k+1}}(\text{1,0})$ %\black
	is considerable.
	If $f_4^{k+1}$ is greater than or equal to $1 - \delta_3$, 
	then, in Lemma~\ref{lem: prob_f3_small_f4_big}, 
	we show that 
	the conditional distribution 
	$\eda_{-1}^{2T^{k+1}}(\cdot|\text{0})$ 
	is similar to $\sigma_{unif}$ 
	and hence
	player $1$ suffers from a significant regret at step $2T^{k+1}$ for not having played action 1 whenever she played action 0 up to that step, 
	i.e. 
	%$K_1^{T^{k+1}}(\text{0,1})$ 
	%\red 
	$K_1^{2 T^{k+1}}(\text{0,1})$ %\black
	is considerable.
	%Then finally we combine 
	%\red 
	Finally, we can combine %\black
	these results to show that player $1$ faces 
	%some or the other regret 
	%\red 
	some regret %\black
	either at step $T^{k+1}$ or $2T^{k+1}$ for all $k \geq k_0$, and hence the regret vector of player $1$ never 
	%converges to $0$.
	%\red 
	converges to the nonpositive orthant. %\black
%%%%%%%%%%%%%%%%%%%%%%%%%%%%%%%%%%%%%%%%%%%%%%%%%%%%%%%%%%%%%%%%%%%%%%%%%%%%%%%%%%%%%%%%%%%%%%%%%%%%%%%%%%%%%%%%%%%%%%
	
	Here are two simple technical lemmas that we will use repeatedly in the 
	%\red 
	rest of the discussion in this section. 
	The proof of each of these lemmas is elementary, and is therefore
	omitted. %\black
	%following lemmas.
	%The proof of these lemmas can be found in Appendix~\ref{sec: app_proofs}.
	\begin{lemma}
		\label{lem: cond_prob_bound_tech}
		If $P(F_1) > \alpha$ and $P(F_2) \geq \beta$ such that $\alpha + \beta > 1$, then
		\[
			P(F_1|F_2) \ge P(F_1 \cap F_2) > \alpha - (1-\beta).
		\]
		\hfill $\square$
	\end{lemma}

	\begin{lemma}
		\label{lem: sum_diff_tech}
		If 
		%\blue
		$\delta > 0$, 
		%\black
		and $x,y, a, b$ are real numbers such that $x+y \in [a,b]$ and 
		%\blue
		$|x - y| < \delta$,
		%\black
		 then
		 %\blue
		\[
			x,y \in ((a - \delta)/2, (b + \delta)/2).
		\]
		%\black
		\hfill $\square$
	\end{lemma}

	Let $E_1^k$ denote the event that the following inclusion holds:
	\begin{equation}		
		\label{eq:exbound}
		\eda^{T^{k+1}}(\text{1},\cdot) \in \l[\l(\frac{1 - f_3^{k+1}}{2},0,\frac{1 - f_3^{k+1}}{2},0\r)\r]_{\delta_1}.
	\end{equation}

	\begin{lemma}
		\label{lem: emp_rough_cond_prob_1}
		%\red 
		Recall that $k_0$ is defined in equation 
		\eqref{eq: k_min_choose}. %\black
		For any $k \geq k_0$, if $P(f_3^{k+1} < 1 - \delta_2) > 1/4$, then
		\begin{equation}
			\label{eq: emp_rough_cond_prob_1}
			P(E_1^k|f_3^{k+1} < 1 - \delta_2) > 1/4 - \delta_1.
		\end{equation}
	\end{lemma}
	\begin{proof}
	Fix $k \geq k_0$. 
	From inequality~\eqref{eq: f_1 bound} and the assumption $T > {2}/{\delta_1}$, we have 
    %$\eda^{T^{k+1}}(\text{1},II) + \eda^{T^{k+1}}(\text{1},III) \leq \frac{2}{T} < \delta_1$,
    %\bb
    %\blue
    \begin{equation}
    	\label{eq: emp_1_even_sum_1_bound}
    	\eda^{T^{k+1}}(\text{1,II}) + \eda^{T^{k+1}}(\text{1,IV}) < \frac{2}{T} < \delta_1,
    \end{equation}
    %\black
    and hence each term is strictly less than $\delta_1$, i.e.
    %\blue
    \begin{equation}
    \label{eq: emp_1_even_small}
     	%\eda^{T^{k+1}}(\text{1,II}),  \eda^{T^{k+1}}(\text{1,IV}) \in [0,\delta_1]
     	\eda^{T^{k+1}}(\text{1,II}),  \eda^{T^{k+1}}(\text{1,IV}) \in [0,\delta_1).
     \end{equation}
     %\black
    %\cob
    Since $k \geq k_0$, from \eqref{eq: k_min_choose} and \eqref{eq: mart_err_2}, for $l := \max\{l: t^l_{odd} \leq T^{k+1}\}$, we have 
    \begin{align*}
    	P\l(|\eda^{T^{k+1}}(\text{1,I}) - \eda^{T^{k+1}}(\text{1,III})| < \delta_1 \r) &= P \l (|\nu^l_{odd}(\text{1,I})) - \nu^l_{odd}(\text{1,III})| (1 - f_1^{k+1}) < \delta_1 \r)\\
    	&\geq P\l(|\nu^l_{odd}(\text{1,I})) - \nu^l_{odd}(\text{1,III})| < \delta_1\r) > 1 - \delta_1,
    \end{align*} 
    % Since $P(f_3^{k+1} < 1 - \delta_2) > 1/4$, we have $P(f_3^{k+1} \geq 1 - \delta_2) \leq 3/4$. 
    % Besides, $P(f_3^{k+1} < 1 - \delta_2) \leq 1$ and $P(E_1 | f_3^{k+1} \geq 1 - \delta_2) \leq 1$ since they are probabilities.
    % If we denote the event 
    % \[
    % 	\l\{|\eda^{T^{k+1}}(\text{1,I}) - \eda^{T^{k+1}}(\text{1,III})| < \delta_1\r\}
    % \] 
    % by $F_1^k$, then
    % \begin{align*}
    % 	P(F_1^k) > 1 -\delta_1 \iff &P(F_1^k | f_3^{k+1} < 1 - \delta_2)P(f_3^{k+1} < 1 - \delta_2)\\
    % 	 &+ P(F_1^k | f_3^{k+1} \geq 1 - \delta_2)P(f_3^{k+1} \geq 1 - \delta_2) > 1 -\delta_1 \\
    % 	 \implies &P(F_1^k | f_3^{k+1} < 1 - \delta_2) + 1 \times (3/4) > 1 -\delta_1.
    % \end{align*}
    % Thus we have,
    In Lemma~\ref{lem: cond_prob_bound_tech}, taking $F_1$ to be the event
    \[
    	\l\{|\eda^{T^{k+1}}(\text{1,I}) - \eda^{T^{k+1}}(\text{1,III})| < \delta_1\r\},
    \] 
    and $F_2$ to be the event $\{f_3^{k+1} < 1 - \delta_2\}$, we have $P(F_1) > 1 - \delta_1$, $P(F_2) > 1/4$. 
    %And since 
    %\red 
    Since %\black
    $\delta_1 < 1/4$, we have 
    \begin{equation}
    \label{eq: prob_eda_1_odd_diff_bound}
    	P\l(|\eda^{T^{k+1}}(\text{1,I}) - \eda^{T^{k+1}}(\text{1,III})| < \delta_1 \Big| f_3^{k+1} < 1 -\delta_2\r) > 1/4 - \delta_1.
    \end{equation}
    Since
    \[
    	\eda^{T^{k+1}}(\text{1,I}) + \eda^{T^{k+1}}(\text{1,II}) + \eda^{T^{k+1}}(\text{1,III}) + \eda^{T^{k+1}}(\text{1,IV}) = 1 - f_3^{k+1},
    \]
    combined with \eqref{eq: emp_1_even_sum_1_bound}, we have
    \begin{equation}
    \label{eq: emp_1_odd_sum_1_bound}
    	\eda^{T^{k+1}}(\text{1,I}) + \eda^{T^{k+1}}(\text{1,III}) \in \l[1 - f_3^{k+1} - \delta_1, 1 - f_3^{k+1}\r].
    \end{equation}
    From \eqref{eq: prob_eda_1_odd_diff_bound}, \eqref{eq: emp_1_odd_sum_1_bound} and Lemma~\ref{lem: sum_diff_tech}, we have
    %\blue
    %\[
    %	P\l( \eda^{T^{k+1}}(\text{1,I}), \eda^{T^{k+1}}(\text{1,III}) \in \l[\frac{1 - f_3^{k+1} - \delta_1}{2} - \frac{\delta_1}{2},  \frac{1 - f_3^{k+1}}{2} + \frac{\delta_1}{2} \r] \Big | f_3^{k+1} < 1 - \delta_2\r) > 1/4 - \delta_1.
    %\]
    \[
    	P\l( \eda^{T^{k+1}}(\text{1,I}), \eda^{T^{k+1}}(\text{1,III}) \in \l(\frac{1 - f_3^{k+1} - \delta_1}{2} - \frac{\delta_1}{2},  \frac{1 - f_3^{k+1}}{2} + \frac{\delta_1}{2} \r) \Bigg | f_3^{k+1} < 1 - \delta_2\r) > 1/4 - \delta_1.
    \]
    %\black
    Combined with \eqref{eq: emp_1_even_small}, we get \eqref{eq: emp_rough_cond_prob_1} and this completes the proof of the lemma.
	\end{proof}

	\begin{lemma}
		\label{lem: prob_f3_big}
		For any $k \geq k_0$, if $P(f_3^{k+1} < 1 - \delta_2) > 1/4$, then
		\begin{equation}
			\label{eq: prob_K_case_1}
			%P(\bar K^k > \delta_2c_1) 
			P\l([K_1^{T^{k+1}}(\text{1,0})]^+ > \delta_2 c_1 \r) 
			> \frac{1}{4}\l(\frac{1}{4} - \delta_1\r).
		\end{equation}
	\end{lemma}
	\begin{proof}
		From \eqref{eq: f_1 bound}, we know that player $2$ plays $\sigma_{odd}$ for 
		%most 
	%\red 
	at least a fraction $1 - \frac{2}{T}$ %\black
		of the steps up to step $t = T^{k+1}$. 
 	Since action 1 is not a best response of player $1$ for $\sigma_{odd}$, 
    we will now show that, if player $1$ does not play action 0
    for a sufficiently high fraction of steps up to step $t = T^{k+1}$, then she will have a 
    %considerably large 
    %\red 
    significant %\black
    regret $K_1^{T^{k+1}}(\text{1,0})$. 
	More precisely, for any $k \geq k_0$, if $f_3^{k+1} < 1 - \delta_2$ and the inclusion 
	\eqref{eq:exbound} holds, then
	%\red 
	we can write %\black
	\begin{align*}
		&\eda_{-1}^{T^{k+1}}(\cdot | \text{1}) \eda_{1}^{T^{k+1}}(\text{1}) 
		\in \l[\l(\frac{1 - f_3^{k+1}}{2},0,\frac{1 - f_3^{k+1}}{2},0\r)\r]_{\delta_1}\\
		&\iff \eda_{-1}^{T^{k+1}}(\cdot | \text{1}) (1 - f_3^{k+1}) \in 
		\l[\l(\frac{1 - f_3^{k+1}}{2},0,\frac{1 - f_3^{k+1}}{2},0\r)\r]_{\delta_1}\\
		&\iff \eda_{-1}^{T^{k+1}}(\cdot | \text{1}) 
		\in 
		\l[\l(\frac{1}{2},0,\frac{1}{2},0\r)\r]_{\delta_1/(1 - f_3^{k+1})} \\
		&\implies \eda_{-1}^{T^{k+1}}(\cdot | \text{1}) \in [\sigma_{odd}]_{\frac{\delta_1}{\delta_2}}.
	\end{align*}
	Hence, from \eqref{eq: del_1_def} and \eqref{eq: eps_regret_1}, 
	we have 
	%\red
	%$\frac{\delta_1}{\delta_2 - 2\delta_1} < \epsilon_1$ and hence
	 %and since $\epsilon_1 << \epsilon_2$, 
	 %the regret 
	 \[
	 	K_1^{T^{k+1}}(\text{1,0}) = \eda_1^{T^{k+1}}(\text{1}) \reg_1\l[\l\{\l(\eda_{-1}^{T^{k+1}}(\cdot | \text{1}), x_1(\text{0},\cdot), x_1(\text{1}, \cdot)\r)\r\}\r] >  \delta_2 c_1,
	 \]
	 on the event where $f_3^{k+1} < 1 - \delta_2$ and the inclusion 
	\eqref{eq:exbound} holds.
	%\black
	%\footnote{ \red In the equation above, the order of the terms in the regret expression on the right hand side needed to be interchanged. Please check. \black}
	 %for some fixed constant $c_1 > 0$ dependent on $\epsilon_1$ (because action $1$ is not a best response to $\sigma_{odd}$).
	%Let $\1\{\cdot\}$ be the indicator random vector that takes value $1$ if the expression inside the brackets is true and $0$ otherwise. 
	Thus for any $k \geq k_0$, if $P(f_3^{k+1} < 1 - \delta_2) > 1/4$, then we have
	\begin{align*}
		%P(\bar K^k > \delta_2c_1) &\geq 
		&P\l([K_1^{T^{k+1}}(\text{1,0})]^+ > \delta_2 c_1 \r) \\
		&=  P\l([K_1^{T^{k+1}}(\text{1,0})]^+ > \delta_2 c_1 \Big | f_3^{k+1} < 1 - \delta_2\r)P(f_3^{k+1} < 1 - \delta_2)\\
		&\geq P\l(E_1^k | f_3^{k+1} < 1 -\delta_2 \r)P(f_3^{k+1} < 1 - \delta_2)\\
		&> \frac{1}{4}\l(\frac{1}{4} - \delta_1\r),
	\end{align*}
	where the 
	%second last inequality 
	%\red 
	last but one inequality %\black
	follows from the fact that 
	$E_1^k$ and $\{f_3^{k+1} < 1 - \delta_2\}$ imply $[K_1^{T^{k+1}}(\text{1,0})]^+ > \delta_2 c_1$, 
	and the
	last inequality follows from the condition $P(f_3^{k+1} < 1 - \delta_2) > 1/4$ and Lemma~\ref{lem: emp_rough_cond_prob_1}. 
	\end{proof}

Consider now the 
%\red 
probability %\black 
distribution $\hat \mu$ 
%as shown 
%\red 
shown %\black
in Table~\ref{tab: 2x4 game non approach 2}.
Recall that $f_4^{k+1}$ is the fraction of times player $1$ plays action 0 
%from step 
%\red 
among the steps from step %\black
$T^{k+1} + 1$ 
to step $2T^{k+1}$. 
%\red 
Note that, since $f_4^{k+1}$ is a random variable, 
so is $\hat \mu$. %\black

%We now show that for any $k \geq k_0$, the empirical distribution at step $2T^{k+1}$ is approximately as shown in Table~\ref{tab: 2x4 game non approach 2} within $\delta_2$ error, 
%i.e. $\eda^{2T^{k+1}} \in [\hat \mu]_{\delta_2}$, with probability greater than $(1/4 - 3\delta_1)$, conditioned on $\{f_3^{k+1} \geq 1 - \delta_2\}$. 

%%%%%%%%%%%%%%%%%%%%%%%%%%%%%%%%%%%%%%%%%%%%%%%%%%%%%%%%%%%

 \begin{table}
 \centering
 \begin{tabular}{c | c | c | c | c |}
 	 \multicolumn{1}{c}{}	& \multicolumn{1}{c}{I}   &  \multicolumn{1}{c}{II}  &  \multicolumn{1}{c}{III}  &  \multicolumn{1}{c}{IV} \\
 	 \cline{2-5}
 	0 & $0.25$  &  $0.25 f_4^{k+1} $  &  $0.25$  & $0.25 f_4^{k+1}$\\
 	 \cline{2-5}
 	 1	& $0$ & $0.25(1 - f_4^{k+1})$ & $0$ & $0.25(1 - f_4^{k+1})$\\
 	 \cline{2-5}
 	 \end{tabular} 
 	 \caption{Empirical distribution $\hat \mu$ in example~\ref{ex: nonapproach_CPT}.}\label{tab: 2x4 game non approach 2}
 \end{table}

%%%%%%%%%%%%%%%%%%%%%%%%%%%%%%%%%%%%%%%%%%%%%%%%%%%%%%%%%%%

\begin{lemma}
\label{lem: emp_prob_2}
	For all $k \geq k_0$, if $P(f_3^{k+1} < 1 - \delta_2) \leq 1/4$, then 
	%\red
		 \begin{equation}
	 \label{eq: cond_prob_second}
	 	P(\eda^{2T^{k+1}} \in [\hat \mu]_{\delta_2} | f_3^{k+1} \geq 1 - \delta_2) > 1/4 - 3\delta_1.
	 \end{equation}
	 %\black
	 %\red 
	 We also recall that $\delta_1 < 1/16$, so the lower bound in \eqref{eq: cond_prob_second} is strictly positive. %\black
\end{lemma}
%\footnote{ \red The inequality has been changed into a strict inequality. As a general remark, you are incredibly sloppy about the distinction between strict and weak inequality even in places where it actually matters (unlike here), as you can see by looking at some of the changes that are coming up later. If you continue to do mathematical work in future this lack of concern for corner cases may lead to nontrivial mistakes. \black}
\begin{proof}
	Since player $2$ plays $\sigma_{even}$ from step $T^{k+1}+1$ to step $2 T^{k+1}$, 
	 if $f_3^{k+1} \geq 1 - \delta_2$, 
	%$f_3^{k+1} \geq 1 - \delta_2$, 
	 then 
     %the sum $\eda^{2T^{k+1}}(\text{1},I) + \eda^{2T^{k+1}}(\text{1},III) < \delta_2/2$
     %\com
     \begin{equation}
     \label{eq: emp_1_odd_bound_up}
     	\eda^{2T^{k+1}}(\text{1,I}) + \eda^{2T^{k+1}}(\text{1,III}) \leq \eda^{T^{k+1}}_1(1)/2 = (1-f_3^{k+1})/2
     %- \frac{1}{2} (1 - f_4^{k+1}) 
     \leq \delta_2/2.
     \end{equation}
     %\cob
     %And hence 
     %\red 
     This means that %\black
     each term is strictly less than $\delta_2$, 
     %i.e.
     %\red 
     so we have %\black
     %\blue
     \begin{equation}
     \label{eq: eda_1_odd_bound}
     	\eda^{2T^{k+1}}(\text{1,I}), \eda^{2T^{k+1}}(\text{1,III}) \in [0,\delta_2).
     \end{equation}
     %\black
     %\cor
 	 %I have not checked the rest of the proof, because of the issue in red above.
     %I also note that you don't seem to be estimating $[K_1^{\T^{k+1}}(\text{0},\text{1})]^+$ anywhere in this proof and that the issue of what you actually mean by the indicator functions is ambiguous in the estimates for
     %$[K_1^{\T^{k+1}}(\text{1},\text{0})]^+$ (above) and for 
     %$[K_1^{2 \T^{k+1}}(\text{0},\text{1})]^+$ and $[K_1^{2 \T^{k+1}}(\text{1},\text{0})]^+$ (below).
     %\cob
	Further, from equation \eqref{eq: f_2 bound} and the assumption $T > 2/\delta_1$, 
	we have 
	\begin{align*}
		\eda^{2T^{k+1}}(\text{0,I}) &+ \eda^{2T^{k+1}}(\text{0,III}) 
	+ \eda^{2T^{k+1}}(\text{1,I}) + \eda^{2T^{k+1}}(\text{1,III}) \\
	&= 1 - f_2^{k+1} 
	\in [0.5 - 1/T, 0.5] \subset [0.5 - \delta_1, 0.5].
	\end{align*}
	%Combined 
	%\red 
	Combining this %\black
	with \eqref{eq: emp_1_odd_bound_up}, we have
	\begin{equation}
	\label{eq: emp_0_odd_sum_2}
		\eda^{2T^{k+1}}(\text{0,I}) + \eda^{2T^{k+1}}(\text{0,III})  
	\in [0.5 - \delta_1 - \delta_2/2, 0.5],
	\end{equation}
	%\red 
	on the event where $f_3^{k+1} \geq 1 - \delta_2$. %\black
	Since $k \geq k_0$, from \eqref{eq: k_min_choose} and \eqref{eq: mart_err_1}, for $l := \max\{l: t^l_{odd} \leq 2T^{k+1}\}$, we have 
	%since $\delta_1(1 - f_2^{k+1}) < \delta_1$,
	%\blue
	\begin{align*}
		P\l(|\eda^{2T^{k+1}}(\text{0,I}) - \eda^{2T^{k+1}}(\text{0,III})| < \delta_1\r) 
		&= P\l(|\nu^l_{odd}(\text{0,I})) - \nu^l_{odd}(\text{0,III})|(1 - f_2^{k+1})  < \delta_1 \r) \\
		&\geq P\l(|\nu^l_{odd}(\text{0,I})) - \nu^l_{odd}(\text{0,III})|  < \delta_1 \r) > 1 - \delta_1.
	\end{align*} 
	%\black
	% Denoting the event 
	% \[
	% 	\l\{|\eda^{2T^{k+1}}(\text{0,I}) - \eda^{2T^{k+1}}(\text{0,III})| < \delta_1\r\}
	% \]
	% by $F_2^{k+1}$,
	% we have
	% \begin{align*}
	% 	P(F_2^{k+1}) > 1 - \delta_1 \iff 
	% 	&P(F_2^{k+1}|f_3^{k+1} \geq 1 -\delta_2)P(f_3^{k+1} \geq 1 -\delta_2) \\
	% 	&+ P(F_2^{k+1}|f_3^{k+1} < 1 -\delta_2)P(f_3^{k+1} < 1 -\delta_2) 
	% 	> 1 - \delta_1 \\
	% 	\implies &P(F_2^{k+1}|f_3^{k+1} \geq 1 -\delta_2) + 1 \times (1/4) > 1 - \delta_1.
	% \end{align*}
	In Lemma~\ref{lem: cond_prob_bound_tech}, taking $F_1$ to be the event
	%\blue
    \[
	 	\l\{|\eda^{2T^{k+1}}(\text{0,I}) - \eda^{2T^{k+1}}(\text{0,III})| < \delta_1\r\}
	\]
	%\black
    and $F_2$ to be the event $\{f_3^{k+1} \geq 1 - \delta_2\}$, we have $P(F_1) > 1 - \delta_1$, $P(F_2) \geq 3/4$. 
    %And since 
    %\red 
    Since %\black
    $\delta_1 < 1/4$, we have 
    %\blue
	\begin{equation}
	\label{eq: emp_0_odd_diff_2}
		P\l(|\eda^{2T^{k+1}}(\text{0,I}) - \eda^{2T^{k+1}}(\text{0,III})| < \delta_1 \Big| f_3^{k+1} \geq 1 - \delta_2\r) > 3/4 - \delta_1.
	\end{equation}
	%\black
	Form \eqref{eq: emp_0_odd_sum_2}, \eqref{eq: emp_0_odd_diff_2} and Lemma~\ref{lem: sum_diff_tech}, we have
	%\blue
	\begin{align*}
		P\l( \eda^{2T^{k+1}}(\text{0,I}), \eda^{2T^{k+1}}(\text{0,III}) \in (0.25 - \delta_1 - \delta_2/4, 0.25 + \delta_1/2) \Big | f_3^{k+1} \geq 1 - \delta_2 \r) > 3/4 - \delta_1.
	\end{align*}
	%\black
	%\red 
	Here we note that $0.25 - \delta_1 - \delta_2/4 > 0$.
	%\black
	% As a result, each of the two terms, $\eda^{2T^{k+1}}(\text{0,I})$ and $\eda^{2T^{k+1}}(\text{0,III})$, 
	% lie in the interval $[0.25 - \delta_1 - \delta_2/4, 0.25 + \delta_1]$, 
	% with probability greater than $(3/4 - \delta_1)$, 
	% conditioned on $\{f_3^{k+1} \geq 1 - \delta_2\}$. 
	 Since $\epsilon_1 < 0.5$ and $\delta_1 = \epsilon_1 \delta_2$, we have 
	 %\blue
	 \begin{equation}
	 \label{eq: eda_0_odd_bound}
	 	P\l( \eda^{2T^{k+1}}(\text{0,I}), \eda^{2T^{k+1}}(\text{0,III}) \in (0.25 - \delta_2,0.25 + \delta_2) \Big | f_3^{k+1} \geq 1 - \delta_2 \r) > 3/4 - \delta_1.
	 \end{equation}
	 %\black
	 From \eqref{eq: f_1 bound} and the assumption $T > 2/\delta_1$, we have
	 \begin{align}
	 \label{eq: emp_0_even_sum_bound}
	 	\eda^{2T^{k+1}}(\text{0,II}) + \eda^{2T^{k+1}}(\text{0,IV}) &\in [0.5f_4^{k+1}, 0.5f_4^{k+1} + 0.5f_1^{k+1}] \nonumber\\
	 	&\in [0.5f_4^{k+1}, 0.5f_4^{k+1} + \delta_1].
	 \end{align}
	 Since $k \geq k_0$, from \eqref{eq: k_min_choose} and \eqref{eq: mart_err_3}, for $l := \max\{l: t^l_{even} \leq 2T^{k+1}\}$, we have 
	 %\blue
	  \begin{align*}
		P\l(|\eda^{2T^{k+1}}(\text{0,II}) - \eda^{2T^{k+1}}(\text{0,IV})| < \delta_1\r) 
		&= P\l(|\nu^l_{even}(\text{0,II})) - \nu^l_{even}(\text{0,IV})|(f_2^{k+1})  < \delta_1 \r) \\
		&\geq P\l(|\nu^l_{even}(\text{0,II})) - \nu^l_{even}(\text{0,IV})|  < \delta_1 \r) > 1 - \delta_1.
	\end{align*} 
	%\black
% 	Using the bound $P(f_3^{k+1} < 1 - \delta_2) \leq 1/4$, and denoting the event 
% \[
% 		\l\{|\eda^{2T^{k+1}}(\text{0,I}) - \eda^{2T^{k+1}}(\text{0,III})| < \delta_1\r\}
% 	\]
% 	by $F_3^{k+1}$,
% 	we have
% 	\begin{align*}
% 		P(F_3^{k+1}) > 1 - \delta_1 \iff 
% 		&P(F_3^{k+1}|f_3^{k+1} \geq 1 -\delta_2)P(f_3^{k+1} \geq 1 -\delta_2) \\
% 		&+ P(F_3^{k+1}|f_3^{k+1} < 1 -\delta_2)P(f_3^{k+1} < 1 -\delta_2) 
% 		> 1 - \delta_1 \\
% 		\implies &P(F_3^{k+1}|f_3^{k+1} \geq 1 -\delta_2) + 1 \times (1/4) > 1 - \delta_1.
% 	\end{align*}
	  %and since $|\eda^{2T^{k+1}}(\text{0},II) - \eda^{2T^{k+1}}(\text{0},IV)| < \delta_1$ with probability at least $1 - \delta$, we have 
	  In Lemma~\ref{lem: cond_prob_bound_tech}, taking $F_1$ to be the event
	 % \blue
    \[
		\l\{|\eda^{2T^{k+1}}(\text{0,II}) - \eda^{2T^{k+1}}(\text{0,IV})| < \delta_1\r\}
 	\]
 	%\black
    and $F_2$ to be the event $\{f_3^{k+1} \geq 1 - \delta_2\}$, we have $P(F_1) > 1 - \delta_1$, $P(F_2) \geq 3/4$. 
    %And since
    %\red 
    Since %\black
    $\delta_1 < 1/4$, we have 
    %\blue
	\begin{equation}
	\label{eq: emp_0_even_diff_3}
		P\l(|\eda^{2T^{k+1}}(\text{0,II}) - \eda^{2T^{k+1}}(\text{0,IV})| < \delta_1 \Big | f_3^{k+1} \geq 1 - \delta_2 \r) > 3/4 - \delta_1.
	\end{equation}
	%\black
	From \eqref{eq: emp_0_even_sum_bound}, 
	%\eqref{eq: emp_0_even_sum_bound} 
	%\red 
	\eqref{eq: emp_0_even_diff_3} %\black
	and Lemma~\ref{lem: sum_diff_tech}, we have
	%\blue
	  \begin{align}
	  	\label{eq: eda_0_even_bound}
	  	P\l(\eda^{2T^{k+1}}(\text{0,II}), \eda^{2T^{k+1}}(\text{0,IV}) \in (0.25 f_4^{k+1} - \delta_1, 0.25 f_4^{k+1} + \delta_1) \Big | f_3^{k+1} \geq 1 - \delta_2 \r) > 3/4 - \delta_1,
	  \end{align}
	  %\black
	  %\red 
	  Note that here $0.25 f_4^{k+1} - \delta_1$ could be negative. %\black
	%Similarly, we get 
	From \eqref{eq: f_1 bound} and the assumption $T > 2/\delta_1$, we have
	 \begin{align}
	 \label{eq: emp_1_even_sum_bound_2}
	 	\eda^{2T^{k+1}}(\text{1,II}) + \eda^{2T^{k+1}}(\text{1,IV}) &\in [0.5(1-f_4^{k+1}), 0.5(1-f_4^{k+1}) + 0.5f_1^{k+1}] \nonumber\\
	 	&\in [0.5(1-f_4^{k+1}), 0.5(1-f_4^{k+1}) + \delta_1].
	 \end{align}
	 Since $k \geq k_0$, from \eqref{eq: k_min_choose} and \eqref{eq: mart_err_4}, for $l := \max\{l: t^l_{even} \leq 2T^{k+1}\}$, we have
	 %\blue
	  \begin{align*}
		P\l(|\eda^{2T^{k+1}}(\text{1,II}) - \eda^{2T^{k+1}}(\text{1,IV})| < \delta_1\r) 
		&= P\l(|\nu^l_{even}(\text{1,II})) - \nu^l_{even}(\text{1,IV})|(f_2^{k+1})  < \delta_1 \r) \\
		&\geq P\l(|\nu^l_{even}(\text{1,II})) - \nu^l_{even}(\text{1,IV})|  < \delta_1 \r) > 1 - \delta_1.
	\end{align*} 
	%\black
In Lemma~\ref{lem: cond_prob_bound_tech}, taking $F_1$ to be the event
%\blue
    \[
	 	\l\{|\eda^{2T^{k+1}}(\text{1,II}) - \eda^{2T^{k+1}}(\text{1,IV})| < \delta_1\r\}
	\]
	%\black
    and $F_2$ to be the event $\{f_3^{k+1} \geq 1 - \delta_2\}$, we have $P(F_1) > 1 - \delta_1$, $P(F_2) \geq 3/4$. 
    %And since 
    %\red 
    Since %\black 
    $\delta_1 < 1/4$, we have 
    %\blue
	\begin{equation}
	\label{eq: emp_1_even_diff_2}
		P\l(|\eda^{2T^{k+1}}(\text{1,II}) - \eda^{2T^{k+1}}(\text{1,IV})| < \delta_1 \Big| f_3^{k+1} \geq 1 - \delta_2\r) > 3/4 - \delta_1.
	\end{equation}
	%\black
	Form \eqref{eq: emp_1_even_sum_bound_2}, \eqref{eq: emp_1_even_diff_2} and Lemma~\ref{lem: sum_diff_tech}, we have
	%\blue
	\begin{align}
	\label{eq: eda_1_even_bound}
		&P\l( \eda^{2T^{k+1}}(\text{1,II}), \eda^{2T^{k+1}}(\text{1,IV}) \in (0.25 (1-f_4^{k+1}) - \delta_1, 0.25 (1-f_4^{k+1}) + \delta_1) \Big | f_3^{k+1} \geq 1 - \delta_2 \r) \nonumber \\
		&> 3/4 - \delta_1.
	\end{align}	
	%\black
	%\red 
	Note that $0.25 (1-f_4^{k+1}) - \delta_1$ could be negative. %\black
	From \eqref{eq: eda_0_odd_bound}, \eqref{eq: eda_0_even_bound}, \eqref{eq: eda_1_even_bound} and \eqref{eq: eda_1_odd_bound} we get \eqref{eq: cond_prob_second}, 
	and this completes the proof.
\end{proof}

We now consider 
%the two scenarios 
%\red 
two scenarios %\black
based on 
%\red 
whether %\black
$f_4^{k+1} < 1 - \delta_3$ or $f_4^{k+1} \geq 1 - \delta_3$.

\begin{lemma}
	\label{lem: prob_f3_small_f4_small}
	For any $k \geq k_0$, if $f_4^{k+1} < 1 - \delta_3$ and $\eda^{2T^{k+1}} \in [\hat \mu]_{\delta_2}$, then 
	%\red
	%the regret 
	$K_1^{2T^{k+1}}(\text{1,0}) > 0.5\delta_3 c_2$. 
	%\black
\end{lemma}
\begin{proof}
	If $f_4^{k+1} < 1 - \delta_3$, then $\eda^{2T^{k+1}} \in [\hat \mu]_{\delta_2}$ implies that $\eda_{-1}^{2T^{k+1}}(\cdot | \text{1}) \in [\sigma_{even}]_{{(2\delta_2)}/{(0.5\delta_3)}}$. 
	 Indeed, since $\eda_1^{2T^{k+1}}(\text{1}) \geq (1 - f_4^{k+1})/2 > 0.5\delta_3$, 
	 normalizing $\eda^{2T^{k+1}}(\text{1},\cdot)$ by $\eda_1^{2T^{k+1}}(\text{1})$, we get 
	 %\blue
	 \begin{equation}
	 \label{eq: emp_cond_1_odd_bound}
	 	\eda_{-1}^{2T^{k+1}}(\text{I}|\text{1}), \eda_{-1}^{2T^{k+1}}(\text{III}|\text{1}) \in [0,{2 \delta_2}/{\delta_3}),
	 \end{equation}
	 %\black
	 and 
	 %\blue
	 \begin{equation}
	 \label{eq: emp_cond_1_even_diff_bound}
	 	|\eda_{-1}^{2T^{k+1}}(\text{II}|\text{1}) - \eda_{-1}^{2T^{k+1}}(\text{IV}|\text{1})| < \frac{4\delta_2}{\delta_3}. 
	 \end{equation}
	 %\black
	 Since 
	 \[
	 	\eda_{-1}^{2T^{k+1}}(\text{I}|\text{1}) + \eda_{-1}^{2T^{k+1}}(\text{II}|\text{1}) + \eda_{-1}^{2T^{k+1}}(\text{III}|\text{1}) + \eda_{-1}^{2T^{k+1}}(\text{IV}|\text{1}) = 1,
	 \]
	 we have,
	 %\red
	 \begin{equation}
	 \label{eq: emp_cond_1_even_sum_bound}
	 	\eda_{-1}^{2T^{k+1}}(\text{II}|\text{1}) + \eda_{-1}^{2T^{k+1}}(\text{IV}|\text{1}) \in \l[1 - \frac{4\delta_2}{\delta_3}, 1\r].
	 \end{equation}
	 %\black
	 From 
	 %\eqref{eq: emp_cond_1_even_sum_bound}, 
	 %\eqref{eq: emp_cond_1_even_diff_bound} 
	 %\red
	 \eqref{eq: emp_cond_1_even_diff_bound},
	 \eqref{eq: emp_cond_1_even_sum_bound}
	 %\black
	 and Lemma~\ref{lem: sum_diff_tech}, we have
	 %\blue
	 \begin{equation}
	 	\label{eq: emp_cond_1_even_bound}
	 	\eda_{-1}^{2T^{k+1}}(\text{II}|\text{1}), \eda_{-1}^{2T^{k+1}}(\text{IV}|\text{1}) \in \l(\frac{1}{2}  - \frac{4\delta_2}{\delta_3},\frac{1}{2} + \frac{2\delta_2}{\delta_3}\r),
	 \end{equation}
	 %\black
	 and hence
	 %\red
	 $\eda_{-1}^{2T^{k+1}}(\cdot | \text{1}) \in [\sigma_{even}]_{{(4\delta_2)}/{\delta_3}}$. 
	 %\black
	 Then, from the assumption
	 %the assumption $\delta_2 = {\epsilon_2 \delta_3}/4$, 
	 \eqref{eq: del_2_def} we have  
	 $\eda_{-1}^{2T^{k+1}}(\cdot | \text{1}) \in [\sigma_{even}]_{\epsilon_2}$, and hence from \eqref{eq: eps_regret_2}
	 we have
	 %\red
	 \begin{equation}
	 	K_1^{2T^{k+1}}(\text{1,0}) = \eda_1^{2T^{k+1}}(\text{1}) \reg_1\l[\l\{\l(\eda_{-1}^{2T^{k+1}}(\cdot | \text{1}), x_1(\text{0},\cdot), x_1(\text{1}, \cdot)\r)\r\}\r] >  0.5\delta_3 c_2. 
	 \end{equation}
	 %\black
	 %\footnote{ \red The order of the terms in the regret expression in the middle of the preceding equation needed to be flipped. Please check. \black}
	 %$\frac{\delta_2}{0.25\delta_3 - 2\delta_2} < \epsilon_2$ and hence
	 %and since $\epsilon_2 << \epsilon_3$, 
	 %$\eda_{-1}^{2T^{k+1}}(\cdot | \text{1}) \in [\sigma_{even}]_{{\delta_2}/{(0.5\delta_3)}}$ implies that 
	 %Thus,
	 %\[
	%	P\l([K_1^{2T^{k+1}}(\text{1,0})]^+ \geq 0.5\delta_3 c_2 \bigg|\l\{f_3^{k+1} \geq 1 - \delta_2, f_4^{k+1} < 1 - \delta_3 \r\}\r) > 1 - 3\delta_1.
	%\]
\end{proof}

\begin{lemma}
\label{lem: prob_f3_small_f4_big}
	For any $k \geq k_0$, if $f_4^{k+1} \geq 1 -\delta_3$, $f_3^{k+1} \geq 1 - \delta_2$ and $\eda^{2T^{k+1}} \in [\hat \mu]_{\delta_2}$, then $K_1^{2T^{k+1}}(\text{0,1}) > (1 - \delta_3) c_3$. 
\end{lemma}
\begin{proof}
	If $f_4^{k+1} \geq 1 -\delta_3$ and $f_3^{k+1} \geq 1 - \delta_2$, then 
	$\eda^{2T^{k+1}} \in [\hat \mu]_{\delta_2}$ implies that
	%\bb
	\begin{equation}
	\label{eq: eda_include_unif}
		\eda_{-1}^{2T^{k+1}}(\cdot | \text{0}) \in [\sigma_{unif}]_{\frac{\delta_3/4 + \delta_2 + \delta_3/8 + \delta_2/8}{1 - \delta_3/2 - \delta_2/2}}.
	\end{equation}
	%This is 
	%because $\hat \mu(\text{0},\cdot) \in [\sigma_{unif}]_{\delta_3/4}$ and hence $\eda^{2T^{k+1}}(\text{0},\cdot) \in [\sigma_{unif}]_{\delta_3/4 + \delta_2}$. 
	%\blue
	To see this, note that $f_4^{k+1} \geq 1 -\delta_3$ and $\eda^{2T^{k+1}} \in [\hat \mu]_{\delta_2}$ imply that 
	$\eda^{2T^{k+1}}(\text{0},\cdot) \in [\sigma_{unif}]_{\delta_3/4 + \delta_2}$.
	%\black
	We have 
	%$\eda_1^{2T^{k+1}}(0) = 1 - f_3^{k+1}/2 - f_4^{k+1}/2 \geq 1 - \delta_3/2 - \delta_2/2$.
%	\red
	$\eda_1^{2T^{k+1}}(0) = f_3^{k+1}/2 + f_4^{k+1}/2 \geq 1 - \delta_3/2 - \delta_2/2$.
%	\black
	Let 
	%$\kappa := (f_3^{k+1}/2 + f_4^{k+1}/2)/4$. 
	%\red 
	$\kappa := (1 - \eda_1^{2 T^{k+1}}(0))/4$. %\black
	Thus 
	%$0 \leq \kappa \leq \delta_3/8 - \delta_2/8$.
	%\red
	$0 \leq \kappa \leq \delta_3/8 + \delta_2/8$.
	%\black
	Let $\sigma_{unif} - \kappa := (0.25 - \kappa, 0.25 - \kappa, 0.25 - \kappa, 0.25 - \kappa)$.
	%This implies that
	%\red 
	Then we have %\black
	\[
		\eda^{2T^{k+1}}(\text{0},\cdot) \in [0.25 - \kappa, 0.25 - \kappa, 0.25 - \kappa, 0.25 - \kappa]_{\delta_3/4 + \delta_2 + \kappa}.
	\]
	Normalizing $\sigma_{unif} - \kappa$ with $1 - 4\kappa = \eda_1^{2T^{k+1}}(0)$ gives us $\sigma_{unif}$. 
	As a result, normalizing $\eda^{2T^{k+1}}(\text{0},\cdot)$ with $\eda_1^{2T^{k+1}}(0)$ gives \eqref{eq: eda_include_unif}.
	 Then, from the assumptions $\epsilon_3 < 1, \epsilon_2 < 1, \delta_2 = \epsilon_2\delta_3/4$ and $\delta_3 = \epsilon_3/2$, we have
	 \[
	 	\frac{\delta_3/4 + \delta_2 + \delta_3/8 + \delta_2/8}{1 - \delta_3/2 - \delta_2/2} \leq \frac{\delta_3}{1 - \delta_3} \leq \epsilon_3.
	 \]
	 %\cob
	Thus, 
	 $\eda_{-1}^{2T^{k+1}}(\cdot | \text{0}) 
	 %\in [\sigma_{even}]_{\frac{\delta_2 + \delta_3 / 4}{1 - \delta_3/2 - \delta_2/2}}   
	 \in [\sigma_{unif}]_{\epsilon_3}$,
	 and hence 
	 %from \eqref{eq: eps_regret_1} 
	 %\red 
	 from \eqref{eq: eps_regret_3} %\black 
	 we have
	 %$K_1^{2T^{k+1}}(\text{0,1}) > (1 - \delta_3) c_3$.
	 %\red
	 \begin{align}
	 	K_1^{2T^{k+1}}(\text{0,1}) &= \eda_1^{2T^{k+1}}(\text{0}) \reg_1\l[\l\{\l(\eda_{-1}^{2T^{k+1}}(\cdot | \text{0}), x_1(\text{1},\cdot), x_1(\text{0}, \cdot)\r)\r\}\r] \\
	 	&>  (1 - \delta_3/2 - \delta_2/2) c_3 > (1 - \delta_3)c_3. 
	 \end{align}
	 %\black
	 %\footnote{ \red The order of the two terms in the regret in the middle of the preceding equation needed to be interchanged. Please check. \black}
	 %Thus,
	 %\[
	%	P\l([K_1^{2T^{k+1}}(\text{0,1})]^+ \geq (1 - \delta_3) c_3 \1\l\{f_3^{k+1} \geq 1 - \delta_2, f_4^{k+1} \geq 1 -\delta_3 \r\}\r) \geq 1 - 3\delta_1.
	%\]
\end{proof}

\begin{lemma}
\label{lem: prob_f3_small}
	For any $k \geq k_0$, if $P(f_3^{k+1} < 1 - \delta_2) \leq 1/4$, then
	\begin{equation}
		P\l(\bar K^k > \min\{0.5\delta_3c_2,(1-\delta_3)c_3\}\r) > \frac{3}{4}\l(\frac{1}{4} - 3 \delta_1\r),
	\end{equation}
	%\red 
	where $\bar K^k$ is defined in equation \eqref{eq:3regrets}. %\black 
\end{lemma}
\begin{proof}
	
	From lemmas \ref{lem: prob_f3_small_f4_small} and \ref{lem: prob_f3_small_f4_big} we obtain the following:
	if $f_3^{k+1} \geq 1 - \delta_2$ and $\eda^{2T^{k+1}} \in [\hat \mu]_{\delta_2}$, then 
	$\bar K^k > \min\{0.5\delta_3c_2,(1-\delta_3)c_3\}$.
	As a result, from Lemma~\ref{lem: emp_prob_2}, if $P(f_3^{k+1} < 1 - \delta_2) \leq 1/4$, then
	%\red
	\begin{align*}
		&P\l(\bar K^k > \min\{0.5\delta_3c_2,(1-\delta_3)c_3\}\r)\\
		 &\ge P\l(\bar K^k > \min\{0.5\delta_3c_2,(1-\delta_3)c_3\} | f_3^{k+1} \geq 1 - \delta_2 \r) P(f_3^{k+1} \geq 1 - \delta_2)\\
		 &\geq P\l( \eda^{2T^{k+1}} \in [\hat \mu]_{\delta_2} | f_3^{k+1} \geq 1 - \delta_2 \r) P(f_3^{k+1} \geq 1 - \delta_2)\\
		&> \frac{3}{4}\l(\frac{1}{4} - 3 \delta_1\r).
	\end{align*}
	%\black
\end{proof}

	\begin{proof}[Proof of Proposition~\ref{prop: nonconv_reg}]
		Take 
		\[
			\tilde \epsilon =  \min\l\{\delta_2c_1, 0.5\delta_3c_2, (1-\delta_3)c_3\r\}
		\]
		and
		\[
			\tilde \delta = \min\l\{\frac{1}{4}\l(\frac{1}{4} - \delta_1\r), \frac{3}{4}\l(\frac{1}{4} - 3 \delta_1\r)\r\}.
		\]
		From lemma \ref{lem: prob_f3_big} and \ref{lem: prob_f3_small} it follows that for all $k \geq k_0$,
	\[
		P\l( \bar K^k > \tilde \epsilon \r) >\tilde \delta_1,
	\]
		and this concludes the proof.
	\end{proof}

%!TEX root = ../main.tex

\section{Conclusion}
\label{sec: conclusion}

%One of the goals of game theory is to model real life situations via simple mathematical models that can be analyzed. It is unlikely that any mathematical model can be complex enough to account for all the characteristics of the real world and at the same time give useful insights into the problem. Such a trade-off has been observed time and again in scientific studies. However, even simple mathematical models can lead to insights that can be used in real world applications, for example in making policies. There has been a lot of study where agents are assumed to have EUT behavioral preferences. Many real world applications demand a more general modeling of people's behavior. Several researchers have expressed the importance of boundedly rational behavior in social and competitive interactions. This paper is an attempt to contribute in this vein where cumulative prospect theory is used to model boundedly rational behavior.

We studied how some of the results from the theory of learning in games are affected when the players in the game have 
%cumulative prospect theory preferences. 
%\red 
cumulative prospect theoretic preferences. %\black
For example, we saw that the notion of 
%\red
mediated CPT correlated equilibrium arising from
mediated games is more appropriate %\black
than the notion of CPT correlated equilibrium while studying the convergence of the empirical distribution 
%of play, 
%\color{red} 
of action play, 
%\color{black}
in particular for calibrated learning schemes. One can ask similar questions with respect to other learning schemes such as 
%\red 
{\em follow the perturbed leader} 
%\black 
\citep{fudenberg1995consistency}, 
%\red 
{\em fictitious play} %\black
\citep{brown1951iterative}, etc. We leave this for future work. In general, it seems that the results from 
%\red 
the theory of %\black
learning in games continue to hold under CPT with slight modifications. 
%\com
We also observed that the revelation principle does not hold under CPT.
%\cob
\appendix
\appendixpage
%\section{Proofs}
%\label{sec: app_proofs}
%\label{app: proofs}
%\input{tex/nonunique_bestreact}
%!TEX root = ../main.tex

%\red
\section{Notions of equilibrium}
\label{app: eq_notions}

In this appendix, we explore the relationship between the different notions
of equilibrium for a finite $n$-person normal form game 
$\Gamma$ with CPT players, organizing our observations 
into a sequence of remarks.
For convenience, we first briefly recall the four notions 
of equilibrium that played a role in the discussion in the paper.
A CPT correlated equilibrium of the game $\Gamma$, 
%as defined by Keskin, 
see Definition \ref{def: CPT_Nash_eq}, is 
%a probability 
%distribution on the set of action profiles, i.e. 
an element
of $\Delta(A)$. 
A CPT Nash equilibrium of the game $\Gamma$, 
%as defined by Keskin, 
see
Definition \ref{def: CPT_Nash_eq_real}, is 
%a product probability 
%distribution on the set of action profiles, i.e. 
an element
of $\Delta^*(A)$. 
Given a signal system $(B_i)_{i \in [n]}$ and a mediator
distribution $\psi \in \Delta(B)$, where
$B := \prod_{i=1}^n B_i$, a mediated CPT Nash equilibrium
of the mediated game $\tilde{\Gamma} := (\Gamma, (B_i)_{i \in [n]})$
with respect to the mediator distribution $\psi$,
see Definition \ref{def: mediated_cpt_nash_eq},
is a randomized strategy profile $\sigma = (\sigma_1, \ldots, \sigma_n)$, where $\sigma_i: B_i \to \Delta(A_i)$.
A mediated CPT correlated equilibrium of the game $\Gamma$,
see Definition \ref{def: mediated_cpt_corr_eq},
is an element of $\Delta(A)$.

\begin{remark}      \label{app: rem: allNash}
Let $\mu := \prod_{i=1}^n \mu_i \in \Delta^*(A)$
be a CPT Nash equilibrium of the game $\Gamma$. 
Then, for every signal system 
$(B_i)_{i \in [n]}$ and
mediator distribution $\psi \in \Delta(B)$, the 
randomized strategy profile 
$\sigma = (\sigma_1, \ldots, \sigma_n)$, where $\sigma_i: B_i \to \Delta(A_i)$ is the constant function given by
$\sigma_i (b_i) = \mu_i$ for all $b_i \in B_i$, is a 
mediated CPT Nash equilibrium
of the mediated game $\tilde{\Gamma} := (\Gamma, (B_i)_{i \in [n]})$
with respect to the mediator distribution $\psi$.
Conversely, if $\sigma$ is defined in terms of 
$\mu \in \Delta^*(A)$ as above
and $\sigma$ is a mediated CPT Nash equilibrium
of the mediated game $\tilde{\Gamma} := (\Gamma, (B_i)_{i \in [n]})$
with respect to the mediator distribution $\psi$, then 
$\mu$ is a CPT Nash equilibrium of the game $\Gamma$. 

To see this, note that for the strategy profile $\sigma$, for 
all $b_i \in B_i$, we have 
$\tilde{\mu}_{-i}(a_{-i}|b_i) = \prod_{j \neq i} \mu_j(a_j)$ for all $a_{-i} \in A_{-i}$, where $\tilde{\mu}_{-i}(a_{-i}|b_i)$ is as
defined in equation \eqref{eq: tilde_mu}. Hence $\sigma_i \in BR_i(\psi, \sigma)$, where $BR_i(\psi, \sigma)$ is as defined in equation \eqref{eq: best_response}, iff $\mu_i \in BR_i(\mu)$, where 
$BR_i(\mu)$ is as defined in equation \eqref{eq:bestresponseset}. 
This establishes the claim.
\end{remark}

\begin{remark}      \label{app: rem: CinD}
Every CPT correlated equilibrium of the game $\Gamma$ is a mediated CPT correlated equilibrium of the game $\Gamma$. Namely $C(\Gamma) \subset D(\Gamma)$.

To see this, let $\mu \in C(\Gamma)$. Consider the signal system
$(A_i)_{i \in [n]}$ (i.e. take $B_i = A_i$ for all $i \in [n]$)
with the mediator distribution $\mu$,
and consider the deterministic strategy profile 
$\sigma = (\sigma_1, \ldots, \sigma_n)$ given, with an 
abuse of notation, by $\sigma_i(b_i) = \1\{b_i = a_i\}$.
Note that $\eta(\psi,\sigma)$, as defined in 
equation \eqref{eq: eta_def}, equals $\mu$.
Since $\mu \in C(\Gamma)$, it verifies the condition 
in equation \eqref{eq: CPT_corr_ineq}, which 
then implies that $\sigma_i \in BR_i(\psi,\sigma)$, where 
$\psi = \mu$ and $BR_i(\psi, \sigma)$ is as defined in equation \eqref{eq: best_response}. This implies that $\mu \in D(\Gamma)$.
\end{remark}

\begin{remark}      \label{app: rem: prod_mediator}
Suppose the mediator distribution $\psi$ is of product form,
which we write as $\psi \in \Delta^*(B)$.
Let $\sigma = (\sigma_1, \ldots, \sigma_n)$ be 
a 
mediated CPT Nash equilibrium
of the mediated game $\tilde{\Gamma} := (\Gamma, (B_i)_{i \in [n]})$
with respect to the mediator distribution $\psi$.
Let $\mu := \eta(\psi,\sigma)$, 
as defined in 
equation \eqref{eq: eta_def}.
Note that we will have $\mu \in \Delta^*(A)$.
A simple calculation shows that 
$\tilde{\mu}_i(a_{-i}|b_i) = \prod_{j \neq i} \mu_j(a_j)$
for all $i \in [n]$, $b_i \in B_i$, and $a_{-i} \in A_{-i}$, where 
$\tilde{\mu}_{-i}(a_{-i}|b_i)$ is as
defined in equation \eqref{eq: tilde_mu}.
Thus $\sigma_i \in BR_i(\psi,\sigma)$ iff for all $b_i \in
\mbox{supp}(\psi_i)$ we have $\sigma_i(b_i) \in BR_i(\mu)$. 
This, in turn, is equivalent to $\mu_i \in BR_i(\mu)$. 
This characterizes the mediated CPT Nash equilibria of a mediated
game $\tilde{\Gamma} := (\Gamma, (B_i)_{i \in [n]})$
with respect to product form mediator distributions $\psi \in 
\Delta^*(B)$ in terms the CPT Nash equilibria of the game $\Gamma$.
\end{remark}

%\footnote{ \red It would be interesting to have a characterization of the intersection of the set of all mediated CPT correlated equilibria with $\Delta^*(A)$. It is unclear whether this is feasible or what form it will take because it is possible for a product form distribution to arise as a mixture of non-product form distributions. Nevertheless, it may be worth thinking about this a bit to see if something interesting might be lurking here. \black}

\begin{remark}
\label{rem: med_nadh_boundary}
\citet{nau2004geometry} showed that for any 
%\red 
finite 
%\black
n-person game the Nash equilibria all lie on the boundary of the set of correlated equilibria. 
\citet{phade2019geometry} extend this result to the CPT setting and show that all the CPT Nash equilibria lie on the boundary of the set of CPT correlated equilibria. 
It is natural to ask whether the CPT Nash equilibria in fact lie on the boundary of the set of all mediated CPT correlated equilibria. 
We know this is true for any $2 \times 2$ game $\Gamma$, since $C(\Gamma) = D(\Gamma)$ for such games. 
However, it is not known if this property holds in general for 
%any game,
%\red 
all finite $n$-person CPT games, 
%\black
and we leave this for future work.
%\footnote{\blue Should we move this remark to Appendix~\ref{app: eq_notions}?}
\end{remark}

\black

\section{Generalized signal spaces}
\label{app: polish_signal_space}

We now allow the
%signal sets $B_i$ to be infinite sets.
%In particular, suppose each set $B_i$ is a Polish space 
%\red 
signal set $B_i$ to be an arbitrary Polish space %\black
(a complete separable metric space) 
for all $i \in [n]$.
The product spaces $B := \prod_{i \in [n]} B_i$ and $B_{-i} := \prod_{j \neq i} B_j$, for all $i \in [n]$, are 
%\red 
then also %\black
Polish 
%\red 
spaces %\black
because a countable product of Polish spaces is a Polish space.
Let $\mcal{B}_i, \mcal{B}$ and $\mcal{B}_{-i}$ denote the $\sigma$-algebra of Borel sets on the spaces $B_i, B$ and $B_{-i}$ respectively.
Let the mediator be characterized by a probability 
%measure 
%\red 
distribution %\black
$\psi$ on $(B, \mcal{B})$.
Let $\psi_i$ denote the marginal probability distribution on $B_i$ induced by $\psi$.
Let $\psi_{-i} : B_i \times \mcal{B}_{-i} \to [0,1]$ be a function which satisfies:
\begin{enumerate}
 	\item $\psi_{-i}(b_i, \cdot)$ is a probability 
 	%\red 
 	distribution %\black
 	on $(B_{-i}, \mcal{B}_{-i})$, for all $b_i \in B_i$,
 	\item $\psi_{-i}(\cdot, X)$ is a measurable function on $(B_i, \mcal{B}_i)$, for all $X \in \mcal{B}_{-i}$,
 	\item for all $X \in \mcal{B}_{-i}$ and $Y \in \mcal{B}_i$,
 	\begin{equation}
 		\psi(Y \times X) = \int_{Y} \psi_{-i}(y, X) \psi_i(dy).	
 	\end{equation}
 \end{enumerate} 
%The function $\psi_{-i}$ is called a 
%\emph{product regular conditional probability} of the product measure space $(B_{-i} \times B_i, \mcal{B}_{-i} \times \mcal{B}_i, \psi)$ \citep{faden1985existence}, and its existence is guaranteed by the fact that $B_{-i}$ is a Radon space (every Polish space is a Radon space) \citep{leao2004regular}.
%\red 
The function $\psi_{-i}$ is called a 
\emph{regular conditional probability}. For a 
proof of its existence, see \cite[Theorem 1]{chang-pollard97}
(this theorem needs to be used in the framework of
\cite[Example 2]{chang-pollard97}). %\black

Let a randomized strategy for any player $i$ be given by a measurable function $\sigma_i: B_i \to \Delta(A_i)$ with respect to the Borel $\sigma$-algebra on $\Delta(A_i)$, and let $\sigma = (\sigma_1, \dots, \sigma_n)$ denote the 
%randomized profile strategy profile 
%\red 
randomized strategy profile %\black
as before.
Let $\sigma_{-i} := \prod_{j \neq i} \sigma_{j} : B_{-i} \to \Delta(A_{-i})$.
Let $\nu_{-i}(b_i)$ be the push forward probability distribution of $\psi_{-i}(b_i, \cdot)$ with respect to the function $\sigma_{-i}$, and let 
\begin{equation}
	\tilde \mu_{-i}(a_{-i} | b_i ) := \int_{\Delta(A_{-i})} p(a_{-i}) \nu_{-i}(b_i )(dp).
\end{equation}
Note that $\tilde \mu_{-i}(\cdot | b_i ) \in \Delta(A_{-i})$.
Let $\nu(\psi, \sigma)$ be the push forward probability distribution of $\psi$ with respect to the function $\sigma := \prod_{i \in [n]} \sigma_i : B \to \Delta(A)$,
and let
\begin{equation}
	\eta(\psi, \sigma)(a) := \int_{\Delta(A)} p(a) \nu(\psi, \sigma)(dp).
\end{equation}
Note that $\eta(\psi, \sigma) \in \Delta(A)$.

Let the best response set of player $i$ to a randomized strategy profile $\sigma$ and a mediator distribution $\psi$ be given by
\begin{align}
\label{eq: best_response_app}
	BR_i(\psi, \sigma) := \bigg\{ \sigma^*_i &: B_i \to \Delta(A_i) \text{ a measurable function} \bigg| \text{ for all } b_i \in 
    %B_i
    \supp(\psi_i)
    , \nonumber\\
	& \supp(\sigma^*_i(b_i)) \subset \arg \max_{a_i \in A_i} V_i\l(\l\{ \tilde \mu_{-i}(a_{-i}|b_i), x_i(a_i,a_{-i}) \r\}_{a_{-i} \in A_{-i}} \r) \bigg\},
\end{align}
where
$\supp(\psi_i)$ is the smallest closed set $Y \subset B_i$ with $\psi_i(B_i\back Y) = 0$.

%We can now define a mediated CPT Nash equilibrium and a mediated CPT correlated equilibrium verbatim as in Definitions \ref{def: mediated_cpt_nash_eq} and \ref{def: mediated_cpt_corr_eq} and denote these by $\Sigma^*(\Gamma, (B_i)_{i \in [n]}, \psi)$ and $D^*(\Gamma)$ respectively, as before. 
%\red
We can now define,
exactly as in Definition \ref{def: mediated_cpt_nash_eq},
the notion of a mediated CPT Nash equilibrium for the
mediated game $\tilde{\Gamma} := (\Gamma, (B_i)_{i \in [n]})$
with respect to a probability distribution $\psi$ on 
$(B, \mcal{B})$, where now $(B_i, \mcal{B}_i)_{i \in [n]}$ 
are arbitrary Polish spaces. Let $\Sigma^*(\Gamma, (B_i)_{i \in [n]}, \psi)$ denote the set of such mediated CPT Nash equilibria. 
We can also define, 
exactly as in Definition \ref{def: mediated_cpt_corr_eq},
the notion of a mediated CPT correlated equilibrium
(which is a probability distribution in $\Delta(A)$, as before) in this
extended setting where the signal spaces are allowed to be 
arbitrary Polish spaces. Let $D^*(\Gamma)$ denote the 
set of mediated CPT correlated equilibria in this extended sense.
%\black
Let $C(\Gamma, i, a_i)$ and $C(\Gamma, i)$ be defined as before.

\begin{lemma}
	For any game $\Gamma$, we have
	%\red
	\[
		D^*(\Gamma) \subset \cap_{i \in [n]} \conv (C(\Gamma, i)).
	\]
	%\black
\end{lemma}
\begin{proof}
Let 
%$\mu \in D(\Gamma)$. 
%\red 
$\mu \in D^*(\Gamma)$.%\black
Then there exists a signal system 
%$(B_i)_{i \in [n]}$, 
%\red 
comprised of Polish spaces $(B_i, \mcal{B}_i)_{i \in [n]}$, %\black
%a mediator distribution $\psi$, 
%\red 
a mediator distribution $\psi$ which is a probability 
distribution on $(B, \mcal{B})$, %\black
and a mediated CPT Nash equilibrium 
%$\sigma \in \Sigma(\Gamma, (B_i)_{i \in [n]}, \psi)$ 
%\red 
$\sigma \in \Sigma^*(\Gamma, (B_i)_{i \in [n]}, \psi)$ %\black
such that $\mu = \eta(\psi,\sigma)$. 
Fix $i \in [n]$.
For 
$b_i \in \supp(\psi_i)$ and
$a_i \in \supp \l(\sigma_i(b_i)\r)$, we have 
$\tilde \mu_{-i}(\cdot|b_i) \in C(\Gamma,i,a_i)$, 
from equations \eqref{eq: best_response_app} and \eqref{eq: best_reaction_ineq}. 
Let $a_i$ be such that $\mu_i(a_i) > 0$.
We have
\[
\mu_{-i}(\cdot|a_i) = \int_{B_i} \frac{\sigma_i(b_i)(a_i)}{\mu_i(a_i)} \tilde \mu_{-i}(\cdot | b_i) \psi_i(db_i).
\]
Also, since $\sigma$ is the product function $\prod_{i \in [n]} \sigma_i$ and $\mu$ is the push forward probability distribution of $\psi$ with respect to $\sigma$, we have that $\mu_i$ is the push forward probability distribution of $\psi_i$ with respect to the function $\sigma_i$, i.e.
\[
	\mu_{i}(a_i) = \int_{B_i} {\sigma_i(b_i)(a_i)} \psi_i(db_i).
\]
Since the set $\conv(C(\Gamma, i, a_i))$ is closed, we have 
$\mu_{-i}(\cdot|a_i) \in \conv \l(C(\Gamma,i,a_i)\r)$.
Since this holds for all $i \in [n]$, we have
$\mu = \eta(\psi,\sigma) \in \cap_{i \in [n]} \conv(C(\Gamma, i))$. This completes the proof.
\end{proof}

Since a finite set $B_i$ is a Polish space with respect to the discrete topology, we have $D(\Gamma) \subset D^*(\Gamma)$. 
From the above lemma and lemma~\ref{lem: mediated_corr_eq} we have $D^*(\Gamma) = D(\Gamma)$.
Hence, it is enough to restrict our attention to signals $B_i$ that are finite sets.
In fact, 
%\red 
it suffices to restrict attention to %\black
%signals 
%\red 
signal sets %\black
$B_i$ of size at most $|A|$ (see remark \ref{rem: rev_pure_strat}).

%!TEX root = ../main.tex

\section{Proof of Proposition~\ref{prop: generic_isolated}}
\label{app: generic}

\begin{proof}[Proof of Proposition~\ref{prop: generic_isolated}]
	%\blue
	For each of the players $i \in [n]$, let us fix the CPT features $r_i, v_i^{r_i}, w_i^\pm$
	such that $(v_i^{r_i})^{-1}$ is absolutely continuous.
	We also fix the action set $A_i$ for each of the players $i \in [n]$.
	%\black
	Since $n$ and $|A_i|, \forall i$ are finite, it is enough to show that for any fixed $i \in [n]$ and $a_{i} \in A_{i}$ the set of all games $\Gamma$ for which the set $C(\Gamma, i, a_i)$ has an isolated point is a null set.
	Since the set of all games for which any two payoffs of player $i$ are equal, i.e. $x_i(a) = x_i(\tilde a)$, $a \neq \tilde a$, is a null set, we can restrict our attention to games where all the payoffs for player $i$ 
	%\blue
	corresponding to her playing $a_i$ are distinct. 
	%\black
	Let $(\pi_i(1), \pi_i(2), \dots, \pi_i(|A_{-i}|))$ be a permutation of $A_{-i}$ such that
		\[
			x_i(a_i,\pi_i(1)) > x_i(a_i, \pi_i(2)) > \dots > x_i(a_i, \pi_i(|A_{-i}|)).
		\]

	Suppose we fix $x_j(a) \in \bbR$ for all $j \neq i$, and $x_i(\tilde a_i, a_{-i}) \in \bbR$ for all $\tilde a_i \neq a_i, a_{-i} \in A_{-i}$.
	Then the game $\Gamma$ is completely determined by the vector of payoffs $(x_i(a_i, a_{-i}))_{a_{-i} \in A_{-i}}$.
	% \in \bbR^{|A_{-i}|}$.
	%\red
	Let $S$ denote the set of all $(x_i(a_i, a_{-i}))_{a_{-i} \in A_{-i}}$
	for which the set $C(\Gamma, i, a_i)$ has isolated points.
	We will show that $S$
	is a null set with respect to the 
	%$|A_{-i}|$-dimensional 
	Lebesgue measure on $\bbR^{|A_{-i}|}$. %\black
	Then, by Tonelli's theorem,
	%{\color{blue} Tonelli's theorem actually. And we show that the set of interest is a subset of a zero measure set and since Lebesgue measure is complete our set is of zero measure.}
	 we have the required result.

%\red 
Recall that $Y_i \subset \bbR$ denotes the range of 
$v_i^{r_i}$ and that $Y_i$ is an open interval because
$v_i^{r_i}$ is assumed to be continuous and strictly increasing
on $\bbR$. 
Also recall that $\lambda_i^*$ is the measure on $Y_i$
that is the push forward of the
Lebesgue measure on $\bbR$ under $v_i^{r_i}$,
$\hat{\lambda}_i$ denotes Lebesgue measure restricted 
to $Y_i$, and 
%we have assumed 
that the assumption that $(v_i^{r_i})^{-1}$ 
is absolutely continuous implies
that $\lambda_i^*$ is 
absolutely continuous with respect to $\hat{\lambda}_i$. %\black
	 Consider the function 
	 %\bb
	 $f: \bbR^{|A_{-i}|} \to Y_i^{|A_{-i}|}$ 
	 %\cob
	 given by
	 \[
	 	f\l((x_i(a_i, a_{-i}))_{a_{-i} \in A_{-i}}\r) := (v_i^{r_i}(x_i(a_i, a_{-i}))_{a_{-i} \in A_{-i}}
	 \]
	 Let $y_i(a_{-i}) := v_i^{r_i}(x_i(a_i, a_{-i})) \in Y_i$ for all $a_{-i} \in A_{-i}$.
	 Since $v_i^{r_i}$ is strictly increasing, 
	 the mapping
	 $f$ is a bijection between 
	 %\blue
	 $(x_i(a_i, a_{-i}))_{a_{-i} \in A_{-i}} \in \bbR^{|A_{-i}|}$ and
	 $(y_i(a_{-i}))_{a_{-i} \in A_{-i}} \in Y_i^{|A_{-i}|}$.
	 %\black
	 Also, we have
	\[
		y_i(\pi_i(1)) > y_i(\pi_i(2)) > \dots > y_i(\pi_i(|A_{-i}|)).
	\]
	Suppose we could show that the set $f(S)$ is a null set with respect to the Lebesgue measure on $Y_i^{|A_{-i}|}$.
	Since the Lebesgue measure on $Y_i^{|A_{-i}|}$ is the completion of $(\hat \lambda_i)^{|A_{-i}|}$, 
	%\red 
	this would imply that 
	%\black 
	there exists a subset $S^*$ such that $f(S) \subset S^* \subset Y_i^{|A_{-i}|}$ and $(\hat \lambda_i)^{|A_{-i}|}(S^*) = 0$.
	Since $\lambda_i^* \ll \hat \lambda_i$, we have $(\lambda_i^*)^{|A_{-i}|} \ll (\hat \lambda_i)^{|A_{-i}|}$ and hence 
	%\red 
	we would have %\black
	$(\lambda_i^*)^{|A_{-i}|}(S^*) = 0$.
	Since $\lambda_i^*$ is the push forward of the Lebesgue measure $\lambda_i$
	%\red 
	under $v_i^{r_i}$, we 
	%get 
	%\red 
	would have %\black 
	$(\lambda_i)^{|A_{-i}|}(f^{-i}(S^*)) = 0$, and hence $S$ is a null set with respect to the Lebesgue measure on $\bbR^{|A_{-i}|}$.

	We will now show that the set $f(S)$ is a null set with respect to the Lebesgue measure on $Y_i^{|A_{-i}|}$.
	The vector $(y_i(a_{-i}))_{a_{-i} \in A_{-i}}$ is completely determined by choosing each of the following:
	\begin{enumerate}[(i)]
		\item a permutation $(\pi_i(1), \pi_i(2), \dots, \pi_i(|A_{-i}|))$ of $A_{-i}$,
	 	\item the differences $y_i(\pi_i(t)) - y_i(\pi_i(t+1)) > 0$ for all $1 \leq t < |A_{-i}|$,
		\item $y_i(\pi_i(|A_{-i}|)) \in Y_i$ such that
		%\blue
		\[
			y_i(\pi_i(1)) = y_i(\pi_i(|A_{-i}|)) + \sum_{t = 1}^{|A_{-i}| -1} y_i(\pi_i(t)) - y_i(\pi_i(t+1)) \in Y_i.
		\]
		%\black
	\end{enumerate}
     Further, we observe that the Lebesgue measure on $Y_i^{|A_{-i}|}$ is the 
     completion of the
     product measure of the following:
	 \begin{enumerate}[(1)]
		\item the uniform distribution on the set of permutations of $A_{-i}$,
	 	\item Lebesgue measure on $y_i(\pi_i(t)) - y_i(\pi_i(t+1)) > 0$ for all $1 \leq t < |A_{-i}|$,
	 	\item Lebesgue measure on $y_i(\pi_i(|A_{-i}|)) \in \bbR$,
	 restricted to 
	 %\blue
	 $y_i(\pi_i(|A_{-i}|))$ 
	 %\black
	 belonging to the interval such that $y_i(\pi_i(|A_{-i}|)) \in Y_i$ and 
	 %\blue
		\[
			y_i(\pi_i(1)) = y_i(\pi_i(|A_{-i}|)) + \sum_{t = 1}^{|A_{-i}|-1} \l[y_i(\pi_i(t)) - y_i(\pi_i(t+1))\r] \in Y_i.
		\]
		%\black
		\end{enumerate}
	
	We will now show that for any fixed 
	%permuation 
	%\red 
	permutation %\black
	$(\pi_i(1), \pi_i(2), \dots, \pi_i(|A_{-i}|))$ and any fixed positive differences $y_i(\pi_i(t)) - y_i(\pi_i(t+1)) > 0$ for all $1 \leq t < |A_{-i}|$, the set of all $y_i(\pi_i(|A_{-i}|))$ such that $(y_i(a_{-i}))_{a_{-i} \in A_{-i}} \in f(S)$ is a null set with respect to the one-dimensional Lebesgue measure.

	%\blue
	Let $(\underline \delta, \overline \delta)$ be the largest open interval such that if $y_i(\pi_i(|A_{-i}|)) = \delta$ for any $\delta \in (\underline \delta, \overline \delta)$, then $y_i(\pi_i(|A_{-i}|)), y_i(\pi_i(1)) \in Y_i$.
	Note that the interval $(\underline \delta, \overline \delta)$ could be empty depending on the fixed positive differences $y_i(\pi_i(t)) - y_i(\pi_i(t+1)) > 0$ for all $1 \leq t < |A_{-i}|$.
	%Let $\Gamma^\delta$ denote the game defined by letting $y_i(\pi_i(|A_{-i}|)) := \delta$ for $\delta \in (\underline \delta, \overline \delta)$.
	%\red 
	For $\delta \in (\underline \delta, \overline \delta)$,
	let $\Gamma^\delta$ denote the game defined by letting $y_i(\pi_i(|A_{-i}|)) := \delta$. 
	%\black
	In particular, for the game $\Gamma^\delta$, the payoffs corresponding to player $i$ and action $a_i$ are given by 
	%\blue
	\[
		x_i^\delta(a_i, a_{-i}) := (v_i^{r_i})^{-1}(y_i(a_{-i})),
	\]
	 for all $a_{-i} \in A_{-i}$,
	 where
	 \[
	 	y_i(a_i) = \delta + \sum_{t = \pi_i^{-1}(a_{-i})}^{|A_{-i}| - 1} \l[y_i(\pi_i(t)) - y_i(\pi_i(t+1))\r].
	 \]
	%For any game $\Gamma$ with payoffs $(x_i(a), i \in [n], a \in A)$, consider the function $F_i^{a_i} : \Delta(A_{-i}) \to \bbR$, given by
	%\[
	%	F_i^{a_i}(\mu_{-i}) := \max_{\tilde a_i \neq a_i} \reg_i[\{(\mu_{-i}(a_{-i}), x_i(\tilde a_i, a_{-i}), x_i(a_i, a_{-i}))\}_{a_{-i} \in A_{-i}}],
	%\]
	%where $\reg_i[\cdot]$ is as defined in equation~\eqref{eq: reg_def}.
	Consider the function $G_i^{a_i} : \Delta(A_{-i}) \times (\underline \delta, \overline \delta) \to \bbR$, given by
	\[
		G_i^{a_i}(\mu_{-i}, \delta) := \max_{\tilde a_i \neq a_i} \reg_i[\{(\mu_{-i}(a_{-i}), x_i(\tilde a_i, a_{-i}), x_i^\delta(a_i, a_{-i}))\}_{a_{-i} \in A_{-i}}],
	\]
	where 
	%\red
	the regret function
	%\black
	$\reg_i[\cdot]$ is as defined in equation~\eqref{eq: reg_def}.
	Since the probability weighting functions and the value function for player $i$ are assumed to be continuous, 
	the CPT value function $V_i(L)$ is continuous with respect to the probabilities and the outcomes in 
	%\red
	the lottery
	%\black
	$L$.
	Thus, the regret function $\reg_i[\cdot]$ is continuous
	%\red
	in its arguments, 
	%\black
	and hence we get that the function $G_i^{a_i}$ is continuous
	%\red
	in its arguments. 
	%\black
	%Thus we get that the function $F_i^{a_i}$ is continuous.

	%For any $\delta \in \bbR$ such that $y_i(\pi_i(1)) + \delta \in Y_i$ and $y_i(\pi_i(|A_{-i}|)) + \delta \in Y_i$, let $\Gamma^\delta$ be the game with all its payoffs except those corresponding to player $i$ and action $a_i$ be same as that of the game $\Gamma$; 
	%and the payoffs corresponding to player $i$ and action $a_i$ be given by 
	%\blue
	%\[
	%	x_i^\delta(a_i, a_{-i}) := (v_i^{r_i})^{-1}(y_i(a_{-i}) + \delta),
	%\]
	%\black
	% for all $a_{-i} \in A_{-i}$.  

	Now observe that, for any fixed $\delta \in (\underline \delta, \overline \delta)$, the outcomes 
	%$x_i^\delta(a_i, a_{-i}), \forall a_{-i} \in A_{-i},$ 
	%\red
	$(x_i^\delta(a_i, a_{-i}))_{a_{-i} \in A_{-i}}$
	%\black
	are divided into gains and losses depending on the reference point $r_i$.
	Hence, for some $0 \leq t_r \leq |A_{-i}|$, we have the outcomes $x_i^\delta(a_i, \pi_i(t)), \forall t \leq t_r,$ as gains, and the outcomes $x_i^\delta(a_i, \pi_i(t)), \forall t > t_r,$ as losses,
	%\red
	where $t_r =0$ corresponds to the case where all the outcomes
	$(x_i^\delta(a_i, a_{-i}))_{a_{-i} \in A_{-i}}$
	are losses.
	%\black
	As a result, the interval 
	$(\underline \delta, \overline \delta)$ can be partitioned into sub-intervals 
	$(\underline \delta, \delta_1), [\delta_1, \delta_2), \dots, [\delta_s, \overline \delta),$ 
	where
	$\underline \delta < \delta_1 < \delta_2 \dots < \delta_s < \overline \delta,$
	such that over any subinterval $I$ the outcomes are divided into gains and losses at the same point $t_r$.
	%\red
	Here $0 \le s \le |A_{-i}|$, with the case $s=0$ corresponding
	to the scenario where the division of the outcomes 
	$(x_i^\delta(a_i, a_{-i}))_{a_{-i} \in A_{-i}}$
	into gains and losses is the same throughout 
	$(\underline \delta, \overline \delta)$.
	%\black
	%Note that the partitioning depends on the positive differences $y_i(\pi_i(t)) - y_i(\pi_i(t+1)) > 0$ for all $1 \leq t < |A_{-i}|$,
	%and the 
	%\red
	Note that such an
	%\black
	interval $I$ could be open or half-open and half-closed. %depending on the partitioning and the particular subinterval under consideration.
	In the following argument it will not matter whether the subinterval is open or half-open and half closed.

	Let us now consider the function $G_i^{a_i}$ restricted to %$\Delta(|A_{-i}|) \times I$ 
	%\red
	$\Delta(A_{-i}) \times I$ 
	%\black
	for a fixed subinterval $I$. 
	Let $0 \leq t_r \leq |A_{-i}|$ be the point that divides the outcomes 
	%$x_i^\delta(a_i, \pi_i(t)), \forall 1 \leq t \leq |A_{-i}|,$
	%\red
	$(x_i^\delta(a_i, a_{-i}))_{a_{-i} \in A_{-i}}$
	%\black
	into gains and losses.
	Suppose we 
	%\red
	could
	%\black
	show that the set of $\delta \in I$ such that $(y_i(a_{-i}))_{a_{-i} \in A_{-i}} \in f(S)$ is a null set with respect to the one-dimensional Lebesgue measure.
	Since this 
	%is true 
	%\red
	would be true
	%\black
	for each of the subintervals $I$, and there are only finitely many such subintervals in the partitioning of $(\underline \delta, \overline \delta)$ above, we 
	%will 
	%\red
	would
	%\black
	get the 
	%required 
	%\red
	desired
	%\black
	result.

	We first prove the following useful property:
	%\begin{lemma}
		For any $\delta, \tilde \delta \in I$, and $\mu_{-i} \in \Delta(A_{-i})$, we have
		\begin{equation}
		\label{eq: G_linear}
			G_i^{a_i}(\mu_{-i}, \delta) - G_i^{a_i}(\mu_{-i}, \tilde \delta) = W_i(\mu_{-i}) (\tilde \delta - \delta),
		\end{equation}
		where
		\[
			W_i(\mu_{-i}) := w_i^+\l(\sum_{t = 1}^{t_r} \mu_{-i}(\pi_i(t))\r) + w_i^-\l(\sum_{t = t_r + 1}^{|A_{-i}|} \mu_{-i}(\pi_i(t))\r).
		\]
	%\end{lemma}
	%\begin{proof}
	%\red
		To see this, write
		\begin{align*}
			G_i^{a_i}(\mu_{-i}, \delta)  &=  \left( \max_{\tilde a_i \neq a_i} V_i(\{(\mu_{-i}(a_{-i}), x_i(\tilde a_i, a_{-i}))\}_{a_{-i} \in A_{-i}}) \right) \\
			&\quad -  V_i(\{(\mu_{-i}(a_{-i}), x_i^\delta(a_i, a_{-i}))\}_{a_{-i} \in A_{-i}}),
		\end{align*}
		which gives
		\begin{align*}
			G_i^{a_i}(\mu_{-i}, \delta) - G_i^{a_i}(\mu_{-i}, \tilde \delta) &= V_i(\{(\mu_{-i}(a_{-i}), x_i^{\tilde \delta}(a_i, a_{-i}))\}_{a_{-i} \in A_{-i}}) \\
			&\quad - V_i(\{(\mu_{-i}(a_{-i}), x_i^\delta(a_i, a_{-i}))\}_{a_{-i} \in A_{-i}}).
		\end{align*}
		Equation~\eqref{eq: G_linear} then follows from equation~\eqref{eq: CPT_value_cumulative}.
	%\end{proof}
	%\[
	%	\max_{\tilde a_i \neq a_i} \reg_i[\{(\mu_{-i}(a_{-i}), x_i^\delta(\tilde a_i, a_{-i}), x_i^\delta(a_i, a_{-i}))\}_{a_{-i} \in A_{-i}}] = F_i^{a_i}(\mu_{-i}) - \delta.
	%\]
	
        Note that $W_i(\mu_{-i}) > 0$ always.
		Indeed, since 
		$$\sum_{t = 1}^{t_r} \mu_{-i}(\pi_i(t)) + \sum_{t = t_r + 1}^{|A_{-i}|} \mu_{-i}(\pi_i(t)) = 1,$$ at least one of these two summations is positive, and $w_i^\pm(p) > 0$ for $p > 0$ from the assumptions on the probability weighting functions.
%\black

	%Thus, provided $\delta$ is such that $\Gamma^\delta$ is defined,
	For any $\delta \in I$, we have 
	$\mu_{-i} \in C(\Gamma^\delta, i , a_i)$ if and only if 
	%$\delta \geq F_i^{a_i}(\mu_{-i})$.
	$G_i^{a_i}(\mu_{-i}, \delta) \leq 0$.
	%\red
	%Suppose $\mu_{-i}$ is an isolated point in the set $C(\Gamma^\delta, i , a_i)$.
	%If $\delta < F_i^{a_i}(\mu_{-i})$
	 %then action $a_i$ cannot be a best response of player $i$ to the distribution $\mu_{-i}$.
	%However, since $\mu_{-i} \in C(\Gamma^\delta, i , a_i)$, we have $\delta \geq F_i^{a_i}(\mu_{-i})$.
	%We have $\mu_{-i} \in C(\Gamma^\delta, i, a_i)$ for all $\delta \geq F_i^{a_i}(\mu_{-i})$.
	%Thus $\delta \geq F_i^{a_i}(\mu_{-i})$.
	If 
	%$\delta > F_i^{a_i}(\mu_{-i})$ 
	$G_i^{a_i}(\mu_{-i}, \delta) < 0$
	then, by the continuity of the function 
	%$F_i^{a_i}$, 
	$G_i^{a_i}$,
	we will have a neighborhood around the point $\mu_{-i}$ that belongs to $C(\Gamma^\delta, i, a_i)$. 
	Since the domain $\Delta(A_{-i})$ itself does not have any isolated points, it prevents $\mu_{-i}$ from being an isolated point of $C(\Gamma^\delta, i , a_i)$.
	Thus, the fact that $\mu_{-i}$ is an isolated point of $C(\Gamma^\delta, i , a_i)$ implies that 
	%$\delta = F_i^{a_i}(\mu_{-i})$.
	$G_i^{a_i}(\mu_{-i}, \delta) = 0$.
	If $\mu_{-i}$ is not a strict local minimum of 
	%$F_i^{a_i}$, 
	$G_i^{a_i}(\cdot, \delta),$
	then there exists a sequence of points $(\mu_{-i}^t)_{t \geq 1}$ converging to $\mu_{-i}$ such that 
	%$F_i^{a_i}(\mu_{-i}^t) \leq \delta$ 
	$G_i^{a_i}(\mu_{-i}, \delta) \leq 0$,
	for all $t \geq 1$. 
	Then the sequence ($\mu_{-i}^t)_{t \geq 1}$ belongs to the set $C(\Gamma^\delta, i , a_i)$, contradicting the fact that $\mu_{-i}$ is an isolated point in the set $C(\Gamma^\delta, i , a_i)$. We have shown that if $\mu_{-i}$ is an isolated point in 
	the set $C(\Gamma^\delta, i , a_i)$, this implies that 
	%$\delta = F_i^{a_i}(\mu_{-i})$ 
	$G_i^{a_i}(\mu_{-i}, \delta) = 0$
	and that $\mu_{-i}$ is a strict local minimum of 
	%the function $F_i^{a_i}$.
	$G_i^{a_i}(\tilde \mu_{-i}, \delta)$ as a function of $\tilde \mu_{-i} \in \Delta(A_{-i})$.
	%\black

	% To see the other direction of the iff statement above, note that $\delta = F_i^{a_i}(\mu_{-i})$ implies $\mu_{-i} \in C(\Gamma^\delta, i, a_i)$. 
	% The fact that $\mu_{-i}$ is a strict local minima of the function $F_i^{a_i}$ implies that there exists a neighborhood of $\mu_{-i}$ in $\Delta(A_{-i})$ such that $F_i^{a_i}(\tilde \mu_{-i}) > \delta$ for all $\tilde \mu_{-i}$ in this neighborhood. 
	% This implies that the point $\mu_{-i}$ is an isolated point in the set $C(\Gamma^\delta, i , a_i)$. 
	% This completes the proof of the above iff statement.

	To complete the proof of the proposition, 
	it is enough to show that the set of all 
	%$\delta \in \bbR$, 
	$\delta \in I$
	%for which there exists a strict local minima $\mu_{-i}$ of $F_i^{a_i}$ such that $F_i^{a_i}(\mu_{-i}) = \delta$, 
	for which 
	%there exists a
	%\red
	there exists
	%\black
	$\mu_{-i} \in \Delta(A_{-i})$
	such that $G_i^{a_i}(\mu_{-i}, \delta) = 0$ and $\mu_{-i}$ is a strict local 
	%minima 
	%\red
	minimum
	%\black
	of $G_i^{a_i}(\cdot, \delta)$
	is a null set with respect to 
	%the one dimensional 
	%\red
	one dimensional 
	%\black
	Lebesgue measure. 
	Let $T \subset \Delta(A_{-i}) \times I$ be the set of all pairs $(\mu_{-i}, \delta)$ such that $G_i^{a_i}(\mu_{-i}, \delta) = 0$ and $\mu_{-i}$ is a strict local 
	%minima 
	%\red
	minimum
	%\black
	of $G_i^{a_i}(\cdot, \delta)$.
	We 
	%first 
	%\red
	will
	%\black
	prove that the set $T$ is countable.
	%We first prove that the set of all pairs $(\mu_{-i}, \delta)$ 
	%that are a strict local minima of the function $F_i^{a_i}$ is countable.
	%such that $G_i^{a_i}(\mu_{-i}, \delta) = 0$ and $\mu_{-i}$ is a strict local minima of $G_i^{a_i}(\cdot, \delta)$ is countable.
	To see this, for each 
	%strict local minima $\mu_{-i}'$ there exists a pair of vectors with rational elements, $(p^{\mu_{-i}'}(a_{-i}))_{a_{-i} \in A_{-i}}$ and $(q^{\mu_{-i}'}(a_{-i}))_{a_{-i} \in A_{-i}}$, such that
	pair $(\mu_{-i}, \delta) \in T$,
	%such that $G_i^{a_i}(\mu_{-i}, \delta) = 0$ and $\mu_{-i}$ is a strict local minima of $G_i^{a_i}(\cdot, \delta)$, 
	there exists a pair of vectors with rational elements, %$(p^{\mu_{-i}, \delta}(a_{-i}))_{a_{-i} \in A_{-i}}$ and $(q^{\mu_{-i}, \delta}(a_{-i}))_{a_{-i} \in A_{-i}}$, 
	%\red
	$(p_{\mu_{-i}, \delta}(a_{-i}))_{a_{-i} \in A_{-i}}$ and $(q_{\mu_{-i}, \delta}(a_{-i}))_{a_{-i} \in A_{-i}}$, 
	%\black
	such that
	%\[
	%	p^{\mu_{-i}, \delta}(a_{-i}) < \mu_{-i}(a_{-i}) < q^{\mu_{-i}, \delta}(a_{-i}), \text{ for all } a_{-i} \in A_{-i},
	%\] 
	%\red
	\[
		p_{\mu_{-i}, \delta}(a_{-i}) < \mu_{-i}(a_{-i}) < q_{\mu_{-i}, \delta}(a_{-i}), \text{ for all } a_{-i} \in A_{-i},
	\] 
	%\black
	 and for any $\tilde \mu_{-i} \in \Delta(A_{-i})$ such that
	 %\[
	%	p^{\mu_{-i}, \delta}(a_{-i}) < \tilde \mu_{-i}(a_{-i}) < q^{\mu_{-i}, \delta}(a_{-i}), \text{ for all } a_{-i} \in A_{-i},
	%\] 
	%\red
	\[
		p_{\mu_{-i}, \delta}(a_{-i}) < \tilde \mu_{-i}(a_{-i}) < q_{\mu_{-i}, \delta}(a_{-i}), \text{ for all } a_{-i} \in A_{-i},
	\] 
	%\black
	we have 
	%$F_i^{a_i}(\mu_{-i}) > F_i^{a_i}(\mu_{-i}')$.
	$G_i^{a_i}(\tilde \mu_{-i}, \delta) > G_i^{a_i}(\mu_{-i}, \delta)$.
	%We note that there cannot exist two distinct strict local minima $\mu_{-i}' \neq \mu_{-i}''$ such that $p^{\mu_{-i}'}(a_{-i}) = p^{\mu_{-i}''}(a_{-i})$ and $q^{\mu_{-i}'}(a_{-i}) = q^{\mu_{-i}''}(a_{-i})$ for all $a_{-i} \in A_{-i}$.
	%Suppone 
	%\red
	Suppose
	%\black
	there are two distinct pairs $(\mu_{-i}', \delta'), (\mu_{-i}'', \delta'') \in T$ 
	such that 
	%$G_i^{a_i}(\mu_{-i}', \delta') = 0$ and $\mu_{-i}'$ is a strict local minima of $G_i^{a_i}(\cdot, \delta)$, and $G_i^{a_i}(\mu_{-i}'', \delta'') = 0$ and $\mu_{-i}''$ is a strict local minima of $G_i^{a_i}(\cdot, \delta'')$, 
	%that have 
	%$p^{\mu_{-i}', \delta'}(a_{-i}) = p^{\mu_{-i}'', \delta''}(a_{-i}) =: p(a_{-i})$ and $q^{\mu_{-i}'}(a_{-i}) = q^{\mu_{-i}''}(a_{-i}) =: q(a_{-i})$ for all $a_{-i} \in A_{-i}$.
	%\red
	$p_{\mu_{-i}', \delta'}(a_{-i}) = p_{\mu_{-i}'', \delta''}(a_{-i}) =: p(a_{-i})$ and $q_{\mu_{-i}'}(a_{-i}) = q_{\mu_{-i}''}(a_{-i}) =: q(a_{-i})$ for all $a_{-i} \in A_{-i}$.
	%\black
	We note that 
	%\red
	in this case we must have
	%\black
	$\delta' \neq \delta''$.
	%from the way the vectors $(p(a_{-i}))_{a_{-i} \in A_{-i}}$ and $(q(a_{-i}))_{a_{-i} \in A_{-i}}$ are defined.
	Let $\delta' < \delta''$ without loss of generality.
	%We have $G_i(\tilde \mu_{-i}, \delta'') \geq 0$ for any $\tilde \mu_{-i} \in \Delta(A_{-i})$ such that
	 %\[
	%	p(a_{-i}) < \tilde \mu_{-i}(a_{-i}) < q(a_{-i}), \text{ for all } a_{-i} \in A_{-i}.
	%\] 
	%\red
	We have $G_i^{a_i}(\mu_{-i}', \delta'') \geq 0$ because
	 \[
		p(a_{-i}) < \mu_{-i}'(a_{-i}) < q(a_{-i}), \text{ for all } a_{-i} \in A_{-i}.
	\] 
	%\black
	From 
	%\red
	equation
	%\black
	\eqref{eq: G_linear}, we have 
	%$$G_i^{a_i}(\tilde \mu_{-i}, \delta') - G_i^{a_i}(\tilde \mu_{-i}, \delta'') = W_i(\tilde \mu_{-i})(\delta'' - \delta') > 0.$$
	%\red
	$$G_i^{a_i}(\mu_{-i}', \delta') - G_i^{a_i}(\mu_{-i}', \delta'') = W_i(\mu_{-i}')(\delta'' - \delta') > 0.$$
	%\black
	This implies $G_i^{a_i}(\mu_{-i}', \delta') > 0$ contradicting $(\mu_{-i}',\delta') \in T$.
	Thus we have an injective map from the set $T$
	%of all strict local minima of the function $F_i^{a_i}$ 
	to the set $\bbQ^{2|A_{-i}|}$.
	Hence the set $T$
	%of all strict local minima 
	is countable.
	Thus the set of all 
	%$\delta \in \bbR$, 
	$\delta \in I$,
	for which there exists a 
	%strict local minima 
	$\mu_{-i}$ 
	such that $(\mu_{-i}, \delta) \in T$
	%of $F_i^{a_i}$ such that $F_i^{a_i}(\mu_{-i}) = \delta$ 
	is also countable and hence a null set. 
	This completes the proof.
\end{proof}

%!TEX root = ../main.tex

\section{Proof of Lemma~\ref{lem: nu_close}}
\label{app: lemma-nu-close-proof}

\begin{proof}[Proof of Lemma~\ref{lem: nu_close}]
	We will first use the fact that player $2$ is randomizing over her actions I and III, independently at all the steps $(t^l_{odd})_{l \geq 1}$, and show that for sufficiently large $l$, $v_{odd}^l(\text{0},\text{I})$ and $v_{odd}^l(\text{0},\text{III})$ are almost equal with high probability. 
	To see this, observe that the sequence $\l(M_l, l \geq 1 \r)$ is a martingale, 
	where 
	\[
		M_l := l \times (\nu^l_{odd}(\text{0,I}) - \nu^l_{odd}(\text{0,III})).
	\]
	%Indeed, let $M_1^l := (M_1,\dots,M_l)$, and we have
	%\red 
	Indeed, letting $M_1^l := (M_1,\dots,M_l)$, we have %\black
	\begin{align*}
		&\bbE [M_{l+1} - M_{l} | M_1^l] = \bbE[M_{l+1} - M_l |  M_1^l, a^{t^{l+1}_{odd}}_1 = 0]P(a^{t^{l+1}_{odd}}_1 = 0 | M_1^l )\\
		& \hspace{4cm}+ \bbE[M_{l+1} - M_l |  M_1^l, a^{t^{l+1}_{odd}}_1 = 1]P(a^{t^{l+1}_{odd}}_1 = 1 | M_1^l)\\
		&= \bbE[\1 \{a^{t^{l+1}_{odd}} = (\text{0},\text{I})\} - \1 \{a^{t^{l+1}_{odd}} = (\text{0},\text{III})\}|M_1^l, a^{t^{l+1}_{odd}}_1 = 0]P(a^{t^{l+1}_{odd}}_1 = 0| M_1^l) + 0\\
		&= \frac{1}{2} - \frac{1}{2} = 0,
	\end{align*}
	where the last line follows from the fact that player $2$ plays $\sigma_{odd}$ at each of the steps $t^l_{odd}$ independently.
	Thus, for example by the Azuma-Hoeffding inequality, for any $\delta > 0$, there exists an integer $l^{(1)}_{\delta} > 1$, such that for all $l \geq l^{(1)}_{\delta}$,
	equation~\eqref{eq: mart_err_1} holds.
	Similarly, there exist integers $l^{(2)}_{\delta}, l^{(3)}_{\delta}, l^{(4)}_{\delta} > 1$, such that for all $l \geq l^{(2)}_{\delta}$,
	equation~\eqref{eq: mart_err_2} holds, 
	for all $l \geq l^{(3)}_{\delta}$,
	equation~\eqref{eq: mart_err_3} holds,
	and for all $l \geq l^{(4)}_{\delta}$,
    equation~\eqref{eq: mart_err_4}
    holds.
    This taking 
    \[
    	l_\delta := \max\{l^{(1)}_{\delta}, l^{(2)}_{\delta}, l^{(3)}_{\delta}, l^{(4)}_{\delta} \},
    \]
    we get the required result.
	\end{proof}
%\input{tex/proof_example}

%\clearpage
% Bibliography:
\bibliographystyle{abbrvnat} % Citation style
\bibliography{Bib_Database}

\end{document}